\begin{document}

\title{Beurling-Type Density Criteria for \\[-0.3cm] System Identification\\[0.5cm]}

\author{\IEEEauthorblockN{Verner Vla\v{c}i\'c\IEEEauthorrefmark{1}, C\'eline Aubel\IEEEauthorrefmark{2}, and Helmut B\"olcskei\IEEEauthorrefmark{1} \\[0.2cm]}
\IEEEauthorblockA{\IEEEauthorrefmark{1}ETH Zurich, Switzerland\\
\IEEEauthorrefmark{2}Swiss National Bank, Zurich, Switzerland\\
Email: \IEEEauthorrefmark{1}vlacicv@mins.ee.ethz.ch, \IEEEauthorrefmark{2}celine.aubel@gmail.com, \IEEEauthorrefmark{1}hboelcskei@ethz.ch}
}

\maketitle

\newcommand{\norm}[1]{\left\|#1\right\|}
\newcommand{\bargmann}{\textfrak{B}\hspace{1pt}}
\newcommand{\BFSp}[2]{\m{F}^{#1}(#2)}
\newcommand{\linOp}{\m{H}}

\newcommand{\sep}{\mrm{sep}}
\DeclarePairedDelimiter{\ceil}{\lceil}{\rceil}

\newcommand{\MSp}[2]{M^{#1}(#2)}
\newcommand{\moduPair}[2]{M^{#1}(\R)\times M^{#2}(\R)}
\newcommand{\modu}{\m{M}}
\newcommand{\tra}{\m{T}}
\newcommand{\wtd}{\widetilde}
\newcommand{\coleqq}{\vcentcolon=}
\newcommand{\m}{\mathcal}
\newcommand{\mrm}{\mathrm}
\newcommand{\msc}{\mathscr}
\newcommand{\bve}{\bm{\varepsilon}}
\renewcommand{\Re}{\mrm{Re}}
\renewcommand{\Im}{\mrm{Im}}

\theoremstyle{definition}
\newtheorem{definition}{Definition}

\begin{abstract}
	This paper addresses the problem of identifying a linear time-varying (LTV) system characterized by a (possibly infinite) discrete set of delay-Doppler shifts without a lattice (or other ``geometry-discretizing'') constraint on the support set. Concretely, we show that a class of such LTV systems is identifiable whenever the upper uniform Beurling density of the delay-Doppler support sets, measured ``uniformly over the class'', is strictly less than $1/2$. The proof of this result reveals an interesting relation between LTV system identification and interpolation in the Bargmann-Fock space. Moreover, we show that this density condition is also necessary for classes of systems invariant under time-frequency shifts and closed under a natural topology on the support sets. 
We furthermore show that identifiability guarantees robust recovery of the delay-Doppler support set, as well as the weights of the individual delay-Doppler shifts, both in the sense of asymptotically vanishing reconstruction error for vanishing measurement error.
\end{abstract}


\section{Introduction}

Identification of deterministic linear time-varying (LTV) systems has been a topic of long-standing interest, dating back to the seminal work by Kailath~\cite{Kailath1962} and Bello~\cite{Bello1969}, and has seen significant renewed interest during the past decade~\cite{Kozek2005,Heckel2013,Bajwa2011,Heckel2014}.
This general problem occurs in many fields of engineering and science. Concrete examples include system identification in control theory and practice, the measurement of dispersive communication channels, and radar imaging. The formal problem statement is as follows. We wish to identify the LTV system $\linOp$ from its response
\begin{equation}
	(\linOp x )(t) \coleqq \int_{\R^2} S_{\linOp}(\tau,\nu) \, x (t - \tau) \, e^{2\pi i\nu t} \,\mrm{d}\tau\mrm{d}\nu,\quad \forall t \in \R,
	\label{eq:ltvsys}
\end{equation}
to a probing signal $x(t)$, with $S_{\linOp}(\tau,\nu)$ denoting the spreading function associated with $\linOp$. Specifically, we consider $\linOp$ to be identifiable if there exists an $x$ such that knowledge of $\linOp x$ allows us to determine $S_{\linOp}$. The representation theorem \cite[Thm. 14.3.5]{Groechenig2000} states that a large class of continuous linear operators can be represented as in \eqref{eq:ltvsys}. 

Kailath \cite{Kailath1962} showed that an LTV system with spreading function supported on a rectangle centered at  the origin of the $(\tau,\nu)$-plane is identifiable if the area of the rectangle is at most $1$. This result was later extended by Bello to arbitrarily fragmented spreading function support regions with the support area measured collectively over all supporting pieces \cite{Bello1969}. Necessity of the Kailath-Bello condition was established in \cite{Kozek2005,Pfander2006} through elegant functional-analytic arguments. 
However, all these results require the support region of $S_{\linOp}(\tau,\nu)$ to be known prior to identification, a condition that is very restrictive and often impossible to realize in practice. More recently, it was demonstrated in \cite{Heckel2013} that identifiability in the more general case considered by Bello \cite{Bello1969} is possible without prior knowledge  of the spreading function support region, again as long as its area (measured collectively over all supporting pieces) is upper-bounded by $1$. This is surprising as it says that there is no price to be paid for not knowing the spreading function's support region in advance. The underlying insight has 
strong conceptual ties to the theory of spectrum-blind sampling of sparse multi-band signals 
\cite{Feng1996,Feng1997,Lu2008,Mishali2009}.

The situation is fundamentally different when the spreading function is discrete according to
\begin{equation}
	(\linOp x )(t) \coleqq \sum_{m\in\N} \alpha_{m} \, x (t - \tau_m) \, e^{2\pi i\nu_m t},\quad \forall t \in \R,
	\label{eq: radio communication channel}
\end{equation}
where $(\tau_m, \nu_m)\in\R^2$ are delay-Doppler shift parameters and $\alpha_m$ are the corresponding complex weights, for $m\in\N$. Here, the (discrete) spreading function can be supported on unbounded subsets of the $(\tau,\nu)$-plane with the identifiability condition on the support area of the spreading function replaced by a density condition on the support set $\supp(\linOp)\coleqq \{(\tau_m, \nu_m):m\in\N\}$. Specifically, for $\linOp$ supported on rectangular lattices according to $\supp(\linOp)=a^{-1}\Z \times b^{-1}\Z$, Kozek and Pfander established that $\linOp$ is identifiable if and only if $ab \leq 1$ \cite{Kozek2005}.
In \cite{Grip2013} a necessary condition for identifiability of a set of Hilbert-Schmidt operators defined analogously to \eqref{eq: radio communication channel} is given; this condition is expressed in terms of the Beurling density of the support set, but the time-frequency pairs $(\tau_m,\nu_m)$ are assumed to be confined to a lattice. 
Now, in practice the discrete spreading function will not be supported on a lattice as the parameters $\tau_m,\nu_m$ correspond to time delays and frequency shifts induced, e.g. in wireless communication, by the propagation environment. It is hence of interest to understand the limits on identifiability in the absence of ``geometry-discretizing" assumptions---such as a lattice constraint---on $\supp(\linOp)$. Resolving this problem is the aim of the present paper.

\subsection{Fundamental limits on identifiability}

The purpose of this paper is twofold. First, we establish fundamental limits on the stable identifiability of $\linOp$ in \eqref{eq: radio communication channel} in terms of $\supp(\linOp)$ and $\{\alpha_{m}\}_{m\in\N}$. Our approach is based on the following insight. Defining the discrete complex measure $\mu \coleqq \sum_{m\in\N} \alpha_{m}\, \delta_{\tau_m, \nu_m}$ on $\R^2$, where $\delta_{\tau_m,\nu_m}$ denotes the Dirac point measure with mass at $(\tau_m,\nu_m)$, the input-output relation \eqref{eq: radio communication channel} can be formally rewritten as
\begin{equation}\label{eq:measure-int-proto}
	\quad (\linOp_\mu x )(t) = \int_{\R^2} x (t - \tau)e^{2\pi i\nu t}\, \mrm{d}{\mu}(\tau, \nu),\quad t\in\R,
\end{equation}
where we use throughout $\linOp_\mu$ instead of $\linOp$ for concreteness.
Identifying the system $\linOp_\mu$ thus amounts to reconstructing the discrete measure $\mu$ from $\linOp_\mu x $. More specifically, we wish to find necessary and sufficient conditions on classes $\msc{H}$ of measures guaranteeing stable identifiability bounds of the form
\begin{equation}\label{eq:ident-proto}
d_{\text{r}}(\mu,\mu')\leq d_{\text{m}}(\linOp_{\mu} x , \linOp_{\mu'} x )
, \quad \text{for all }\mu,\mu'\in\msc{H},
\end{equation}
for appropriate reconstruction and measurement metrics $d_{\text{r}}$ and $d_{\text{m}}$, where $\mu$ is the ground truth measure to be recovered and $\mu'$ is the estimated measure. The class $\msc{H}$ can be thought of as modelling the prior information available about the measure $\mu$ facilitating its identification by restricting the set of potential estimated measures $\mu'$. In particular, the smaller the class $\msc{H}$, the ``easier'' it should be to satisfy \eqref{eq:ident-proto}. In addition to the class $\msc{H}$ of measures itself, the existence of a bound of the form \eqref{eq:ident-proto} depends on the choice of the probing signal $x$, so we will later speak of \emph{identifiability by $x$}.

This formulation reveals an interesting connection to the super-resolution problem as studied by Donoho~\cite{Donoho1991}, where the goal is to recover a discrete complex measure on $\R$, i.e., a weighted Dirac train, from low-pass measurements. The problem at hand, albeit formally similar, differs in several important aspects. First, we want to identify a measure $\mu$ on $\R^2$, i.e., a measure on a \emph{two-dimensional} set, from observations in \emph{one} parameter, namely $(\linOp_\mu x )(t)$, $t\in\R$. Next, the low-pass observations in~\cite{Donoho1991} are replaced by short-time Fourier transform-type observations, where the probing signal $x $ appears as the window function. While super-resolution from STFT-measurements was considered in \cite{Aubel2017}, the underlying measure to be identified in \cite{Aubel2017} is, as in \cite{Donoho1991}, on $\R$. 
Finally,  \cite{Donoho1991}  assumes that the support set of the measure under consideration is confined to an a priori fixed lattice. While such a strong structural assumption allows for the reconstruction metric $d_{\text{r}}$ to take a simple and intuitive form, it unfortunately bars taking into account the geometric properties of the support sets considered. By contrast, the general definition of stable identifiability (see Definition \ref{defn: identifiability}) analogous to \cite{Donoho1991} will pave the way for a theory of support recovery without a lattice assumption, as discussed in the next subsection.

These differences make for very different technical challenges. Nevertheless, we can follow the spirit of Donoho's work~\cite{Donoho1991}, who established necessary and sufficient conditions for stable identifiability in the classical super-resolution problem. Donoho's conditions are expressed in terms of the uniform Beurling density of the measure's (one-dimensional) support set and are derived using density theorems for interpolation in the Bernstein and Paley-Wiener spaces~\cite{Beurling1989-2} and for the balayage of Fourier-Stieltjes transforms~\cite{Beurling1989-1}.
We will, likewise, establish a sufficient condition guaranteeing stable identifiability for classes of measures whose supports have density less than 1/2 ``uniformly over the class $\msc{H}$'' (formally introduced in Definition \ref{def:UBcd}). In addition, we show that this is also a necessary condition for classes of measures invariant under time-frequency shifts and closed under a natural topology on the support sets. We will see below that these requirements are not very restrictive as we present several examples of identifiable and non-identifiable classes of measures.
The proofs of these results are based on the density theorem for interpolation in the Bargmann-Fock space~\cite{Seip1992, Seip1992-1, Seip1992-2, Seip1992-3}, as well as several results about Riesz sequences from \cite{Groechenig2015}.

\subsection{Robust recovery of the delay-Doppler support set}

The second goal of the paper is to address the implications of the identifiability condition on the recovery of the discrete measure $\mu$. Concretely, suppose that we want to recover a fixed measure  $\mu \coleqq \sum_{m\in\N} \alpha_{m}\, \delta_{\tau_m, \, \nu_m}$ from a known class of measures $\msc{H}$ assumed to be stably identifiable (in the sense of \eqref{eq:ident-proto}) with respect to a probing signal $x$, and let $\{\mu_n\}_{n\in\N}\subset \msc{H}$ be a sequence of ``estimated candidate measures'' $\mu_n \coleqq \sum_{m\in\N} \alpha_{m}^{(n)}\, \delta_{\tau_m^{(n)},\, \nu_m^{(n)}}$ for the recovery of $\mu$. We will show that, under a mild regularity condition on $x$, the stable identifiability condition \eqref{eq:ident-proto} on $\msc{H}$ guarantees that
\begin{equation}\label{eq:proto-robust-intro}
\linOp_{\mu_n} x \to \linOp_{\mu} x \quad \implies \quad \supp(\linOp_{\mu_n})\to  \supp(\linOp_\mu) \quad \text{and}\quad \{\alpha_{m}^{(n)}\}_{m\in\N}\to  \{\alpha_{m}\}_{m\in\N},
\end{equation}
as $n\to\infty$, where the topologies in which these limits take place will be specified in due course. In words, this result says that the better the measurements $\linOp_{\mu_n}$ match the true measurement $\linOp_{\mu}$, the closer the estimated measures $\mu_n$ are to the ground truth $\mu$. This, in particular, shows that ``measurement matching'' is sufficient for recovery within stably identifiable classes $\msc{H}$, i.e., any algorithm that generates a sequence of measures $\{\mu_n\}_{n\in\N}\subset \msc{H}$ satisfying $\linOp_{\mu_n} x \to \linOp_{\mu} x$ will succeed in recovering $\mu\in \msc{H}$. Crucially, we do not assume that the support sets $ \supp(\linOp_{\mu})$ and $ \supp(\linOp_{\mu_n})$, for $n\in\N$, are confined to a lattice (or any other a priori fixed discrete set). To the best of our knowledge, this is the first known LTV system identification result on the robust recovery of the discrete support set of the measure, instead of its weights only.

\textit{Notation.}
We write $B_R(a)$ for the closed ball in $\C$ of radius $R$ centered at $a$, and denote its boundary by $\partial B_R(a)$.  For a set $S\subset \C$, we let $\mathds{1}_S:\C\to \R$ be the indicator function of $S$, taking on the value $1$ on $S$ and $0$ elsewhere. 
We will identify $\C$ with $\R^2$ whenever appropriate and convenient.

We say that a set $\Lambda\subset \C$ is discrete if, for all $\lambda \in \Lambda$, one can find a $\delta > 0$ such that $|\lambda - \lambda'| > \delta$, for all $\lambda' \in \Lambda \!\setminus \{\lambda\}$.
Following the terminology employed in \cite[\S 2.2]{Groechenig2015}, we say that a set $\Lambda\subset\C$ is \textit{relatively separated} if
\begin{equation*}
\mrm{rel}(\Lambda)\coleqq\sup{\{\#(\Lambda\cap B_1(x)):x\in\C\}}<\infty.
\end{equation*}
Further, we say that $\Lambda$ is \textit{separated} (usually referred to as \textit{uniformly discrete} in the literature), if
\begin{equation*}
\sep(\Lambda)\coleqq\inf\{|\lambda - \lambda'| \colon \lambda, \lambda' \in \Lambda, \lambda \neq \lambda'\} > 0.
\end{equation*}
Finally, for two separated sets $\Lambda_1,\Lambda_2\subset \R^2$, we define their mutual separation according to
\begin{equation}\label{2setsMinSep}
\mrm{ms}(\Lambda_1,\Lambda_2)\coleqq\inf_{\substack{\lambda_1\in\Lambda_1,\lambda_2\in\Lambda_2\\\lambda_1\neq \lambda_2}}{| \lambda_1-\lambda_2 |}.
\end{equation}
Note that points that are elements of both $\Lambda_1$ and $\Lambda_2$ are excluded from consideration in the expression for mutual separation.

For a Banach space $\m{B}$, we write  $\norm{\cdot}_{\m{B}}$, $\m{B}^*$, and  $\innerProd{\cdot}{\cdot}_{\m{B}\times\m{B}^*}$ to denote the norm, the topological dual of $\m{B}$, and the dual pairing on $\m{B}$, respectively. 
Throughout the paper we use $p$ and $q$ to denote conjugate indices in $[1,\infty]$ such that $1/p+1/q=1$. We write $\msc{M}^p$ for the vector space of all complex Radon measures on $\C$ of the form $\mu=\sum_{\lambda\in\Lambda}\alpha_\lambda \delta_\lambda$, where $\Lambda$ is a relatively separated discrete subset of $\R^2$, $\{\alpha_\lambda\}_{\lambda\in\Lambda}$ is a sequence in $\C$, and the norm 
\begin{equation*}
\|\mu\|_p\coleqq \begin{cases}
\left(\sum_{\lambda\in\Lambda}\abs{\alpha_\lambda}^p\right)^{1/p}, &\quad\text{if }p\in[1,\infty)\\
\sup_{\lambda\in\Lambda}\abs{\alpha_\lambda}, &\quad\text{if }p=\infty\\
\end{cases}
\end{equation*}
 is finite. For such measures we define $\supp{(\mu)}\coleqq\{\lambda\in \Lambda:\alpha_\lambda\neq 0 \}$. Furthermore, for $s>0$, we let $\msc{M}_s^p=\{\mu\in\msc{M}^p:\sep(\supp(\mu))\geq s\}$.

For a complex number $\lambda=\tau+i\nu$ (or the corresponding point $(\tau,\nu)\in \R^2$), we write $(\m{M}_\nu x )(t) \coleqq e^{2\pi i\nu t}x (t)$ for the modulation operator, $(\m{T}_\tau x )(t) \coleqq x (t - \tau)$ for the translation operator, and $\pi(\lambda)=\modu_\nu\tra_\tau$ for the combined time-frequency shift operator. Recall that, for a nonzero Schwartz test function $\varphi\in\schwartzSpace{\R}$, the short-time Fourier transform (STFT) with respect to the window function $\varphi$ is the map $\m{V}_\varphi$ taking Schwartz distributions on $\R$ to complex-valued functions on $\C$ according to
\begin{equation*}
 (\m{V}_\varphi x)(\lambda)=\langle x, \pi(\lambda) \varphi \rangle_{\m{S}'(\R)\times \m{S}(\R)},\quad \text{for } x\in \m{S}'(\R), \lambda\in \C,
\end{equation*}
where $\m{S}'(\R)$ denotes the set of tempered distributions on $\R$.
 We take $\varphi(t)= 2^{\frac{1}{4}}e^{-\pi t^2}$ to be the $L^2$-normalized gaussian and, following  \cite{Groechenig2000}, we write
\begin{equation*}
M^p_m(\R)=\left\{x\in \m{S}'(\R) : \| x\|_{M^p_m(\R)}  \coleqq \left( \int_{\C} | (\m{V}_\varphi x )(\lambda)|^p m(\lambda)^p  \mathrm{d} \lambda \right)^{1/p}<\infty \right\},
\end{equation*}
for the weighted modulation space on $\R$ of index $p$ and weight function $m:\C\to \R_{\geq 0}$. When $m\equiv 1$, we write $M^p(\R)$ for the unweighted modulation space.
We remark that $\varphi$ has the convenient property of being its own Fourier transform, i.e., $\widehat{\varphi}=\varphi$. According to \cite[Thm. 11.3.5, Thm. 11.3.6]{Groechenig2000}, $M^p(\R)$ is a Banach space, and, for  $p\in[1,\infty)$, its dual space can be identified with $M^q(\R)$ via the dual pairing
\begin{equation}\label{MSDualPairing}
\langle f,g\rangle_{M^p(\R)\times M^q(\R)}=\langle \m{V}_\varphi f,\m{V}_\varphi g\rangle_{L^p(\C)\times L^q(\C)},\quad\text{for }f\in M^p(\R), g\in M^q(\R).
\end{equation}

Finally, for real-valued functions $f$ and $g$ of several variables $p_1,\dots, p_n$ (which may be real or complex numbers, or even functions), and a non-negative integer $m\leq n$, we write $f\lesssim_{\, p_1,\dots,p_m} g$ if there exists a non-negative function $C=C(p_1,\dots, p_m)$ such that $f\leq C\,g$, as well as $f \asymp_{\, p_1,\dots,p_m} g$ if $f \lesssim_{\, p_1,\dots,p_m} g$ and $g \lesssim_{\, p_1,\dots,p_m} f$. We use the notation $f\lesssim g$ only if $C$ is a universal constant, i.e., if it is independent of all of the $p_1,\dots, p_n$.

\section{Contributions}

\subsection{Operators and identifiability}
In order to formalize our definition of identifiability \eqref{eq:ident-proto}, we first need to make sense of the integral \eqref{eq:measure-int-proto}. Concretely, we consider only probing signals $x$ in the modulation space $\MSp{1}{\R} $ (also referred to in the literature as $\mathcal{S}_0$, the Feichtinger algebra) and, for a measure $\mu\in \msc{M}^p$, we interpret  \eqref{eq:measure-int-proto} as a linear operator $\linOp_\mu \colon \MSp{1}{\R} \rightarrow \MSp{p}{\R}$ given by
\begin{equation*}
\begin{aligned}
 \linOp_\mu x  &= \int_{\R^2} x (\cdot - \tau)e^{2\pi i\nu \,\cdot }\,\mrm{d}{\mu}(\tau, \nu)\coleqq \sum_{\lambda \in\mrm{supp}(\mu)}\mu(\{\lambda\})\,\pi(\lambda) x.
\end{aligned}
\end{equation*}
The convergence of this sum in the Banach space $\MSp{p}{\R}$ is guaranteed by the following proposition whose proof can be found in the appendix. 

\begin{prop}\label{wellDefProp}
Let $\Lambda$ be a relatively separated subset of $\C$, and let $p\in[1,\infty)$. Then
\begin{enumerate}[(i)]
\item $\linOp_{\Lambda}:\ell^p(\Lambda)\times M^1(\R)\to M^p(\R)$ given by
\begin{equation*}
 \linOp_\Lambda(\alpha,x )\coleqq  \sum_{\lambda \in\Lambda } \alpha_\lambda \pi(\lambda) x,\quad 	\text{for all }\alpha\in \ell^p(\Lambda), \; x\in M^1(\R) ,
\end{equation*}
 is a well-defined continuous linear operator, in the sense of the sum converging unconditionally in the norm of $M^p(\R)$. Moreover, this operator is bounded according to
\begin{equation*}
\|\linOp_{\Lambda} (\alpha,x)\|_{M^p(\R)}\lesssim \mrm{rel}(\Lambda)\, \|\alpha\|_{\ell^p} \|x\|_{M^1(\R)},
\end{equation*}
for all $\alpha\in \ell^p(\Lambda)$, $x\in M^1(\R)$.
\item For a fixed $x\in M^1(\R)$, the adjoint operator $\big(\linOp_{\Lambda}(\cdot,x)\big)^*: M^q(\R)\to \ell^q$ of the map $\ell^p\ni \alpha\mapsto \linOp_{\Lambda}(\alpha,x)$ is given by
\begin{equation}\label{HmuDual-def}
\big(\linOp_{\Lambda}(\cdot,x)\big)^*(y)=\{\langle y,\pi(\lambda)x\rangle_{M^q(\R)\times M^p(\R)}\}_{\lambda\in\Lambda},\quad \text{for }y\in M^q(\R).
\end{equation}
\end{enumerate}
Next, for a measure $\mu=\sum_{\lambda\in\Lambda}\alpha_\lambda\delta_\lambda\in\msc{M}^p$, define $ \linOp_\mu:M^1(\R)\to M^p(\R)$ by $ \linOp_\mu (x )=  \linOp_{\supp(\mu)} (\alpha,x )$. Then,
\begin{enumerate}[label=(\roman)]
\item[(iii)] for every  $\mu\in\msc{M}^p$,
\begin{equation}\label{eq:wellDefProp-0}
\|\mu\|_{\ell^{\infty}} \lesssim \|\linOp_{\mu} \|_{M^1(\R)\to M^p(\R)}.
\end{equation}
\end{enumerate}
\end{prop}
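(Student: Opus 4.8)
The plan is to prove the equivalent pointwise estimate $\abs{\alpha_{\lambda_0}}\lesssim\|\linOp_\mu\|_{M^1(\R)\to M^p(\R)}$ for every $\lambda_0\in\supp(\mu)$, where $\mu=\sum_{\lambda\in\Lambda}\alpha_\lambda\delta_\lambda\in\msc{M}^p$ and $\Lambda$ is relatively separated; taking the supremum over $\lambda_0\in\supp(\mu)$ then gives \eqref{eq:wellDefProp-0}. Fix $\lambda_0=\tau_{\lambda_0}+i\nu_{\lambda_0}\in\supp(\mu)$ and let $\varphi$ be the Gaussian window. The idea is to read off $\alpha_{\lambda_0}$ from $\linOp_\mu$ by means of a matched-filter functional parametrized by a time-frequency shift: for $a=\tau_a+i\nu_a\in\C$, set
\[
K(a)\coleqq\big\langle\linOp_\mu\big(\pi(a)\varphi\big),\ \pi(a)\pi(\lambda_0)\varphi\big\rangle_{M^p(\R)\times M^q(\R)}.
\]
Since $\pi(\cdot)$ acts isometrically on every modulation space and $\varphi\in M^1(\R)$, the dual-pairing inequality together with Proposition~\ref{wellDefProp}(i) yields, for all $a\in\C$,
\[
\abs{K(a)}\ \leq\ \big\|\linOp_\mu\big(\pi(a)\varphi\big)\big\|_{M^p(\R)}\,\big\|\pi(a)\pi(\lambda_0)\varphi\big\|_{M^q(\R)}\ \leq\ \|\linOp_\mu\|_{M^1(\R)\to M^p(\R)}\,\|\varphi\|_{M^1(\R)}\,\|\varphi\|_{M^q(\R)},
\]
and $\|\varphi\|_{M^1(\R)}$ and $\|\varphi\|_{M^q(\R)}$ are bounded by a universal constant (uniformly in $q$), because $\abs{(\m{V}_\varphi\varphi)(z)}=e^{-\pi\abs{z}^2/2}$. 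Hence $\sup_{a\in\C}\abs{K(a)}\lesssim\|\linOp_\mu\|_{M^1(\R)\to M^p(\R)}$, and it remains to bound $\abs{\alpha_{\lambda_0}}$ from above by $\sup_{a}\abs{K(a)}$.

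To this end I would expand $K(a)$ explicitly. Using that $\linOp_\mu(\pi(a)\varphi)=\sum_{\lambda\in\Lambda}\alpha_\lambda\,\pi(\lambda)\pi(a)\varphi$ converges in $M^p(\R)$ (Proposition~\ref{wellDefProp}(i)), the continuity of the duality pairing and the fact that it restricts to the $L^2(\R)$-inner product on the Schwartz functions at hand, the unitarity of $\pi(a)$ on $L^2(\R)$, and the commutation relation $\pi(a)^*\pi(\lambda)\pi(a)=e^{2\pi i(\nu_\lambda\tau_a-\nu_a\tau_\lambda)}\pi(\lambda)$, one gets
\[
K(a)=\sum_{\lambda\in\Lambda}\beta_\lambda\,e^{2\pi i\sigma(a,\lambda)},\qquad\beta_\lambda\coleqq\alpha_\lambda\,\langle\pi(\lambda)\varphi,\pi(\lambda_0)\varphi\rangle_{L^2(\R)},\qquad\sigma(a,\lambda)\coleqq\nu_\lambda\tau_a-\nu_a\tau_\lambda.
\]
Three features of this expansion drive the argument: (a) $\abs{\beta_\lambda}=\abs{\alpha_\lambda}\,\abs{\langle\pi(\lambda)\varphi,\pi(\lambda_0)\varphi\rangle}=\abs{\alpha_\lambda}\,e^{-\frac{\pi}{2}\abs{\lambda-\lambda_0}^2}$, so in particular $\abs{\beta_{\lambda_0}}=\abs{\alpha_{\lambda_0}}$; (b) $\sum_{\lambda\in\Lambda}\abs{\beta_\lambda}\leq\|\mu\|_{\ell^\infty}\sum_{\lambda\in\Lambda}e^{-\frac{\pi}{2}\abs{\lambda-\lambda_0}^2}<\infty$ because $\Lambda$ is relatively separated; and (c) the linear functionals $a\mapsto\sigma(a,\lambda)$ on $\R^2\cong\C$ are pairwise distinct (as $\sigma$ is a nondegenerate symplectic form), so the characters $e^{2\pi i\sigma(\cdot,\lambda)}$, $\lambda\in\Lambda$, are pairwise distinct.

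By (b) and (c), $K$ is a uniformly (Bohr) almost periodic function on $\R^2$ whose Bohr--Fourier coefficients are exactly the numbers $\beta_\lambda$. Parseval's identity for almost periodic functions then gives
\[
\sup_{a\in\C}\abs{K(a)}^2\ \geq\ \lim_{T\to\infty}\frac{1}{(2T)^2}\int_{[-T,T]^2}\abs{K(a)}^2\,\mrm{d}a\ =\ \sum_{\lambda\in\Lambda}\abs{\beta_\lambda}^2\ \geq\ \abs{\beta_{\lambda_0}}^2\ =\ \abs{\alpha_{\lambda_0}}^2,
\]
which, combined with the bound from the first paragraph, yields $\abs{\alpha_{\lambda_0}}\lesssim\|\linOp_\mu\|_{M^1(\R)\to M^p(\R)}$ and hence \eqref{eq:wellDefProp-0}.

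The point I expect to require the most care is that the constant must be independent of $\mu$, and in particular that there is \emph{no separation assumption} on $\supp(\mu)$: when atoms of $\mu$ cluster, no single time-frequency shift of $\varphi$ can separate $\alpha_{\lambda_0}$ from the contributions of nearby atoms, which is precisely why the probing vector is swept over the whole plane and the problem is recast in terms of an almost periodic function---Parseval for almost periodic functions requires the frequencies merely to be distinct, never separated, so the cluster geometry is harmless. The remaining, routine, points are to justify interchanging the summation with the duality pairing, to verify that the $M^p(\R)$--$M^q(\R)$ pairing restricts to the $L^2(\R)$-inner product on the Schwartz functions involved, and to record the value of the Gaussian short-time Fourier transform and the resulting bounds on $\|\varphi\|_{M^1(\R)}$ and $\|\varphi\|_{M^q(\R)}$; all of these follow from facts already recalled in the paper.
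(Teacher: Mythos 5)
Your argument is correct, and at its core it is the same matched-filter idea as the paper's: pair $\linOp_\mu$ applied to a swept time-frequency shift of the Gaussian against the correspondingly shifted analyzing window, observe that the result is a trigonometric sum in the sweep parameter whose coefficient at the frequency indexed by $\lambda_0$ is $\alpha_{\lambda_0}$ times the known nonzero factor $(\m{V}_\varphi\varphi)(0)=1$ (resp. $\langle\pi(\lambda_0)\varphi,\pi(\lambda_0)\varphi\rangle=1$), and then isolate that coefficient by averaging over the sweep. Where you differ is in the extraction device and the handling of the infinite sum: the paper first truncates to a finite subset $\wtd\Lambda\ni(x,\omega)$, integrates against the summability kernel $\epsilon\, e^{-\pi\epsilon(a^2+b^2)}$, lets $\epsilon\to 0$ so that only the $(x,\omega)$-term survives, and finally removes the truncation using $\|\linOp_{\wtd\Lambda}(\alpha,\cdot)\|\leq\|\linOp_\Lambda(\alpha,\cdot)\|+C\|\{\alpha_\lambda\}_{\lambda\in\Lambda\setminus\wtd\Lambda}\|_{\ell^p}$; you instead keep the full sum, note that relative separation and $\|\mu\|_{\ell^\infty}\leq\|\mu\|_{\ell^p}<\infty$ make the coefficients absolutely summable, so $K$ is Bohr almost periodic with pairwise distinct frequencies $(\nu_\lambda,-\tau_\lambda)$, and invoke Bessel/Parseval for the Bohr mean. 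Both routes yield the same universal constant $\|\varphi\|_{M^1(\R)}\|\varphi\|_{M^q(\R)}$, and your Parseval step even gives the slightly stronger $\sup_a|K(a)|^2\geq\sum_\lambda|\beta_\lambda|^2$; the paper's Gaussian-mollifier limit is more elementary in that it needs only pointwise convergence of finitely many terms rather than the orthogonality of distinct characters under the mean, but the gain is cosmetic. The routine verifications you flag (continuity of the pairing in the first argument, reduction of the $M^p$--$M^q$ pairing to the $L^2$ inner product on Schwartz functions, the commutation phase) are exactly the ones the paper also relies on, and like the paper you take items (i) and (ii) as given inputs to (iii), which is consistent with the paper's citation of Gr\"ochenig--Ortega-Cerd\`a--Romero for those parts.
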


\noindent As a consequence of item (iii) of Proposition \ref{wellDefProp}, we have
\begin{equation*}
\|\mu_1-\mu_2\|_{\ell^{\infty}} \lesssim \|\linOp_{\mu_1-\mu_2} \|_{M^1(\R)\to M^p(\R)}=\|\linOp_{\mu_1}- \linOp_{\mu_2} \|_{M^1(\R)\to M^p(\R)},
\end{equation*}
and therefore $\mu_1=\mu_2$ whenever $\linOp_{\mu_1}=\linOp_{\mu_2}$. In other words, the measures in $\msc{M}^p$ are completely characterized by their action on $\MSp{1}{\R}$, and thus there is a one-to-one correspondence between the measures in $\msc{M}^p$ and the operators $\{\linOp_\mu:\mu\in\msc{M}^p\}$. Note that this property is necessary for there to be any hope of recovering a measure $\mu$ from a measurement $ \linOp_\mu x $ with respect to a \emph{single} probing signal $x$. 

We are now ready to state our definition of stable identifiability:

\begin{definition}[Stable identifiability]
	\label{identifiability}  \label{defn: identifiability}
	Let $p \in [1, \infty)$. We say that a class of measures $\msc{H}\subset\msc{M}^p$ is stably identifiable by  a probing signal $x\in M^1(\R)$ if there exist constants $C_1,C_2>0$ (that may depend on $p$ and $x$) such that
	\begin{equation}
		C_1\left(\mrm{ms}(\Lambda_1,\Lambda_2)\wedge 1\right)\norm{\mu_1-\mu_2}_{p} \leq \norm{\linOp_{\mu_1}x -\linOp_{\mu_2}x }_{\MSp{p}{\R}} \leq C_2\norm{\mu_1-\mu_2}_{p},
		\label{eq: identifiability set of operators}
	\end{equation}
	 for all $\mu_1,\mu_2\in\msc{H}$, where $\Lambda_j\coleqq \supp(\mu_j)$, $j\in\{1,2\}$.
\end{definition}

\noindent The significance of the term $\mrm{ms}(\Lambda_1,\Lambda_2)$ in \eqref{eq: identifiability set of operators} becomes apparent when we consider classes $\msc{H}$ that contain measures with potentially arbitrarily close supports. For a concrete example, consider the class $\msc{H}=\{\mu\in\msc{M}^p,\# (\mrm{supp}(\mu))=1\}$ of single time-frequency shifts. This class contains the measures $\mu=\delta_{(0,0)}$ and $\mu_\epsilon=\delta_{(0,\epsilon)}$, for all $\epsilon>0$. Let $x\in M^1(\R)$ be a probing signal satisfying the time-localization constraint $t\, x(t)\in L^2(\mathrm{d}t)$, but otherwise arbitrary. Then $\mrm{ms}(\supp(\mu),\supp(\mu_\epsilon))=\epsilon$, and 
\begin{equation*}
\epsilon^{-1}(\linOp_{\mu}x-\linOp_{\mu_\epsilon}x)=\epsilon^{-1}(1-e^{2\pi i \epsilon\,\cdot })\, x\to -(2\pi i\,\cdot)x
\end{equation*}
in $L^2(\R)=M^2(\R)$ as $\epsilon \to 0$, and hence 
\begin{equation}\label{eq:Mms-ratio}
\|\linOp_{\mu}x-\linOp_{\mu_\epsilon}x\|_{M^2(\R)}/ \mrm{ms}(\supp(\mu),\supp(\mu_\epsilon)) \asymp \|t \, x(t)\|_{L^2(\mathrm{d}t )} >0,
\end{equation}
 as $\epsilon\to 0$. On the other hand, $\|\mu-\mu_{\epsilon}\|_2=\sqrt{2}$ is bounded away from $0$ as $\epsilon\to 0$. Thus, if the class $\msc{H}$ is to be identifiable, the lower bound in \eqref{eq:ident-proto} needs to decay at least linearly with $\mrm{ms}(\supp(\mu_1),\supp(\mu_2))$, for $\mu_1,\mu_2\in \msc{H}$. 
 In contrast to \eqref{eq:Mms-ratio}, one could have another class $\msc{K}$ containing measures $\mu'$ and $\mu_\epsilon'$, for $\epsilon>0$, so that $\|\linOp_{\mu'}x-\linOp_{\mu'_\epsilon}x\|_{M^2(\R)}/\mrm{ms}(\supp(\mu'),\supp(\mu'_\epsilon))\to 0$, as $\epsilon\to 0$, i.e.,  $\|\linOp_{\mu'}x-\linOp_{\mu'_\epsilon}x\|_{M^2(\R)}$ decays superlinearly with $\mrm{ms}(\supp(\mu'),\supp(\mu'_\epsilon))$. Classes such as $\msc{K}$ are not covered by our theory, and we hence exclude them from our definition of stable identifiability. In summary, Definition \ref{defn: identifiability} says that we consider a class of measures to be stably identifiable if the decay of $\|\linOp_{\mu_1}x-\linOp_{\mu_2}x\|_{M^2(\R)}$ as $\mrm{ms}(\supp(\mu_1),\supp(\mu_2))\to 0$ is not faster than linear. This property will turn out to be crucial later when we discuss robust recovery (specifically, in the proofs of Theorems \ref{w*w*thm} and \ref{RobustConvThm}).

 \subsection{A necessary and sufficient condition for identifiability}
 
 As already mentioned in the introduction, our necessary and sufficient condition for identifiability will be expressed in terms of the density of support sets measured uniformly over the class of measures under consideration. Concretely, we have the following definition:
 \begin{definition}[Upper Beurling class density]\label{def:UBcd}
	Let $\m{L}$ be a collection of relatively separated sets in $\R^2$, and, for $R>0$, define $[0,R]^2=[0,R]\times[0,R]\subset  \R^2$. For $\Lambda\in\m{L}$, let $n^+(\Lambda, [0,R]^2)$ be the largest number of points of $\Lambda$ contained in any translate of $[0,R]^2$ in the plane. We then define the upper Beurling class density of $\m{L}$ according to
	\begin{equation*}
		\m{D}^+(\m{L})= \limsup_{R \to \infty} \sup_{\Lambda\in \m{L}} \frac{n^+(\Lambda, [0,R]^2)}{R^2}.
	\end{equation*}
\end{definition}
 
  We are now ready to state the first main result of the paper.

 \begin{thm}[A sufficient condition for identifiability]\label{CritThm}
Let $p\in(1,\infty)$ and $s>0$, let $\msc{H}\subset\msc{M}_s^p$ be a class of measures, and set $\m{L}=\{\supp(\mu) : \mu \in \msc{H}\}$. Suppose that $ \m{D}^+(\m{L})<\frac{1}{2}\,$. Then the class $\msc{H}$ is identifiable by the standard gaussian $\varphi(t)=2^{\frac{1}{4}}e^{-\pi t^2}$, $\varphi\in \MSp{1}{\R}$.
\end{thm}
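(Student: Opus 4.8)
The plan is to reduce the lower bound in \eqref{eq: identifiability set of operators} (the upper bound being an immediate consequence of Proposition \ref{wellDefProp}(i), since $\mathrm{rel}(\Lambda)$ is uniformly bounded for sets of density $<1/2$ with separation $\geq s$) to a statement about the STFT of the probing signal, and then to recognize that statement as an interpolation property in the Bargmann-Fock space. Concretely, write $\mu=\mu_1-\mu_2=\sum_{\lambda\in\Lambda}\alpha_\lambda\delta_\lambda$ with $\Lambda=\Lambda_1\cup\Lambda_2$; we want $C_1(\mathrm{ms}(\Lambda_1,\Lambda_2)\wedge 1)\|\mu\|_p\leq\|\linOp_\mu\varphi\|_{M^p(\R)}$. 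Using the dual pairing \eqref{MSDualPairing} and \eqref{HmuDual-def}, the norm $\|\linOp_\mu\varphi\|_{M^p(\R)}$ controls $\|\{\langle \linOp_\mu\varphi,\pi(\lambda)\varphi\rangle\}_\lambda\|$ against test sequences, and $\langle\pi(\mu)\varphi,\pi(\lambda)\varphi\rangle=\m{V}_\varphi(\pi(\mu)\varphi)(\lambda)$; since $\m{V}_\varphi\varphi(\lambda)=e^{-\pi|\lambda|^2/2}e^{i\pi\,\mathrm{something}}$, the Gram matrix of the system $\{\pi(\lambda)\varphi\}_{\lambda\in\Lambda}$ is, up to unimodular factors, the reproducing-kernel (Bargmann-Fock) matrix $e^{-\pi|\lambda-\lambda'|^2/2}$ evaluated on $\Lambda$. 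So the desired lower bound is equivalent to $\{\pi(\lambda)\varphi\}_{\lambda\in\Lambda}$ being a Riesz sequence in $M^2$ (equivalently, $L^2(\R)$), with a Riesz lower bound that degrades at most linearly in $\mathrm{ms}(\Lambda_1,\Lambda_2)$.

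The key input is the Seip--Wallstén density characterization of interpolating sequences for the Bargmann--Fock space: a separated set $\Lambda$ is an interpolating (equivalently, its associated Gaussian system is a Riesz sequence) precisely when its upper uniform Beurling density is $<1/(2\pi)$ in their normalization, which in the normalization used here (Gaussian $2^{1/4}e^{-\pi t^2}$, lattice density measured against $[0,R]^2$) becomes exactly $D^+<1/2$. Thus I would first invoke this: since $\m{D}^+(\m{L})<1/2$, for every $\Lambda\in\m{L}$ with $\mathrm{sep}(\Lambda)\geq s$ the system $\{\pi(\lambda)\varphi\}_{\lambda\in\Lambda}$ is a Riesz sequence in $L^2(\R)$. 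The crucial point, and the main obstacle, is \emph{uniformity}: I need the Riesz lower bound to be bounded below by a constant depending only on $p$, $s$, and $D^+(\m{L})$ — not on the individual $\Lambda$ — and I need to understand precisely how it behaves as two support sets coalesce (i.e.\ as $\mathrm{ms}(\Lambda_1,\Lambda_2)\to 0$, so that $\Lambda=\Lambda_1\cup\Lambda_2$ loses separation). This is where the results on Riesz sequences from \cite{Groechenig2015} enter: a uniform Beurling density bound plus uniform separation yields uniform frame/Riesz bounds for Gabor systems with Gaussian window, and there should be a quantitative statement of how the Riesz lower bound of $\Lambda_1\cup\Lambda_2$ scales with the mutual separation (heuristically like $\mathrm{ms}(\Lambda_1,\Lambda_2)^2$ when two points merge, but the pairing with $\|\mu\|_p$ and the linear — rather than quadratic — dependence in Definition \ref{defn: identifiability} must be tracked carefully; the single-shift example in the text shows linear is the right rate).

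Putting it together: from the uniform Riesz-sequence property I get, for $p=2$, $\|\linOp_\mu\varphi\|_{M^2}^2\gtrsim A(\Lambda)\|\mu\|_2^2$ with $A(\Lambda)\gtrsim_{s,D^+} (\mathrm{ms}(\Lambda_1,\Lambda_2)\wedge 1)^2$, which is the claimed bound at $p=2$. To pass to general $p\in(1,\infty)$ I would upgrade the Riesz-sequence statement to an $\ell^p$-Riesz-sequence (Banach frame) statement for $\{\pi(\lambda)\varphi\}$ acting $\ell^p\to M^p$: this follows from the $M^p$ theory in \cite{Groechenig2000} together with the fact that a separated set of finite upper density gives rise to a Gabor Banach frame simultaneously for all $M^p$, $1\le p\le\infty$ (Gröchenig's "localization" / Wiener-amalgam machinery), with bounds independent of $p$ up to universal constants. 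Thus the same geometric constant $A(\Lambda)$ controls the $\ell^p\to M^p$ lower bound, and we conclude \eqref{eq: identifiability set of operators} with $C_1\asymp_{s,D^+}1$. I expect the write-up to spend most of its effort on (a) translating Seip--Wallstén's normalization into the $[0,R]^2$/Gaussian convention used here, and (b) extracting from \cite{Groechenig2015} the uniform-over-the-class, mutual-separation-quantified lower Riesz bound — step (b) being the genuine technical heart.
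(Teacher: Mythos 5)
Your high-level architecture matches the paper's: reduce the lower bound to an interpolation problem for the Bargmann--Fock space via duality, and locate the density threshold in the Seip--Wall\-st\'en theory. But the proposal has a genuine gap precisely at the point you yourself flag as ``the technical heart'': you defer both the uniformity of the lower bound over the class and its explicit $(\mrm{ms}(\Lambda_1,\Lambda_2)\wedge 1)$ dependence to \cite{Groechenig2015}, and no such quantitative statement exists there (the paper uses \cite{Groechenig2015} only for the necessity direction and for soft auxiliary lemmas). Worse, the qualitative Seip--Wall\-st\'en theorem cannot even be invoked as a black box for the relevant set: the lower bound must hold for $\Lambda=\Lambda_1\cup\Lambda_2$, whose separation degrades to $\mrm{ms}(\Lambda_1,\Lambda_2)$ and can be arbitrarily small within the class, so $\Lambda$ is in general not a separated set in the sense required by that theorem. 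Relatedly, your normalization is off: the interpolation/Riesz threshold for the Gaussian Gabor system in this paper's convention is $D^+(\Lambda)<1$, not $<1/2$; the $1/2$ in Theorem \ref{CritThm} arises because the \emph{union} of two supports each of class density $<1/2$ has density $<1$. Conflating these obscures why the theorem's constant is what it is.

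The paper closes exactly these gaps constructively rather than by citation. It reformulates the lower bound as controlled surjectivity of the adjoint (Lemma \ref{ShortTheoryLemma}), translates via the Bargmann isometry into statement (P2), and then builds the interpolating function explicitly as $F=\sum_\lambda \beta_\lambda e^{\pi\lambda z-\pi|\lambda|^2/2}g_{-\lambda}$, where the basis functions $g_{-\lambda}$ come from a modified Weierstrass $\sigma$-function (Lemma \ref{massiveLemma}). The whole point of that lemma is the growth bound $|\wtd{g}_\Lambda(z)|e^{-\frac{\pi}{2}\gamma^{-2}|z|^2}\leq C(\rho\wedge 1)^{-1}e^{c|z|\log|z|}$ with constants depending only on $s,\theta,\gamma,R$ (hence uniform over the class, via Lemmas \ref{lem:Rayleigh-number} and \ref{lem:unif-lat-dist}) and with the mutual separation entering only through the single factor $(\rho\wedge 1)^{-1}$ --- this is possible because $s$-separation of each $\Lambda_j$ forces at most two points of the union into any ball of radius $s/2$. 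The $\m{F}^q$ norm of $F$ is then controlled by splitting the sum over $\Lambda_1$ and $\Lambda_2$ separately (each \emph{is} $s$-separated) and applying the amalgam estimate of Lemma \ref{IrregConvLem}. Without supplying an argument of this kind --- or a citable quantitative substitute --- your proposal establishes only the qualitative statement that each sufficiently separated union is an interpolating set, not the stable identifiability bound \eqref{eq: identifiability set of operators}.
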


\noindent Crucially, the support sets $\supp(\mu)\in\m{L}$ in Theorem \ref{CritThm} are not assumed to be subsets of a lattice or any other a priori fixed subset of $\R^2$. In particular, one allows $\msc{H}$ to contain measures $\mu_1$ and $\mu_2$ with arbitrarily small $\mrm{ms}(\supp(\mu_1),\supp(\mu_2))$.
 
Note that a subclass $\msc{H}'$ of an identifiable class $\msc{H}$ is trivially identifiable, and accordingly the upper Beurling class density of the supports of measures in $\msc{H}'$  does not exceed that of the support sets corresponding to $\msc{H}$. The sufficiency result in Theorem \ref{CritThm} is therefore ``compatible'' with the inclusion relation on classes.
By contrast, the ``non-identifiability'' of a class $\msc{H}\subset\msc{M}^p_s$ (i.e., the nonexistence of a probing signal in $M^1(\R)$ by which the class would be identifiable) can only be meaningfully assessed in terms of the Beurling density $\m{D}^+(\{\supp(\mu):\mu\in\msc{H}\})$ for sufficiently rich classes of measures. For example, one can construct arbitrarily large finite subsets $\msc{H}$ of $\msc{M}^p_s$ with arbitrarily large $\m{D}^+(\{\supp(\mu):\mu\in\msc{H}\})$, and yet $\msc{H}$ will be identifiable (e.g. by the standard gaussian, using the property that distinct time-frequency shifts of a gaussian are linearly independent). A converse statement to Theorem \ref{CritThm} can hence be meaningfully formulated only for classes $\msc{H}$ that are ``sufficiently rich'' in a suitable sense.
 In the present paper we will do this for classes of measures that are subspaces of $\msc{M}^p$, with support sets that are closed under limits with respect to weak convergence and 
invariant under time-frequency shifts.

Before providing the precise definition of these classes of measures, we need to introduce the notion of weak convergence for subsets of $\C$. Concretely, we say that a sequence of separated subsets $\{\Lambda_n\}_{n\in\N}$ {converges weakly} to $\Lambda\subset\C$, and write $\Lambda_n\xrightarrow[]{w} \Lambda$,  if
\begin{equation}\label{eq:set-weak-conv}
\mrm{dist}\big((\Lambda_n\cap B_R(z))\cup \partial B_R(z), (\Lambda\cap B_R(z))\cup \partial B_R(z)\big)\to 0\quad\text{as }n\to\infty,
\end{equation}
for all $R>0$ and $z\in\C$, where $\mrm{dist}$ denotes the Hausdorff metric on the subsets of $\C$. We are now ready to formalize the type of classes covered by our necessity result.

\begin{definition}[Regular  $\msc{H}(\m{L})^p$ classes]\label{defn: Classes of closed supports}\hfil\\
Let $p\in(1,\infty)$ and $s>0$, and let $\m{L}$ be a collection of separated subsets of $\C$. 
\begin{enumerate}[(i)]
\item We say that $\m{L}$ is \emph{closed and shift-invariant} (CSI) if it is closed under limits with respect to weak convergence, and $\Lambda+z\coleqq \{\lambda+z :\lambda\in\Lambda\}\in\m{L}$, for all $\Lambda\in\m{L}$ and $z\in\C$.

\item  We define a class of measures $\msc{H}(\m{L})^p\subset \msc{M}^p $ according to
\begin{equation*}
\msc{H}(\m{L})^p=\Bigg\{\sum_{\lambda\in\Lambda} \alpha_\lambda \delta_\lambda: \Lambda\in\m{L}, \alpha\in\ell^p(\Lambda) \Bigg\}.
\end{equation*}
We call $\msc{H}(\m{L})^p$ \emph{$s$-regular} if $\m{L}$ is CSI and $\sep(\Lambda)\geq s$, for all $\Lambda\in \m{L}$.
\end{enumerate}
\end{definition}

\noindent Even though the conditions in Definition \ref{defn: Classes of closed supports} are rather technical, they are not overly restrictive, as evidenced by several examples of $s$-regular classes provided in  \S\ref{subsec:examples}. We are now ready to state our second main result, which is a necessary condition for identifiability of $s$-regular classes and as such constitutes a partial converse to Theorem \ref{CritThm}.

\begin{thm}[A necessary condition for identifiability of $s$-regular classes]\label{NeceCritThm}
Let $p\in(1,\infty)$ and $s>0$, and let $\msc{H}(\m{L})^p\subset\msc{M}^p$ be an $s$-regular class. If there exists an $x\in \MSp{1}{\R}$ such that $\msc{H}(\m{L})^p$ is identifiable by $x$, then $\m{D}^+(\m{L})< \frac{1}{2}$.
\end{thm}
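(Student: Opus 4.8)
\textit{Proof plan.} The plan is to convert identifiability into a statement about Riesz sequences of time-frequency shifts, and then to feed this into the Beurling-type density theorem for such sequences, which for windows in $M^1(\R)$ rests ultimately on Seip's interpolation theorem for the Bargmann–Fock space. So assume $\msc{H}(\m{L})^p$ is identifiable by some $x\in M^1(\R)$, with constants $C_1,C_2$ as in \eqref{eq: identifiability set of operators}. First I would fix $\Lambda\in\m{L}$ and apply \eqref{eq: identifiability set of operators} to two measures supported on $\Lambda$: since distinct points of $\Lambda$ are at least $s$ apart, the mutual-separation factor is $\geq s\wedge 1$, and since $\mu\mapsto\linOp_\mu x$ is linear one obtains
\begin{equation*}
C_1(s\wedge 1)\,\|\alpha\|_{\ell^p}\;\leq\;\Big\|\textstyle\sum_{\lambda\in\Lambda}\alpha_\lambda\,\pi(\lambda)x\Big\|_{M^p(\R)}\;\leq\;C_2\,\|\alpha\|_{\ell^p},\qquad\alpha\in\ell^p(\Lambda).
\end{equation*}
Thus $\{\pi(\lambda)x:\lambda\in\Lambda\}$ is a $p$-Riesz sequence in $M^p(\R)$ with bounds \emph{independent of} $\Lambda\in\m{L}$; since $x\in M^1(\R)$, spectral invariance of the associated matrix algebra makes this equivalent to the corresponding Riesz-sequence property in $L^2(\R)$.

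Next I would exploit shift-invariance to ``double'' the index sets. Fix $\Lambda\in\m{L}$ and pick $z\in\C$ with $0<|z|\leq\tfrac12(s\wedge 1)$ and $z\notin\Lambda-\Lambda$ (possible since $\Lambda-\Lambda$ is countable); then $\Lambda$ and $\Lambda+z$ are disjoint, $\Lambda+z\in\m{L}$, and $\Gamma\coleqq\Lambda\cup(\Lambda+z)$ is separated with $\sep(\Gamma)=|z|$. Applying \eqref{eq: identifiability set of operators} to $\mu_1=\sum_{\lambda\in\Lambda}\beta_\lambda\delta_\lambda$ and $\mu_2=-\sum_{\gamma\in\Lambda+z}\beta_\gamma\delta_\gamma$ (so that $\mu_1-\mu_2=\sum_{\gamma\in\Gamma}\beta_\gamma\delta_\gamma$, $\|\mu_1-\mu_2\|_p=\|\beta\|_{\ell^p}$, $\linOp_{\mu_1}x-\linOp_{\mu_2}x=\sum_{\gamma\in\Gamma}\beta_\gamma\pi(\gamma)x$, and $\mrm{ms}(\supp(\mu_1),\supp(\mu_2))\geq\mrm{ms}(\Lambda,\Lambda+z)=|z|$) shows $\{\pi(\gamma)x:\gamma\in\Gamma\}$ is itself a Riesz sequence, with lower bound $\gtrsim|z|$ --- still positive. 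This is exactly where the $\mrm{ms}$-term in Definition~\ref{defn: identifiability} is needed. Now invoke the density theorem for Riesz sequences of time-frequency shifts with $M^1$ window --- a consequence of the Riesz-sequence results of \cite{Groechenig2015} and Seip's Bargmann–Fock interpolation theorem \cite{Seip1992,Seip1992-1}, and \emph{strict} precisely because $x\in M^1(\R)$ --- to get $D^+(\Lambda\cup(\Lambda+z))<1$, where $D^+$ denotes the upper Beurling density of a single set (defined via $n^+$). An elementary point count in translates of $[0,R]^2$ gives $n^+(\Lambda\cup(\Lambda+z),[0,R]^2)\geq 2\,n^+(\Lambda,[0,R]^2)-C|z|R/s^2$, hence $D^+(\Lambda\cup(\Lambda+z))\geq 2D^+(\Lambda)$; therefore $D^+(\Lambda)<\tfrac12$ for \emph{every} $\Lambda\in\m{L}$. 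Finally, to upgrade this to $\m{D}^+(\m{L})<\tfrac12$, I would argue by contradiction: if $\m{D}^+(\m{L})\geq\tfrac12$, pick $R_n\uparrow\infty$ and $\Lambda_n\in\m{L}$ with $n^+(\Lambda_n,[0,R_n]^2)/R_n^2\to\m{D}^+(\m{L})$, translate the $\Lambda_n$ into position (shift-invariance), and run a pigeonhole/normal-families argument to extract a weakly convergent subsequence with limit $\Lambda_\infty$; since $\m{L}$ is closed under weak limits, $\Lambda_\infty\in\m{L}$, and a standard lower-semicontinuity estimate yields $D^-(\Lambda_\infty)\geq\m{D}^+(\m{L})\geq\tfrac12$, hence $D^+(\Lambda_\infty)\geq\tfrac12$, contradicting the pointwise bound just obtained.

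I expect the main obstacle to be the \emph{strict} Beurling density theorem for Riesz sequences $\{\pi(\gamma)x:\gamma\in\Gamma\}$ with an \emph{arbitrary} window $x\in M^1(\R)$ (not only the Gaussian): this is a Balian–Low–type statement, it is where the hypothesis $x\in M^1(\R)$ is truly indispensable, and it is the point at which the announced connection to interpolation in the Bargmann–Fock space is exploited. The remaining steps --- reduction to Riesz sequences, the doubling trick, and the $\limsup$-over-$R$ versus $\sup$-over-$\m{L}$ bookkeeping via weak limits --- are comparatively routine, but the last one genuinely uses both the uniform separation $\sep(\Lambda)\geq s$ and the CSI property of $\m{L}$.
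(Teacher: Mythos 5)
Your proposal follows essentially the same route as the paper: the ``doubling'' of a support set by a small admissible shift to exploit the $\mrm{ms}$-term in Definition~\ref{defn: identifiability}, the passage through Theorem~\ref{GroStab} and the non-uniform Balian--Low Theorem~\ref{GroBalianLow} of \cite{Groechenig2015} to get $D^+(\Lambda\cup(\Lambda+z))<1$ and hence $D^+(\Lambda)<\tfrac12$, and a weak-limit/CSI argument to pass between individual densities and the class density. The only place where your sketch is imprecise is the final compactness step: translating $\Lambda_n$ so that its dense $R_n$-square sits at the origin and taking a weak limit does \emph{not} by itself give $D^-(\Lambda_\infty)\geq\m{D}^+(\m{L})$ (nor is the lower density the right quantity) --- the points of $\Lambda_n$ inside the large square could be concentrated away from any fixed neighborhood of the chosen anchor, so the weak limit could be sparse there. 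The paper repairs exactly this with a nested-squares pigeonhole (Lemma~\ref{RecursiveSquareLemma}) plus a diagonalization, which locates a common anchor at which the density $\geq\tfrac12$ persists at every dyadic scale along a subsequence; your ``pigeonhole/normal-families argument'' gestures at this but the lower-semicontinuity claim as stated would not go through without it.
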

 
 \subsection{Identifiability and robust recovery}
 
 In this subsection we formalize the claim \eqref{eq:proto-robust-intro} made in the introduction under the assumption that $x\in M^1_m(\R)$, with the weight function $m(z)=1+|z|$, for $z\in\C$. Informally, this assumption imposes faster-than-linear decay on $x$ in both the time and frequency domains. Note that the $L^2$-normalized gaussian $\varphi$ is in $M^1_m(\R)$, as its STFT decays exponentially (by virtue of $\varphi\in\schwartzSpace{\R}$ and \cite[Thm. 11.2.5]{Groechenig2000}).
 
We begin by defining the weak-* topology on $\msc{M}_s^p$, for $p\in (1,\infty)$. Concretely, for $\mu\in\msc{M}_s^p$ and a sequence $\{\mu_n\}_{n\in\N} \subset \msc{M}_s^p$, we say that $\{\mu_n\}_{n\in\N}$ converges to $\mu$ in the weak-* topology of $\msc{M}_s^p$, and write $\mu_{n}\xrightarrow[]{w^*} \mu $, if
\begin{equation}\label{eq:M-w*-conv}
\lim_{n\to\infty} \int_\C \overline{f} \mrm{d}\mu_n = \int_\C \overline{f} \mrm{d}\mu,
\end{equation}
for all continuous $f:\C \to\C$ such that $\lim_{|z|\to\infty}f(z)= 0$ and
\begin{equation*}
\Big\|\sup_{\substack{y\in\C, |y| \leq 1}} |f(z+y)| \Big\|_{L^q(\mrm{d}z)}<\infty.
\end{equation*}
This definition corresponds to convergence in the weak-* topology on the Wiener amalgam space $W(\m{M},L^p)$, which will be defined and  treated  systematically in \S \ref{sec:LatticesWiener}. In order to formalize \eqref{eq:proto-robust-intro}, it will be helpful to first state the following weak-* recovery result for $s$-regular classes:

\begin{thm}[Weak-* Recovery Theorem]\label{w*w*thm}
Let $p\in(1,\infty)$ and $s>0$, and let $\msc{H}(\m{L})^p \subset \msc{M}_s^p$ be an $s$-regular class. Assume furthermore that $\msc{H}(\m{L})^p$ is identifiable by a probing signal $x\in M^1_m(\R)$, where $m(z)=1+|z|$. Then
\begin{itemize}
\item[(i)] if $\mu,\wtd{\mu}\in\msc{H}(\m{L})^p$ are such that $\linOp_{\wtd\mu}\,x= \linOp_{\mu}x$, then $\wtd{\mu}=\mu$.
\item[(ii)] Let $\mu\in\msc{H}(\m{L})^p $ and let $\{\mu_n\}_{n\in\N}$ be a sequence in $\msc{H}(\m{L})^p$. Then $\linOp_{\mu_n}x\to \linOp_{\mu}x$ in the weak-* topology of $M^p(\R)$ if and only if $\mu_{n}\xrightarrow[]{w^*} \mu$ in $\msc{M}_s^{p}$.
\end{itemize}
\end{thm}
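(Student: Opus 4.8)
**Proof proposal for Theorem~\ref{w*w*thm} (Weak-* Recovery Theorem).**

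The plan is to reduce both parts of the statement to a single analytic fact: for a fixed probing signal $x\in M^1_m(\R)$, the map $\mu\mapsto \linOp_\mu x$ is, on the $s$-separated class $\msc{H}(\m{L})^p$, a ``weak-*/weak-* homeomorphism onto its range'' provided we also keep track of the support sets. Part (i) is the injectivity statement, which follows almost immediately from stable identifiability: if $\linOp_{\wtd\mu}x=\linOp_\mu x$, then $\linOp_{\mu-\wtd\mu}x=0$, and the lower bound in \eqref{eq: identifiability set of operators} forces $\mrm{ms}(\supp\mu,\supp\wtd\mu)\wedge 1)\,\norm{\mu-\wtd\mu}_p=0$; since $\msc{H}(\m{L})^p\subset\msc{M}_s^p$ and distinct points of $\supp\mu\cup\supp\wtd\mu$ are $\geq s$ apart unless they coincide, the only way the left-hand side can vanish is $\mu=\wtd\mu$ (one has to argue that $\mrm{ms}=0$ together with $s$-separation still implies equality of the two measures — this is where the definition of $\mrm{ms}$ excluding common points is used: $\mrm{ms}(\Lambda_1,\Lambda_2)=0$ with both sets $s$-separated forces $\Lambda_1=\Lambda_2$, and then $\norm{\mu-\wtd\mu}_p=0$ directly).

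For part (ii), the ``only if'' direction is the easier one. Assume $\mu_n\xrightarrow{w^*}\mu$ in $\msc{M}_s^p$. I would first establish that weak-* convergence in $\msc{M}_s^p$ entails boundedness of $\norm{\mu_n}_p$ (uniform boundedness principle on the predual, which is the relevant separable Wiener amalgam space treated in \S\ref{sec:LatticesWiener}), hence by Proposition~\ref{wellDefProp}(i) the operators $\linOp_{\mu_n}x$ are norm-bounded in $M^p(\R)$. To test weak-* convergence $\linOp_{\mu_n}x\to\linOp_\mu x$ in $M^p(\R)$ it then suffices, by density and the uniform bound, to check $\langle \linOp_{\mu_n}x,g\rangle_{M^p\times M^q}\to\langle\linOp_\mu x,g\rangle_{M^p\times M^q}$ for $g$ in a total subset of $M^q(\R)$; by the adjoint formula \eqref{HmuDual-def} this pairing equals $\sum_{\lambda}\mu_n(\{\lambda\})\,\overline{\langle\pi(\lambda)x,g\rangle}$ (up to conjugation conventions), i.e. $\int_\C \overline{F_g}\,\mathrm{d}\mu_n$ where $F_g(\lambda)=\langle g,\pi(\lambda)x\rangle=(\m{V}_x g)(\lambda)$ (an STFT-type function). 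The key point is that $F_g$ is an admissible test function in the definition \eqref{eq:M-w*-conv} of weak-* convergence on $\msc{M}_s^p$: because $x\in M^1_m(\R)$ and $g\in M^q(\R)$, a standard STFT convolution-relation estimate (\cite[Thm.~12.2.1]{Groechenig2000} or the amalgam-space machinery of \S\ref{sec:LatticesWiener}) shows $F_g$ is continuous, vanishes at infinity, and has a local-sup $L^q(\mathrm{d}z)$ bound — precisely the two conditions required. Hence $\int\overline{F_g}\,\mathrm{d}\mu_n\to\int\overline{F_g}\,\mathrm{d}\mu$, which gives the claim.

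The ``if'' direction is where the stable-identifiability hypothesis and the $s$-regularity really bite, and I expect it to be the main obstacle. Assume $\linOp_{\mu_n}x\to\linOp_\mu x$ weak-* in $M^p(\R)$. Weak-* convergent sequences are norm-bounded, so $\sup_n\norm{\linOp_{\mu_n}x}_{M^p}<\infty$, and then Proposition~\ref{wellDefProp}(iii) gives $\sup_n\norm{\mu_n}_{\ell^\infty}<\infty$; upgrading this to a bound on $\norm{\mu_n}_p$ requires a relatively-separated packing argument combined with the $s$-separation, which is the role of the $\msc{M}_s^p$ assumption. Once $\{\mu_n\}$ is bounded in $\msc{M}_s^p$, I would pass to a subsequence along which the \emph{support sets} $\Lambda_n=\supp(\mu_n)$ converge weakly to some $\Lambda$ in the sense of \eqref{eq:set-weak-conv} — this is a compactness statement for uniformly $s$-separated, uniformly relatively separated sets, which is standard (Arzelà–Ascoli / diagonalization over balls $B_R(z)$); since $\m{L}$ is CSI, $\Lambda\in\m{L}$. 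On a further subsequence, the weights on the (now essentially stabilized, by $s$-separation) points converge, producing a limit measure $\nu\in\msc{H}(\m{L})^p$ with $\mu_n\xrightarrow{w^*}\nu$ in $\msc{M}_s^p$. By the already-proved ``only if'' direction, $\linOp_{\mu_n}x\to\linOp_\nu x$ weak-*, so $\linOp_\nu x=\linOp_\mu x$, and part (i) forces $\nu=\mu$. Thus every subsequence has a further subsequence with $\mu_n\xrightarrow{w^*}\mu$; the usual subsequence principle then yields $\mu_n\xrightarrow{w^*}\mu$ for the full sequence. The delicate points — and the places where I would spend the most care — are (a) justifying that the weak-* operator bound transfers to a $\norm{\cdot}_p$-bound on $\mu_n$ using only $s$-separation plus Proposition~\ref{wellDefProp} (the linear lower bound $\mrm{ms}\wedge 1$ in the identifiability definition is exactly what prevents mass from "escaping to coincidence" and is invoked here, as the discussion after Definition~\ref{defn: identifiability} anticipates), and (b) checking that the weak convergence of support sets together with weight convergence is genuinely equivalent to $\msc{M}_s^p$-weak-* convergence for $s$-separated measures — this is likely isolated as a lemma in \S\ref{sec:LatticesWiener} and I would cite it rather than reprove it.
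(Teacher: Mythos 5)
Your proof of part (i) has a genuine gap, and it sits exactly where the real difficulty of the theorem lies. You claim that $\mrm{ms}(\Lambda_1,\Lambda_2)=0$ together with $s$-separation of each individual set forces $\Lambda_1=\Lambda_2$. This is false: take $\Lambda_1=\{(n,0):n\geq 2\}$ and $\Lambda_2=\{(n+\tfrac1n,0):n\geq 2\}$; both are separated (with separation bounded below by a constant), they are distinct, yet $\mrm{ms}(\Lambda_1,\Lambda_2)=\inf_n \tfrac1n=0$ because the infimum in \eqref{2setsMinSep} is not attained. The hypothesis $\msc{H}(\m{L})^p\subset\msc{M}_s^p$ only separates each support individually, not the union of two supports --- indeed the paper stresses that the class may contain measures with arbitrarily small mutual separation, which is the whole reason the factor $\mrm{ms}\wedge 1$ appears in Definition~\ref{defn: identifiability}. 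So from $C_1(\mrm{ms}\wedge 1)\|\mu-\wtd\mu\|_p=0$ you cannot conclude $\mu=\wtd\mu$. The paper's proof of (i) is built precisely to circumvent this: for each $R>0$ it isolates the points of $\supp(\wtd\mu)\setminus\supp(\mu)$ lying outside $B_{R}(0)$ but within $\delta_R$ of $\supp(\mu)$, ``snaps'' them onto $\supp(\mu)$ (with a phase correction) to form $\wtd\mu_R^1$, controls the resulting measurement perturbation by $\delta_R\,\|\wtd\mu\mathds{1}_{\wtd\Lambda_R}\|_p$ via Proposition~\ref{PerturLem}(ii), applies identifiability to the pair $(\mu-\wtd\mu_R^1,\;\wtd\mu_R^2)$ whose mutual separation is now $\geq\delta_R>0$, and divides by $\delta_R$ --- this is where the \emph{linear} decay of the lower bound in $\mrm{ms}$ is indispensable. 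Letting $R\to\infty$ then gives $\|(\mu-\wtd\mu)\mathds{1}_{B_R(0)}\|_p\to 0$. Without some such localization-and-perturbation step, part (i) does not follow, and since your ``if'' direction of (ii) invokes (i), that direction inherits the gap.

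The remaining architecture of your part (ii) matches the paper: measure w*-convergence $\Rightarrow$ operator w*-convergence by testing against $f_y(\lambda)=\langle y,\pi(\lambda)x\rangle\in W(C_0,L^q)$ (the paper's Proposition~\ref{ForConProp}), and the converse by subsequence compactness of the supports (Lemma~\ref{GroSetBA}), CSI-closedness, Banach--Alaoglu, the forward direction, part (i), and the subsequence principle. One correction there: you cannot upgrade the $\ell^\infty$ bound from Proposition~\ref{wellDefProp}(iii) to an $\ell^p$ bound by a packing argument (a measure with $n$ unit masses has $\ell^\infty$ norm $1$ but $\ell^p$ norm $n^{1/p}$). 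The paper instead gets $\|\mu_n\|_p\lesssim\|\linOp_{\mu_n}x\|_{M^p(\R)}$ directly from the identifiability lower bound applied against the zero measure (empty support, so $\mrm{ms}=\infty$ and the factor is $1$), and then \eqref{eq:WnormEquiv-0} to pass to $W(\m{M},L^p)$ for Banach--Alaoglu.
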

\noindent The proof of Theorem \ref{w*w*thm} relies crucially on the fact that the decay of the lower bound in \eqref{eq: identifiability set of operators} as a function of $\mrm{ms}(\supp(\mu_1),\supp(\mu_2))$ is not faster than linear. 

Note that item (i) of Theorem \ref{w*w*thm} guarantees perfect recovery of measures in $\msc{H}(\m{L})^p$ under perfect measurement matching. However, this does not go a long way towards establishing \eqref{eq:proto-robust-intro} as item (ii) of the theorem deals with convergence in weak-* topologies ``only''. To illustrate that a stronger form of convergence is needed, consider the $1/2$-regular class $\msc{H}(\m{L})^2$, where $\m{L}=\{\Lambda\subset \C : \sep(\Lambda)\geq 1/2, \#(\Lambda)\leq 2 \}$. In this class $\delta_{0,0}+\delta_{n,0}\xrightarrow[]{w^*} \delta_{0,0}$ as $n\to\infty$ (where $\delta_{\tau,\nu}$ is again the Dirac point measure with mass at $(\tau,\nu)$), and so, if one were to rely on the weak-* convergence guarantee only, one could argue that $\{\delta_{0,0}+\delta_{n,0}\}_{n\in\N}$ recovers $\delta_{0,0}$. This sequence does, indeed, capture the component $\delta_{0,0}$, but it also features the nonvanishing spurious component  $\delta_{n,0}$. Similarly, on the measurement side of \eqref{eq:proto-robust-intro}, taking $x=\varphi$ as the probing signal would yield $\varphi+\varphi(\,\cdot-n)\to\varphi$ in the weak-* topology of $L^2$, but not in the norm topology. We can thus hope that upgrading from weak-* convergence to norm convergence on the measurement side of \eqref{eq:proto-robust-intro}  might imply a stronger form of convergence of the sequence of candidate measures to the target measure. The following theorem establishes that this is, indeed, the case for $s$-regular classes $\msc{H}(\m{L})^p$. Concretely, convergence of the measurements in norm implies that the candidate measures $\mu_n$ approximate arbitrarily big finite sections of the target measure $\mu$ and do not have any spurious components.

\begin{thm}[Robust Recovery Theorem]\label{RobustConvThm}
 Let $p\in(1,\infty)$ and $s>0$, and let $\msc{H}(\m{L})^p \subset \msc{M}_s^p$ be an $s$-regular class. Assume furthermore that $\msc{H}(\m{L})^p$ is identifiable by a probing signal $x\in M^1_m(\R)$, where $m(z)=1+|z|$, and let $C_1$ and $C_2$ be the corresponding constants such that \eqref{eq: identifiability set of operators} is fulfilled. Fix a $\mu\in\msc{H}(\m{L})^p$, write $\Lambda=\supp(\mu)$, and let $\{\mu_n\}_{n\in\N}$ be a sequence in $\msc{H}(\m{L})^p$ such that $\|\linOp_{\mu_n}x-\linOp_{\mu}x\|_{M^p(\R)}\to 0$ as $n\to\infty$.

Then, for every $\epsilon>0$ and every finite subset $\wtd{\Lambda}$ of $\Lambda$ such that $\|\mu-\mu\mathds{1}_{\wtd{\Lambda}}\|_{p}<\epsilon$, there is an $N\in\N$ so that, for all $n\geq N$, the measures $\mu_n$ take the form 
\begin{equation*}
\mu_n=\sum_{{\lambda}\in\tilde{\Lambda}}\alpha^{(n)}_{\lambda}\delta_{\lambda+\bve_n(\lambda)}+\rho_n,
\end{equation*}
where $|\bve_n(\lambda)|\leq \epsilon$ and $|\alpha^{(n)}_{\lambda}-\alpha_\lambda|\leq \epsilon$, for all $\lambda\in\wtd{\Lambda}$, and $\|\rho_n\|_p\leq\frac{4C_2}{C_1(s\wedge 1)}\,\epsilon\,$.
\end{thm}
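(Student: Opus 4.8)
The plan is to bootstrap from the Weak-$*$ Recovery Theorem (Theorem~\ref{w*w*thm}). Since $\|\linOp_{\mu_n}x-\linOp_{\mu}x\|_{M^p(\R)}\to 0$ implies weak-$*$ convergence of the measurements, item~(ii) of Theorem~\ref{w*w*thm} gives $\mu_n\xrightarrow{w^*}\mu$ in $\msc{M}_s^p$, i.e.\ weak-$*$ convergence in the Wiener amalgam space $W(\m{M},L^p)$. The first step is to extract from this weak-$*$ convergence the geometric consequence that, on any fixed bounded region, the supports $\Lambda_n=\supp(\mu_n)$ must eventually look like $\Lambda=\supp(\mu)$ near the points of the finite set $\wtd\Lambda$: because all sets in $\m{L}$ are $s$-separated, for $n$ large each $\Lambda_n$ has exactly one point within distance $\epsilon$ (shrink $\epsilon$ below $s/2$ first) of each $\lambda\in\wtd\Lambda$, call it $\lambda+\bve_n(\lambda)$, and the corresponding weight $\alpha^{(n)}_{\lambda+\bve_n(\lambda)}$ converges to $\alpha_\lambda$. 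This is essentially a compactness/weak-$*$ testing argument: test $\mu_n-\mu$ against bump functions supported near each $\lambda\in\wtd\Lambda$ and against bumps in regions where $\mu$ has no mass; the $s$-separation prevents two points of $\Lambda_n$ from colliding and forces the "one point, right weight" picture. This yields the decomposition $\mu_n=\sum_{\lambda\in\wtd\Lambda}\alpha^{(n)}_\lambda\delta_{\lambda+\bve_n(\lambda)}+\rho_n$ with the stated control on $\bve_n$ and $\alpha^{(n)}_\lambda-\alpha_\lambda$ for $n$ large.

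The second and decisive step is to bound $\|\rho_n\|_p$ using the \emph{norm} convergence of the measurements together with the stable identifiability inequality~\eqref{eq: identifiability set of operators}. Write $\sigma_n:=\sum_{\lambda\in\wtd\Lambda}\alpha^{(n)}_\lambda\delta_{\lambda+\bve_n(\lambda)}$, so $\rho_n=\mu_n-\sigma_n$. The idea is to compare $\mu_n$ with the measure $\mu\mathds{1}_{\wtd\Lambda}$, or rather with a "shifted companion" of it, via the lower bound in~\eqref{eq: identifiability set of operators}. Applying the lower bound to the pair $\mu_n$ and $\mu\mathds{1}_{\wtd\Lambda}$ (both in $\msc{H}(\m{L})^p$ — here one uses that $\m{L}$ is shift-invariant and CSI, so the relevant finitely-supported measures and their shifts lie in the class) gives
\begin{equation*}
C_1\big(\mrm{ms}(\Lambda_n,\wtd\Lambda)\wedge 1\big)\,\|\mu_n-\mu\mathds{1}_{\wtd\Lambda}\|_p\leq \|\linOp_{\mu_n}x-\linOp_{\mu\mathds{1}_{\wtd\Lambda}}x\|_{M^p(\R)}.
\end{equation*}
The right-hand side is controlled by $\|\linOp_{\mu_n}x-\linOp_\mu x\|_{M^p(\R)}+\|\linOp_{\mu-\mu\mathds{1}_{\wtd\Lambda}}x\|_{M^p(\R)}$; the first term tends to $0$ and the second is at most $C_2\|\mu-\mu\mathds{1}_{\wtd\Lambda}\|_p<C_2\epsilon$ by the upper bound in~\eqref{eq: identifiability set of operators}. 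The subtle point is the factor $\mrm{ms}(\Lambda_n,\wtd\Lambda)\wedge 1$: since the points $\lambda+\bve_n(\lambda)$ of $\Lambda_n$ are close to $\wtd\Lambda$ (not equal, once $\bve_n(\lambda)\neq 0$), this mutual separation could be tiny, which is exactly why a naive application of the lower bound fails. To circumvent this, I would instead compare $\mu_n$ with the measure $\wtd\sigma_n:=\sum_{\lambda\in\wtd\Lambda}\alpha_\lambda\delta_{\lambda+\bve_n(\lambda)}$ (same \emph{locations} as the leading part of $\mu_n$, weights those of $\mu$), so that $\mu_n-\wtd\sigma_n=\rho_n+\sum_{\lambda\in\wtd\Lambda}(\alpha^{(n)}_\lambda-\alpha_\lambda)\delta_{\lambda+\bve_n(\lambda)}$, and the supports of $\mu_n$ and $\wtd\sigma_n$ \emph{share} the points $\lambda+\bve_n(\lambda)$, which are therefore excluded from $\mrm{ms}$; the remaining points of $\Lambda_n$ are at distance $\geq s-\epsilon\geq s/2$ from $\wtd\Lambda$, so $\mrm{ms}(\supp(\mu_n),\supp(\wtd\sigma_n))\wedge 1\geq (s/2)\wedge 1\geq (s\wedge 1)/2$. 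Plugging this in, together with $\|\linOp_{\mu_n}x-\linOp_{\wtd\sigma_n}x\|\leq \|\linOp_{\mu_n}x-\linOp_\mu x\|+C_2\|\mu-\wtd\sigma_n\|_p$ and the triangle-inequality estimate $\|\mu-\wtd\sigma_n\|_p\leq \|\mu-\mu\mathds{1}_{\wtd\Lambda}\|_p+\|\mu\mathds{1}_{\wtd\Lambda}-\wtd\sigma_n\|_p<\epsilon+o(1)$ (the last term tends to $0$ since $\bve_n(\lambda)\to0$ for each of the finitely many $\lambda\in\wtd\Lambda$), yields $\|\mu_n-\wtd\sigma_n\|_p\leq \frac{2}{C_1(s\wedge1)}\big(o(1)+2C_2\epsilon\big)$, hence for $n$ large $\|\mu_n-\wtd\sigma_n\|_p\leq \frac{4C_2}{C_1(s\wedge1)}\,\epsilon$ after absorbing the $o(1)$ (possibly after a harmless adjustment of constants, or by first shrinking $\epsilon$). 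Finally $\|\rho_n\|_p\leq \|\mu_n-\wtd\sigma_n\|_p$ because $\rho_n$ and $\sum_{\lambda}(\alpha^{(n)}_\lambda-\alpha_\lambda)\delta_{\lambda+\bve_n(\lambda)}$ have disjoint supports and the $\ell^p$ norm only decreases when discarding a block, giving the claimed bound.

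\textbf{Main obstacle.} The technically delicate part is the first step — converting weak-$*$ convergence in the amalgam space into the precise statement "exactly one point of $\Lambda_n$ near each $\lambda\in\wtd\Lambda$, with converging weight, and \emph{no} other points near $\wtd\Lambda$." One must choose $\epsilon<s/2$ so that the balls $B_\epsilon(\lambda)$, $\lambda\in\wtd\Lambda$, are disjoint and each can contain at most one point of any $s$-separated set; then argue by contradiction using weak-$*$ testing against suitable continuous, decaying bump functions that for large $n$ each such ball contains \emph{at least} one point (otherwise testing against a bump at $\lambda$ would not recover $\alpha_\lambda\neq 0$) and that the weight there converges to $\alpha_\lambda$. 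The shift-invariance and weak-closedness (CSI) of $\m{L}$ enter to guarantee that all the auxiliary measures ($\mu\mathds{1}_{\wtd\Lambda}$, $\wtd\sigma_n$, and their ilk) genuinely belong to $\msc{H}(\m{L})^p$, so that~\eqref{eq: identifiability set of operators} may be applied to them — this point must be checked carefully. The second step is then a clean chain of triangle inequalities once the "shared support" trick is in place to tame the $\mrm{ms}$ factor; the factor $4$ and the shape $\frac{4C_2}{C_1(s\wedge1)}$ come out of that bookkeeping.
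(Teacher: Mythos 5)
Your overall architecture matches the paper's: use Theorem \ref{w*w*thm} to upgrade norm convergence of the measurements to weak-$*$ convergence of the measures, read off the decomposition $\mu_n=\sum_{\lambda\in\wtd\Lambda}\alpha^{(n)}_\lambda\delta_{\lambda+\bve_n(\lambda)}+\rho_n$ with $\bve_n\to0$ and $\alpha^{(n)}_\lambda\to\alpha_\lambda$, and then bound $\|\rho_n\|_p$ by applying the identifiability lower bound to a pair of measures whose shared or well-separated supports keep the $\mrm{ms}$ factor of order $s$. The ``shared support'' device you use to tame $\mrm{ms}$ is the same idea the paper uses (it compares $\rho_n$, supported inside $\supp(\mu_n)$, with $\wtd\mu-\wtd\nu_n$, supported inside $\wtd\Lambda$, obtaining $\mrm{ms}\geq s-\|\bve_n\|_{\ell^\infty}>s/2$).

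There is, however, a genuine gap in your estimate of the measurement term. You bound $\|\linOp_\mu x-\linOp_{\wtd\sigma_n}x\|_{M^p(\R)}\leq C_2\|\mu-\wtd\sigma_n\|_p$ and then claim $\|\mu\mathds{1}_{\wtd\Lambda}-\wtd\sigma_n\|_p\to0$ because $\bve_n(\lambda)\to0$. This is false: $\|\cdot\|_p$ on $\msc{M}^p$ is the $\ell^p$ norm of the weight sequence indexed by the support points, so whenever $\bve_n(\lambda)\neq0$ the measure $\alpha_\lambda\bigl(\delta_\lambda-\delta_{\lambda+\bve_n(\lambda)}\bigr)$ has norm $2^{1/p}|\alpha_\lambda|$ no matter how small $\bve_n(\lambda)$ is, and $\|\mu\mathds{1}_{\wtd\Lambda}-\wtd\sigma_n\|_p$ stays bounded away from zero. (This discontinuity of the measure norm under support perturbation is exactly why the $\mrm{ms}$ factor appears in Definition \ref{defn: identifiability} at all.) What is true --- and what the paper relies on --- is that the \emph{measurements} are close even though the measures are not: item (ii) of Proposition \ref{PerturLem} gives
\begin{equation*}
\Big\|\sum_{\lambda}\alpha_\lambda\pi(\lambda)x-\sum_{\lambda}\alpha_\lambda e^{2\pi i\Re(\lambda)\Im(\varepsilon_\lambda)}\pi(\lambda+\varepsilon_\lambda)x\Big\|_{M^p(\R)}\lesssim_{\,p,s,x}\|\bve\|_{\ell^\infty}\|\alpha\|_{\ell^p}.
\end{equation*}
Your chain is repaired by replacing the step ``$C_2\|\mu-\wtd\sigma_n\|_p$'' with a direct appeal to this perturbation estimate; note also the unimodular phase $e^{2\pi i\Re(\lambda)\Im(\varepsilon_\lambda)}$ that must be inserted into the comparison measure for the estimate to apply (it comes from the commutation relation for time-frequency shifts, and is carried by the paper's $\wtd\nu_n$). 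With that substitution your argument becomes essentially the paper's proof.
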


 One can view $\sum_{{\lambda}\in\tilde{\Lambda}}\alpha^{(n)}_{\lambda}\delta_{\lambda+\bve_n(\lambda)}$ as the ``successfully recovered finite section'' of $\mu$, which approximates both the time-frequency shifts and their weights within $\epsilon$ error, whereas $\rho_n$ is the ``spurious'' component, whose norm is also proportional to $\epsilon$. The constant of proportionality $(C_2/C_1)\cdot (s\wedge 1)^{-1}$ in the bound on $\|\rho_n\|_p$ can be interpreted as a ``condition number'', indicating that the spurious component is more difficult to suppress when the ratio of identifiability constants $C_2/C_1$ is large, or when the separation $s$ of the measures under consideration is excessively small, which agrees with our intuition on the behavior of the ``difficult cases''.
 
 \subsection{Examples of identifiable and non-identifiable \texorpdfstring{$s$}{s}-regular classes\label{subsec:examples}}
Finally, we present several explicit families of $s$-regular classes and discuss their identifiability in view of Theorems \ref{CritThm} and \ref{NeceCritThm}. 
Let $p\in(1,\infty)$, $s>0$, $N\in\N$, $\theta >0$, and $R>0$, and define the sets 
\begin{equation*}
\begin{aligned}
\m{L}_{s}^{\text{sep}}&=\{\Lambda\subset\R^2 :\sep(\Lambda)\geq s\},\\
\m{L}_{s,N}^{\text{fin}}&=\{\Lambda\subset\R^2 :\sep(\Lambda)\geq s, \#(\Lambda)\leq N\},\text{ and}\\
\m{L}_{s,\theta,R}^{\text{Ray}}&=\{\Lambda\subset\R^2 :\sep(\Lambda)\geq s, \, n^+\! \left(\Lambda,(0,R)^2\right)\leq \theta R^2\}.
\end{aligned}
\end{equation*}
We call the corresponding sets $\msc{H}({\m{L}_{s}^{\text{sep}}})^p$, $\msc{H}({\m{L}_{s,N}^{\text{fin}}})^p$, and $\msc{H}({\m{L}_{s,\theta,R}^{\text{Ray}}})^p$, the $\ell^2$-separated, finite, and Rayleigh classes, respectively. The following proposition shows that these classes are $s$-regular.

\begin{prop}\label{prop:examples-CSI}
Let $s>0$, $N\in\N$, $\theta>0$, and $R>0$. Then the collections $\m{L}_{s}^{\text{sep}}$, $\m{L}_{s,N}^{\text{fin}}$, and $\m{L}_{s,\theta,R}^{\text{Ray}}$ are CSI and so the corresponding  $\ell^2$-separated, finite, and Rayleigh classes are $s$-regular.
\end{prop}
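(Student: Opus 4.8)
The plan is to verify, for each of the three collections, the two defining properties of CSI: closedness under weak limits (in the sense of \eqref{eq:set-weak-conv}) and invariance under translations $\Lambda\mapsto\Lambda+z$. Translation-invariance is immediate in all three cases, since $\sep(\Lambda+z)=\sep(\Lambda)$, $\#(\Lambda+z)=\#(\Lambda)$, and $n^+(\Lambda+z,(0,R)^2)=n^+(\Lambda,(0,R)^2)$ because $n^+$ is already defined as a supremum over all translates of the square. Once $\m{L}$ is shown to be CSI, $s$-regularity of $\msc{H}(\m{L})^p$ is automatic by Definition \ref{defn: Classes of closed supports}(ii), since membership in each collection forces $\sep(\Lambda)\geq s$. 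So the entire content of the proof is closedness under weak convergence, and it suffices to treat a sequence $\{\Lambda_n\}\subset\m{L}$ with $\Lambda_n\xrightarrow{w}\Lambda$ and show $\Lambda\in\m{L}$.

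First I would record the basic consequence of weak convergence that does all the work: if $\lambda\in\Lambda$, then for every $\delta>0$ and all $n$ large there is a point of $\Lambda_n$ within $\delta$ of $\lambda$ (choose $R,z$ with $\lambda$ in the interior of $B_R(z)$ and use that the Hausdorff distance in \eqref{eq:set-weak-conv} $\to 0$ forces $\Lambda\cap B_R(z)$ to be approximated by $\Lambda_n\cap B_R(z)$; the boundary term $\partial B_R(z)$ is harmless because $\lambda$ is interior). Conversely, points of $\Lambda_n$ that stay in a fixed ball and converge must converge to a point of $\Lambda$ (again up to the boundary caveat, which one sidesteps by slightly enlarging $R$). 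From the first fact, separation passes to the limit: given $\lambda,\lambda'\in\Lambda$ distinct, pick $\delta$ small and $n$ large so that $\Lambda_n$ contains points $\lambda_n,\lambda'_n$ with $|\lambda_n-\lambda|,|\lambda'_n-\lambda'|<\delta$; then $\lambda_n\neq\lambda'_n$ for small $\delta$, so $|\lambda-\lambda'|\geq |\lambda_n-\lambda'_n|-2\delta\geq s-2\delta$, and letting $\delta\to 0$ gives $\sep(\Lambda)\geq s$. This settles $\m{L}_s^{\text{sep}}$.

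For $\m{L}_{s,N}^{\text{fin}}$, I additionally need $\#(\Lambda)\leq N$. If not, $\Lambda$ contains $N+1$ distinct points; fix $R,z$ with all of them in the interior of $B_R(z)$, and use weak convergence to produce, for large $n$, at least $N+1$ distinct points of $\Lambda_n$ near them (distinctness is preserved because the $N+1$ limit points are pairwise separated by at least $s$, so sufficiently small perturbations remain distinct) — contradicting $\#(\Lambda_n)\leq N$. For $\m{L}_{s,\theta,R}^{\text{Ray}}$, suppose $n^+(\Lambda,(0,R)^2)>\theta R^2$; since $n^+$ is an integer-valued supremum it is attained, so some translate $(0,R)^2+z$ contains strictly more than $\theta R^2$ points of $\Lambda$. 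Here the one technical nuisance is that $(0,R)^2$ is a half-open square while \eqref{eq:set-weak-conv} involves closed balls and boundary terms, so I would first shrink to a slightly smaller closed square $[\epsilon,R-\epsilon]^2+z\subset (0,R)^2+z$ still containing more than $\theta R^2$ points (possible since there are finitely many, by separation, hence we can choose $\epsilon$ avoiding all of them), then use the weak-convergence approximation inside a ball containing that closed square to find, for large $n$, the same number of distinct points of $\Lambda_n$ inside $(0,R)^2+z$ — again contradicting the defining bound for $\Lambda_n$. The main obstacle, and really the only place care is needed, is this interplay between the half-open square $(0,R)^2$, the closed balls and the $\partial B_R(z)$ boundary correction in the definition of weak convergence; handling it cleanly via a small-$\epsilon$ shrinking argument, using separation to guarantee no points lie on the relevant boundaries, is the crux, while everything else is the routine ``limit points come from nearby sequence points and vice versa'' bookkeeping.
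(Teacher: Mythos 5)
Your proposal is correct and follows essentially the same route as the paper: translation-invariance is immediate from the definitions, and closedness under weak limits is verified by approximating each point of the limit set by nearby points of the $\Lambda_n$ (with separation guaranteeing distinctness) to transfer the separation, cardinality, and open-square counting bounds. The only cosmetic difference is that the paper works directly with the open square $(0,R)^2$ (so nearby approximants automatically stay inside) rather than shrinking to a closed subsquare, and argues the finite and Rayleigh cases directly instead of by contradiction.
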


\noindent Theorems \ref{CritThm} and \ref{NeceCritThm} can be used to obtain the following identifiability results for these classes.

\begin{cor}[Finite class]\label{cor:fin-class}
Let $p\in(1,\infty)$, $s>0$, and $N\in\N$. Then the class $H(\m{L}_{s,N}^{\text{fin}})^p$ is identifiable by the gaussian $\varphi(t)=2^{\frac{1}{4}} e^{-\pi t^2}$.
\end{cor}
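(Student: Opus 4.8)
The plan is to deduce Corollary~\ref{cor:fin-class} from Theorem~\ref{CritThm} alone (Theorem~\ref{NeceCritThm} is not needed, since we only want a sufficiency statement). First I would observe that, by Proposition~\ref{prop:examples-CSI}, the collection $\m{L}_{s,N}^{\text{fin}}$ is CSI, so that $\msc{H}(\m{L}_{s,N}^{\text{fin}})^p$ is an $s$-regular class, and in particular $\msc{H}(\m{L}_{s,N}^{\text{fin}})^p\subset\msc{M}_s^p$. Thus Theorem~\ref{CritThm} is applicable provided we can verify its density hypothesis, namely that $\m{D}^+(\m{L}_{s,N}^{\text{fin}})<\tfrac12$.

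The core of the argument is the computation of this upper Beurling class density. Fix any $\Lambda\in\m{L}_{s,N}^{\text{fin}}$ and any $R>0$. Since $\#(\Lambda)\leq N$, any translate of $[0,R]^2$ contains at most $N$ points of $\Lambda$, so $n^+(\Lambda,[0,R]^2)\leq N$. Hence
\begin{equation*}
\m{D}^+(\m{L}_{s,N}^{\text{fin}})=\limsup_{R\to\infty}\ \sup_{\Lambda\in\m{L}_{s,N}^{\text{fin}}}\frac{n^+(\Lambda,[0,R]^2)}{R^2}\leq \limsup_{R\to\infty}\frac{N}{R^2}=0<\frac12.
\end{equation*}
I should be a little careful here about the edge case $\m{L}_{s,N}^{\text{fin}}=\emptyset$ or containing only the empty set, but the bound $n^+\leq N$ holds vacuously in that case as well, so the conclusion is unaffected.

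With $\m{D}^+(\m{L}_{s,N}^{\text{fin}})=0<\tfrac12$ established, Theorem~\ref{CritThm} applies directly with $\m{L}=\m{L}_{s,N}^{\text{fin}}=\{\supp(\mu):\mu\in\msc{H}(\m{L}_{s,N}^{\text{fin}})^p\}$ and yields that $\msc{H}(\m{L}_{s,N}^{\text{fin}})^p$ is identifiable by the standard gaussian $\varphi(t)=2^{1/4}e^{-\pi t^2}$, $\varphi\in M^1(\R)$, which is exactly the assertion of the corollary. There is essentially no obstacle in this proof; the only thing requiring a sentence of care is the density computation and, if one wants to be thorough, the observation that $\{\supp(\mu):\mu\in\msc{H}(\m{L}_{s,N}^{\text{fin}})^p\}$ is contained in (indeed equals, up to removal of support points carrying zero weight) $\m{L}_{s,N}^{\text{fin}}$, so that passing to the Beurling class density of the actual supports only decreases it. Hence the density hypothesis of Theorem~\ref{CritThm} is met a fortiori.
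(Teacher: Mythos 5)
Your proof is correct, and the skeleton is the same as the paper's (verify $\msc{H}({\m{L}_{s,N}^{\text{fin}}})^p\subset\msc{M}_s^p$ and $\m{D}^+(\m{L}_{s,N}^{\text{fin}})<\tfrac12$, then invoke Theorem~\ref{CritThm}); the difference lies entirely in how the density bound is obtained, and there your route is genuinely different and simpler. You observe that $n^+(\Lambda,[0,R]^2)\leq\#(\Lambda)\leq N$ uniformly in $\Lambda$ and $R$, so $\sup_{\Lambda}n^+(\Lambda,[0,R]^2)/R^2\leq N/R^2\to 0$ and hence $\m{D}^+(\m{L}_{s,N}^{\text{fin}})=0$. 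The paper instead shows that every $\Lambda\in\m{L}_{s,N}^{\text{fin}}$ is $R$-uniformly close to the lattice $\Omega_2$ for $R=2\sqrt{2}\lceil\frac{N+1}{2}\rceil$ (by greedily matching the at most $N$ points of $\Lambda$ to distinct nearby lattice points) and then applies Lemma~\ref{lem:class-lat-dist} to get $\m{D}^+(\m{L}_{s,N}^{\text{fin}})\leq 2^{-2}$. Your direct computation is shorter, yields the sharper value $0$, and avoids the lattice machinery entirely; the paper's route has the minor virtue of exercising the same Lemmas~\ref{lem:unif-lat-dist}--\ref{lem:class-lat-dist} used for the other example classes, but nothing in the corollary requires it. Two small remarks: the containment $\msc{H}(\m{L}_{s,N}^{\text{fin}})^p\subset\msc{M}_s^p$ follows already from the separation condition $\sep(\Lambda)\geq s$ built into the definition of $\m{L}_{s,N}^{\text{fin}}$, so the appeal to the CSI property via Proposition~\ref{prop:examples-CSI} is not needed for this step (CSI matters only for the necessity direction and the recovery theorems); and your closing observation that $\{\supp(\mu):\mu\in\msc{H}(\m{L}_{s,N}^{\text{fin}})^p\}\subset\m{L}_{s,N}^{\text{fin}}$ correctly disposes of the only remaining pedantic point.
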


\begin{cor}[$\ell^2$-separated class]\label{cor:SuffSepClass}
Let $p\in(1,\infty)$ and $s>0$. Then,
\begin{itemize}
\item[(i)] if $s>2\cdot 3^{-\frac{1}{4}}$, $\msc{H}({\m{L}_{s}^{\text{sep}}})^p$ is stably identifiable by $\varphi$, and
\item[(ii)] if $s\leq 2\cdot 3^{-\frac{1}{4}}$, $\msc{H}({\m{L}_{s}^{\text{sep}}})^p$ is not stably identifiable by any probing signal.
\end{itemize}
\end{cor}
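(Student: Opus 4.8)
The two parts both reduce to determining the upper Beurling class density of the collection $\m{L}_{s}^{\text{sep}}$, and the plan is to show that
\[
\m{D}^+(\m{L}_{s}^{\text{sep}})=\frac{2}{\sqrt{3}\,s^{2}}.
\]
This quantity is $<\tfrac12$ exactly when $s^{2}>\tfrac{4}{\sqrt3}$, i.e.\ when $s>2\cdot 3^{-\frac14}$, so the threshold in the statement is precisely where $\m{D}^+(\m{L}_{s}^{\text{sep}})$ crosses $\tfrac12$. Granting this identity, part (i) is immediate from Theorem~\ref{CritThm}: for $s>2\cdot 3^{-\frac14}$ we have $\m{D}^+(\m{L}_{s}^{\text{sep}})<\tfrac12$, and $\msc{H}({\m{L}_{s}^{\text{sep}}})^p\subset\msc{M}_s^p$ because every $\mu$ in the class satisfies $\supp(\mu)\subseteq\Lambda$ for some $\Lambda$ with $\sep(\Lambda)\geq s$, whence $\sep(\supp(\mu))\geq s$. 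Likewise, part (ii) follows from Theorem~\ref{NeceCritThm} and Proposition~\ref{prop:examples-CSI} (which makes $\msc{H}({\m{L}_{s}^{\text{sep}}})^p$ $s$-regular): if $s\leq 2\cdot 3^{-\frac14}$ then $\m{D}^+(\m{L}_{s}^{\text{sep}})\geq\tfrac12$, so the contrapositive of Theorem~\ref{NeceCritThm} rules out identifiability by any $x\in M^1(\R)$. The endpoint $s=2\cdot 3^{-\frac14}$ gives $\m{D}^+=\tfrac12$ exactly, landing on the non-identifiable side, consistent with the closed inequality in (ii).

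For the lower bound $\m{D}^+(\m{L}_{s}^{\text{sep}})\geq \tfrac{2}{\sqrt3\,s^{2}}$ I would exhibit the hexagonal lattice $\Lambda_s$ generated by $(s,0)$ and $(s/2,\,s\sqrt3/2)$: it has separation exactly $s$, hence $\Lambda_s\in\m{L}_{s}^{\text{sep}}$, and its fundamental cell has area $\tfrac{\sqrt3}{2}s^{2}$, so a routine lattice-point count yields $n^+(\Lambda_s,[0,R]^2)\geq\#(\Lambda_s\cap[0,R]^2)=\tfrac{2}{\sqrt3\,s^{2}}R^{2}+O(R)$. Dividing by $R^{2}$, passing to the supremum over $\m{L}_{s}^{\text{sep}}$ and then to the $\limsup$ in $R$ gives the bound.

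The upper bound $\m{D}^+(\m{L}_{s}^{\text{sep}})\leq\tfrac{2}{\sqrt3\,s^{2}}$ is the technical core. Here I would invoke Thue's theorem on the optimality of the hexagonal circle packing: for $\Lambda$ with $\sep(\Lambda)\geq s$ the open disks of radius $s/2$ centered at the points of $\Lambda$ are pairwise disjoint, and any such packing covers at most a fraction $\tfrac{\pi}{2\sqrt3}$ of the plane, equivalently the point density of $\Lambda$ is at most $\tfrac{2}{\sqrt3\,s^{2}}$. To convert this into the uniform estimate on $n^+(\Lambda,[0,R]^2)$ required in Definition~\ref{def:UBcd}, I would periodize $\Lambda\cap[0,R]^2$ by the integer translates of the square of side $R+s$ (the separation is preserved, since points in neighboring copies differ by at least $s$ in a coordinate), producing an infinite $s$-separated set of density $\#(\Lambda\cap[0,R]^2)/(R+s)^{2}$; the packing bound then gives $n^+(\Lambda,[0,R]^2)\leq\tfrac{2}{\sqrt3\,s^{2}}(R+s)^{2}$ uniformly over $\m{L}_{s}^{\text{sep}}$, and dividing by $R^{2}$ and letting $R\to\infty$ closes the argument. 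I expect the main obstacle to be exactly this step: securing the \emph{sharp} packing constant $\tfrac{2}{\sqrt3\,s^{2}}$ rather than the crude bound $\tfrac{4}{\pi s^{2}}$ coming from a bare volume estimate on the disks, since only the sharp constant produces the exact threshold $2\cdot 3^{-\frac14}$.
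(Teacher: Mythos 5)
Your proposal is correct and follows the same overall strategy as the paper: both parts reduce to locating $\m{D}^+(\m{L}_{s}^{\text{sep}})$ relative to $\tfrac12$, with the hexagonal lattice supplying the lower bound for (ii) and a sharp packing estimate supplying the upper bound for (i). The one substantive difference is the packing input. The paper cites the Folkman--Graham inequality, a finite-domain bound $\#\leq \tfrac{2}{\sqrt3}\mathrm{Area}(K)+\tfrac12\mathrm{Per}(K)+1$ whose boundary terms are absorbed into a choice of $\theta<\tfrac12$, and then passes through Lemmas \ref{lem:unif-lat-dist} and \ref{lem:class-lat-dist} (uniform closeness to a square lattice $\Omega_\gamma$) to conclude $\m{D}^+\leq\gamma^{-2}<\tfrac12$; you instead invoke Thue's theorem for infinite packings and manufacture the finite-domain estimate yourself by periodizing $\Lambda\cap[0,R]^2$ over a lattice of period $R+s$, which preserves $s$-separation and yields $n^+(\Lambda,[0,R]^2)\leq\tfrac{2}{\sqrt3 s^2}(R+s)^2$ uniformly in $\Lambda$, after which you read off $\m{D}^+(\m{L}_s^{\text{sep}})=\tfrac{2}{\sqrt3 s^2}$ directly from Definition \ref{def:UBcd}. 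Your periodization is sound (points in distinct copies differ by at least $s$ in some coordinate), and computing the class density exactly, rather than merely bounding it by $\gamma^{-2}$ for a suitable square lattice, is arguably cleaner; the paper's route has the advantage of reusing machinery already needed elsewhere and of not requiring a density statement for non-periodic infinite packings. Minor cosmetic difference: for (ii) the paper uses the hexagonal lattice of separation exactly $2\cdot3^{-1/4}$ (giving $D^+=\tfrac12$ for every admissible $s$), whereas you use separation exactly $s$ (giving $D^+\geq\tfrac12$); both land on the correct side of Theorem \ref{NeceCritThm}.
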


\begin{cor}[Rayleigh class]\label{cor:RayClass}
Let $p\in(1,\infty)$ and $s\in(0,\theta^{-1/2})$. Then,
\begin{itemize}
\item[(i)] if $\theta<\frac{1}{2}$, $\msc{H}({\m{L}_{s,\theta,R}^{\text{Ray}}})^p$ is stably identifiable by $\varphi$, for all $R>0$, and
\item[(ii)] if $\theta> \frac{1}{2}$, there exists an $R_0>0$ such that $\msc{H}({\m{L}_{s,\theta,R}^{\text{Ray}}})^p$ is not stably identifiable by any probing signal, for all $R\geq R_0$.
\end{itemize}
\end{cor}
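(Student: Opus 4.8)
The strategy is to reduce both statements to the two main theorems by computing (or bounding) the upper Beurling class density $\m{D}^+(\m{L}_{s,\theta,R}^{\text{Ray}})$ and invoking Proposition~\ref{prop:examples-CSI} to ensure $s$-regularity. For part (i), the plan is to show that $\theta<\frac12$ forces $\m{D}^+(\m{L}_{s,\theta,R}^{\text{Ray}})\leq\theta<\frac12$, whereupon Theorem~\ref{CritThm} delivers identifiability by the gaussian $\varphi$. The density bound should come essentially from the defining constraint $n^+(\Lambda,(0,R)^2)\leq\theta R^2$: one wants to argue that the density measured with a window of side $R'$ cannot exceed $\theta$ in the limit $R'\to\infty$. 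The mild subtlety is that the constraint is imposed only for the fixed box side $R$, not for all box sides, so one must tile a large square $[0,R']^2$ by translates of $(0,R)^2$ and sum up; a clean packing argument gives $n^+(\Lambda,[0,R']^2)\lesssim \lceil R'/R\rceil^2 \theta R^2$, which after dividing by $R'^2$ and taking $R'\to\infty$ yields the bound $\theta$. (One may need to also invoke $\sep(\Lambda)\geq s$ together with $s<\theta^{-1/2}$ to ensure the class is nonempty and that $\m{L}_{s,\theta,R}^{\text{Ray}}$ is genuinely an $s$-regular $\msc{H}(\m{L})^p$ class in the sense of Definition~\ref{defn: Classes of closed supports}.)

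For part (ii), the plan is to use Theorem~\ref{NeceCritThm} contrapositively: it suffices to exhibit an $R_0$ so that for all $R\geq R_0$ we have $\m{D}^+(\m{L}_{s,\theta,R}^{\text{Ray}})\geq\frac12$, which then rules out identifiability by any probing signal. The natural candidate is to place a finite chunk of a suitably fine lattice inside $(0,R)^2$: take a square lattice $\beta\Z^2$ with $\beta$ chosen so that $\beta\geq s$ (to respect the separation constraint) but $\beta^{-2}$ slightly exceeding $\frac12$, say $\beta^{-2}\in(\frac12,\theta)$ — this is possible precisely because $\theta>\frac12$ and $s<\theta^{-1/2}$ give room to pick such a $\beta$ in $[s,\theta^{-1/2})$ with $\beta<\sqrt 2$. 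Then $\Lambda_R := \beta\Z^2\cap(0,R)^2$ has roughly $\beta^{-2}R^2$ points, and one checks that for $R$ large enough $\#\Lambda_R\leq\theta R^2$ so that $\Lambda_R\in\m{L}_{s,\theta,R}^{\text{Ray}}$. Since the collection $\m{L}_{s,\theta,R}^{\text{Ray}}$ is shift-invariant (Proposition~\ref{prop:examples-CSI}), it contains all translates of $\Lambda_R$, and by letting the measurement window $[0,R']^2$ with $R'\leq R$ range over translates positioned in the "bulk" of $(0,R)^2$ one obtains $n^+(\Lambda_R,[0,R']^2)\gtrsim \beta^{-2}R'^2$; optimizing, $\m{D}^+(\m{L}_{s,\theta,R}^{\text{Ray}})\geq\beta^{-2}>\frac12$. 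Hence Theorem~\ref{NeceCritThm} applies.

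The main obstacle I anticipate is the bookkeeping in the density estimates near the boundary of the boxes — both the "tiling from above" in part (i) and the "bulk lattice chunk" construction in part (ii) require care that edge effects are $O(R')$ and therefore wash out after dividing by $R'^2$ and taking $R'\to\infty$ in the $\limsup$ defining $\m{D}^+$. A secondary point needing attention is verifying that the constructed sets genuinely lie in $\m{L}_{s,\theta,R}^{\text{Ray}}$ as \emph{separated} sets with $\sep\geq s$ and that the hypothesis $s<\theta^{-1/2}$ is exactly what makes the parameter windows for $\beta$ (resp.\ the nonemptiness in (i)) nonempty; getting the strict/nonstrict inequalities consistent with Definition~\ref{def:UBcd}'s $\limsup\sup$ and with the strict "$<\frac12$" in Theorems~\ref{CritThm} and~\ref{NeceCritThm} is where one must be most careful. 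Everything else is a direct appeal to the already-established machinery.
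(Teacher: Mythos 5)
Your part (i) is correct and in fact takes a more direct route than the paper: covering any translate of $[0,R']^2$ by $(\lceil R'/R\rceil+1)^2$ translates of the \emph{open} square $(0,R)^2$ and applying the defining constraint to each gives $n^+(\Lambda,[0,R']^2)\leq(\lceil R'/R\rceil+1)^2\,\theta R^2$ uniformly over $\Lambda\in\m{L}_{s,\theta,R}^{\text{Ray}}$, whence $\m{D}^+(\m{L}_{s,\theta,R}^{\text{Ray}})\leq\theta<\frac12$ and Theorem \ref{CritThm} applies. Be careful to \emph{cover} by open squares rather than literally tile by half-open ones: a half-open tile picks up two extra edges carrying up to $O(R/s)$ points each, and since there are $(R'/R)^2$ tiles this contributes a non-vanishing $O(1/(sR))$ to the density, which could push you past $\frac12$ for small $R$; the covering version has no boundary correction because the hypothesis is already stated for open squares. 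The paper instead routes through Lemmas \ref{lem:unif-lat-dist} and \ref{lem:class-lat-dist}, concluding $\m{D}^+\leq\gamma^{-2}$ for any $\gamma\in(\sqrt2,\theta^{-1/2})$; both arguments reach the same conclusion. Note also that Theorem \ref{CritThm} only needs $\msc{H}\subset\msc{M}_s^p$ and the density bound, so Proposition \ref{prop:examples-CSI} is needed only for part (ii).

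Part (ii) has a genuine gap. You take the \emph{finite} chunk $\Lambda_R=\beta\Z^2\cap(0,R)^2$ and try to extract $\m{D}^+(\m{L}_{s,\theta,R}^{\text{Ray}})\geq\beta^{-2}$ from windows $[0,R']^2$ with $R'\leq R$. But Definition \ref{def:UBcd} takes $\limsup_{R'\to\infty}$ \emph{after} the supremum over $\Lambda$, so only the behaviour for arbitrarily large windows matters; for the finite set $\Lambda_R$ (and every translate of it) one has $n^+(\Lambda_R,[0,R']^2)/R'^2\leq\theta R^2/R'^2\to0$, and no amount of optimizing over $R'\leq R$ yields a lower bound on that $\limsup$. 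The fix is to observe that the \emph{full} lattice $\beta\Z^2$ itself belongs to $\m{L}_{s,\theta,R}^{\text{Ray}}$ once $R$ is large enough that $(R/\beta+1)^2\leq\theta R^2$ (possible because $\beta^{-2}<\theta$ strictly), after which Lemma \ref{lem:Rayleigh-number}(ii) gives $\m{D}^+(\m{L}_{s,\theta,R}^{\text{Ray}})\geq D^+(\beta\Z^2)=\beta^{-2}>\frac12$ and Theorem \ref{NeceCritThm} applies; this is exactly the paper's argument (phrased there as a contradiction along a sequence $R_n\to\infty$ with mesh sizes $\gamma_n\to\theta^{-1/2}$, to handle the quantifier ``for all $R\geq R_0$''). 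Finally, fix the slip ``$\beta\in[s,\theta^{-1/2})$'': you need $\beta\in(\theta^{-1/2},\sqrt2)$ so that $\beta^{-2}\in(\frac12,\theta)$, and the separation $\beta>s$ is then automatic from $s<\theta^{-1/2}$.
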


One could also consider the class $\msc{H}(\{\Xi\})^p$ for a fixed lattice $\Xi= A (\Z\times \Z) + b $, where $A\in\R^{2\times 2}$ and $b\in\R^2$, in which case $\msc{H}(\{\Xi\})^p$ is 
$\sep(\Xi)$-separated, stably identifiable by $\varphi$ if $\det(A)> 1$, and not stably identifiable by any probing signal if $\det(A)\leq 1$.


\section{Lattices, Beurling densities, and Wiener amalgam spaces}\label{sec:LatticesWiener}

In this section we introduce various technical tools used throughout the paper. We begin with square lattices in $\C$ and write $\Omega_\gamma=\{\omega_{m,n}=\gamma(m+in)\}_{m,n\in\Z}$ for the square lattice in $\C$ of mesh size $\gamma>0$. Whenever we identify $\C$ with $\R^2$, $\Omega_\gamma$ is equivalently given by $\{(\gamma m, \gamma n):m,n\in\Z\}$. Next, we define the (standard) upper Beurling density, which is analogous to our Definition \ref{def:UBcd}, but is defined for individual subsets of $\R^2$, instead of classes of subsets.

\begin{definition}[Upper Beurling density, {\cite[p.~346]{Beurling1989-1}\cite[p.~47]{Landau1966}}]\label{def:standard-B-dens}
	Let $\Lambda$ be a relatively separated set in $\R^2$, and, for $R>0$, let $[0,R]^2\subset  \R^2$. Let  $n^+(\Lambda, [0,R]^2)$ be the largest number of points of $\Lambda$ contained in any translate of $[0,R]^2$. We then define
	\begin{equation*}
	 D^{+}(\Lambda) \coleqq  \limsup_{R \to \infty} \frac{n^+(\Lambda, [0,R]^2)}{R^2},
	\end{equation*}
	and we call this quantity the upper (standard) Beurling density of $\Lambda$.
\end{definition}

The following three lemmas, whose proofs can be found in the appendix, relate the lattices $\Omega_\gamma$, the upper Beurling class density, and the standard Beurling density.

\begin{lem}\label{lem:Rayleigh-number}
Let $\m{L}$ be a collection of relatively separated sets in $\R^2$, and suppose that $\m{D}^+(\m{L})<\infty$. Then,
\begin{enumerate}[(i)]
\item for every $\theta> \m{D}^+(\m{L})$, there exists an $R_0>0$ such that
\begin{equation*}
n^+(\Lambda , (0,R)^2)\leq \theta R^2,
\end{equation*}
for all $\Lambda\in \m{L}$ and $R\geq R_0$, and
\item $\m{D}^+(\m{L})\geq \sup_{\Lambda \in \m{L}} D^+(\Lambda)$.
\end{enumerate}
\end{lem}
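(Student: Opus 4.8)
The plan is to derive both items by directly unwinding the definitions of the two flavours of upper Beurling density and of $\limsup$, the only genuine subtlety being the passage between the open square $(0,R)^2$ appearing in item (i) and the closed square $[0,R]^2$ used to define $\m{D}^+$ and $D^+$. Since any translate of $(0,R)^2$ is contained in a translate of $[0,R]^2$, one always has $n^+(\Lambda,(0,R)^2)\leq n^+(\Lambda,[0,R]^2)$, so an upper bound established for the closed square transfers immediately to the open one; I would record this inclusion once at the outset and then work exclusively with $[0,R]^2$.

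For item (i), I would introduce $g(R)\coleqq \sup_{\Lambda\in\m{L}} n^+(\Lambda,[0,R]^2)/R^2$, so that by definition $\m{D}^+(\m{L})=\limsup_{R\to\infty} g(R)$. Given $\theta>\m{D}^+(\m{L})$, set $\epsilon\coleqq\theta-\m{D}^+(\m{L})>0$; by the definition of $\limsup$ there is an $R_0>0$ with $g(R)<\m{D}^+(\m{L})+\epsilon=\theta$ for all $R\geq R_0$. The point worth stressing is that the supremum over $\Lambda\in\m{L}$ sits \emph{inside} $g$, so this $R_0$ is automatically uniform over the whole collection $\m{L}$. Consequently $n^+(\Lambda,[0,R]^2)\leq \theta R^2$, and a fortiori $n^+(\Lambda,(0,R)^2)\leq\theta R^2$, for every $\Lambda\in\m{L}$ and every $R\geq R_0$, which is exactly the claim.

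For item (ii), I would fix an arbitrary $\Lambda\in\m{L}$ and note the pointwise-in-$R$ bound $n^+(\Lambda,[0,R]^2)/R^2\leq g(R)$. Taking $\limsup_{R\to\infty}$ of both sides and using monotonicity of $\limsup$ gives $D^+(\Lambda)\leq \limsup_{R\to\infty} g(R)=\m{D}^+(\m{L})$; since $\Lambda\in\m{L}$ was arbitrary, taking the supremum over $\Lambda\in\m{L}$ yields $\sup_{\Lambda\in\m{L}}D^+(\Lambda)\leq \m{D}^+(\m{L})$.

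I do not anticipate a real obstacle: the argument is pure bookkeeping with $\limsup$ and $\sup$, and the hypothesis $\m{D}^+(\m{L})<\infty$ together with relative separatedness of the sets in $\m{L}$ ensures that all ratios and their suprema are finite, so the manipulations are legitimate. The only place that deserves an explicit sentence in the writeup — rather than an extra argument — is the uniformity of $R_0$ in item (i), which is simply a consequence of the supremum over $\m{L}$ being nested inside the $\limsup$ in the definition of $\m{D}^+(\m{L})$.
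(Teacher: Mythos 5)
Your proposal is correct and follows essentially the same route as the paper: both reduce the open square to the closed square via $n^+(\Lambda,(0,R)^2)\leq n^+(\Lambda,[0,R]^2)$ and then unwind the definition of $\m{D}^+(\m{L})$, with the supremum over $\m{L}$ sitting inside the $\limsup$ giving the uniformity of $R_0$ in (i), and monotonicity of $\limsup$ (the paper phrases this with an explicit $\epsilon$) giving (ii). No gaps.
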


\begin{definition}\label{def:unif-close-to-lat}
Let $\Lambda$ be a non-empty relatively separated subset of $\C$, and let $\gamma>0$ and $R>0$. We say that $\Lambda$ is $R$-uniformly close to $\Omega_\gamma=\{\omega_{m,n}=\gamma(m+in)\}_{m,n\in\Z}\,$ if there exists an enumeration $\{\lambda_{m,n}\}_{(m,n)\in\m{I}}$ of $\Lambda$ (with index set $\m{I}\subset\Z\times\Z$) such that $|\lambda_{m,n}-\omega_{m,n}|\leq R$, for all $(m,n)\in\m{I}$. 
\end{definition}

\begin{lem}\label{lem:unif-lat-dist}
Let $\Lambda$ be a non-empty discrete set in $\C$, and let $\theta>0$, $\gamma>0$, and $R>0$. If $\gamma^{-2}>\theta$ and $n^+(\Lambda , (0,R)^2)\leq \theta R^2$, then there exists an $R'=R'(\theta,\gamma,R)>0$ such that $\Lambda$ is $R'$-uniformly close to $\Omega_\gamma$.
\end{lem}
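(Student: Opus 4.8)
The plan is to build the desired enumeration $\{\lambda_{m,n}\}$ by a greedy "assignment" argument, exploiting the fact that the density bound $n^+(\Lambda,(0,R)^2)\leq\theta R^2$ with $\gamma^{-2}>\theta$ forces $\Lambda$ to be, in a quantitative sense, sparser than $\Omega_\gamma$ in every sufficiently large window, so that one can injectively match points of $\Lambda$ to lattice points without ever having to move a point too far. First I would record the elementary consequences of the hypothesis: the density bound applied to an arbitrary translate of $(0,R)^2$, together with a standard covering/sub-additivity argument, yields $n^+(\Lambda,(0,kR)^2)\leq \theta' (kR)^2$ for any fixed $\theta'\in(\theta,\gamma^{-2})$ and all sufficiently large $k$; in particular $D^+(\Lambda)\leq\theta<\gamma^{-2}$, and $\Lambda$ is relatively separated. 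The number $\gamma^{-2}$ is precisely the (uniform) density of $\Omega_\gamma$, so we have strictly more lattice points than $\Lambda$-points in every large box, with a fixed density margin $\delta_0\coleqq\gamma^{-2}-\theta'>0$.

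The core step is a Hall-type (defect form of the marriage theorem) or flow argument. Consider the bipartite "incidence at scale $L$" graph joining each $\lambda\in\Lambda$ to every lattice point $\omega_{m,n}\in\Omega_\gamma$ with $|\lambda-\omega_{m,n}|\leq L$, for a parameter $L$ to be chosen. I would verify Hall's condition for a perfect matching saturating $\Lambda$: for any finite $F\subset\Lambda$, its neighbourhood $N(F)$ consists of all lattice points within distance $L$ of $F$; since the $L$-neighbourhood of the (relatively separated) set $F$ has area at most a constant times $\#F$ plus a boundary term, and since $\Omega_\gamma$ has density $\gamma^{-2}$ while $\Lambda$ has density $\leq\theta'<\gamma^{-2}$, an area count shows $\#N(F)\geq \#F$ once $L=L(\theta',\gamma)$ is large enough and $F$ ranges over boxes; the general case follows by decomposing $F$ according to a grid of such boxes and handling the interface via the margin $\delta_0$. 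A clean way to make the "box" version rigorous is: inside any $(0,T)^2$ with $T$ large, $\Omega_\gamma$ has $\geq (\gamma^{-1}T-C)^2$ points while $\Lambda$ has $\leq\theta'T^2$ points, so for $T\geq T_0(\delta_0,\gamma)$ the lattice wins; then pad $F$ to a union of boxes. By the countable (locally finite) version of Hall's theorem — valid here because each $\lambda$ has only finitely many candidate partners and, conversely, the density bound prevents infinitely many $\lambda$'s from crowding one lattice point — we obtain an injection $\Phi:\Lambda\hookrightarrow\Omega_\gamma$ with $|\lambda-\Phi(\lambda)|\leq L$ for all $\lambda$. Setting $R'\coleqq L$ and letting $(m,n)\in\m{I}$ index the image $\Phi(\Lambda)$, with $\lambda_{m,n}\coleqq\Phi^{-1}(\omega_{m,n})$, gives exactly the enumeration required by Definition \ref{def:unif-close-to-lat}.

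The main obstacle I anticipate is making the infinite matching argument watertight: Hall's theorem for infinite bipartite graphs requires a local-finiteness or compactness hypothesis, and one must check that the version that applies (e.g. via König's lemma / compactness on the finite truncations $\Lambda\cap B_n(0)$, using that the matching partners stay within a fixed distance $L$) genuinely produces a single global injection, not just finite partial ones. The subtlety is uniformity of $L$ across the truncations, which is exactly what the density bound with the fixed margin $\delta_0$ buys us: the same $L$ works for every box regardless of location, so the finite matchings can be chosen consistently and a diagonal/compactness extraction yields the global $\Phi$. A secondary, more bookkeeping-type difficulty is the passage from the "box" form of Hall's condition to the general finite-set form — handled, as indicated, by enclosing an arbitrary $F$ in a finite union of $\gamma$-aligned boxes and summing the per-box surpluses, the cross-box overcounting being absorbed by choosing $L$ a fixed multiple of the box size. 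Everything else — the sub-additivity yielding $\theta'$, relative separation, the final relabelling — is routine.
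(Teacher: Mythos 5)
Your proposal is correct, and its essential quantitative content coincides with the paper's: fix a tessellation of the plane into congruent large squares and show, using $\gamma^{-2}>\theta$, that every such square contains at least as many points of $\Omega_\gamma$ as of $\Lambda$. The paper does exactly this with squares of side $q R$ for a suitable integer $q$, via the counts $\#(\Omega_\gamma\cap K)\geq (qR\gamma^{-1})^2-4(qR\gamma^{-1}+1)$ and $\#(\Lambda\cap K)\leq (q+1)^2\theta R^2$, and then simply pairs each point of $\Lambda$ with a distinct lattice point \emph{in the same square}, taking $R'=\sqrt{2}qR$ (the diagonal). Where you differ is in how you pass from the per-box surplus to a global injection: you wrap it in Hall's marriage theorem for infinite locally finite bipartite graphs, verified via the box decomposition and extracted by compactness. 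That is valid, but it is heavier than needed, and the obstacles you flag (uniformity of $L$ across truncations, interface/overcounting between boxes) are in fact vacuous: since the boxes partition the plane and each box already has its own surplus of lattice points, the injection can be defined box by box with no global compatibility to check, so no infinite-matching theorem or diagonal extraction is required. Your "clean box version" of Hall's condition \emph{is} the whole proof; the Hall superstructure buys generality you do not use here, at the cost of having to invoke (and correctly state) the infinite version of the theorem.
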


\begin{lem}\label{lem:class-lat-dist}
Let $\m{L}$ be a set of relatively separated subsets of $\C$, and let $\gamma>0$ and $R>0$. If $\Lambda$ is $R$-uniformly close to $\Omega_\gamma$, for all $\Lambda\in \m{L}$,  then $\m{D}^+(\m{L})\leq \gamma^{-2}$.
\end{lem}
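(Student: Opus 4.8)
The plan is to reduce the statement about the class density $\m{D}^+(\m{L})$ to a statement about a single enlarged lattice-counting estimate that holds uniformly over all $\Lambda \in \m{L}$. First I would fix an arbitrary $\Lambda \in \m{L}$ together with the enumeration $\{\lambda_{m,n}\}_{(m,n)\in\m{I}}$ from Definition \ref{def:unif-close-to-lat}, so that $|\lambda_{m,n} - \omega_{m,n}| \le R$ for all $(m,n)\in\m{I}$, with $\omega_{m,n} = \gamma(m+in)$. The key geometric observation is that if $\lambda_{m,n}$ lies in some translate $Q + z$ of the square $Q = [0,L]^2$, then the corresponding lattice point $\omega_{m,n}$ lies in the $R$-neighborhood of $Q+z$, which is contained in a translate of the larger square $[-R, L+R]^2$, i.e. a translate of $[0, L+2R]^2$. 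Hence the number of indices $(m,n)$ with $\lambda_{m,n} \in Q+z$ is at most the number of lattice points of $\Omega_\gamma$ in a translate of $[0, L+2R]^2$; since distinct points of $\Lambda$ may in principle carry the same index only if the enumeration is not injective, I should note that Definition \ref{def:unif-close-to-lat} gives an enumeration, hence a bijection $\m{I} \to \Lambda$, so counting indices is the same as counting points of $\Lambda$.

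Next I would bound the number of lattice points of $\Omega_\gamma$ in any translate of $[0, T]^2$ (here $T = L + 2R$): a standard packing argument gives $n^+(\Omega_\gamma, [0,T]^2) \le (\lfloor T/\gamma\rfloor + 1)^2 \le (T/\gamma + 1)^2$. Combining the two steps yields, for every $\Lambda \in \m{L}$ and every $L > 0$,
\begin{equation*}
n^+(\Lambda, [0,L]^2) \le \left(\frac{L + 2R}{\gamma} + 1\right)^2.
\end{equation*}
This bound is uniform in $\Lambda$, so dividing by $L^2$, taking the supremum over $\Lambda \in \m{L}$, and then the limsup as $L \to \infty$, the right-hand side converges to $1/\gamma^2 = \gamma^{-2}$, giving $\m{D}^+(\m{L}) \le \gamma^{-2}$.

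I do not anticipate a genuinely hard step here; the argument is a routine covering/packing estimate. The one place that requires a little care is the bookkeeping around the enumeration: one must be sure that the map $(m,n) \mapsto \lambda_{m,n}$ is injective (so that a bound on the count of contributing indices is a bound on the count of points of $\Lambda$), which is exactly what "enumeration" in Definition \ref{def:unif-close-to-lat} provides. A second minor point is the interplay between open and closed squares — Definition \ref{def:standard-B-dens} uses the closed square $[0,R]^2$, and enlarging by $R$ in every direction and rounding up comfortably absorbs any boundary discrepancies, so this does not affect the final limsup.
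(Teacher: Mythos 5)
Your argument is correct and is essentially the paper's proof: both hinge on the observation that uniform $R$-closeness gives $n^+(\Lambda,[0,L]^2)\leq n^+(\Omega_\gamma,[0,L+2R]^2)$ uniformly over $\Lambda\in\m{L}$, after which dividing by $L^2$ and letting $L\to\infty$ yields the lattice density $\gamma^{-2}$. Your explicit bound $(\lfloor T/\gamma\rfloor+1)^2$ on the lattice count and the remark on injectivity of the enumeration are harmless elaborations of the same route.
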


We conclude this section by formalizing Wiener amalgam spaces \cite{Feichtinger1980,Feichtinger1983} on $\C$ and relating them to weak-* convergence on $\msc{M}_s^p$ defined in \eqref{eq:M-w*-conv}. We adopt most of our terminology from \cite{Groechenig2015}. Let $\m{D}(\C)$ be the test space of smooth compactly supported functions on $\C$, with its usual inductive limit topology and the corresponding topological dual $\m{D}'(\C)$, called the space of distributions. Let $\mathcal{B}$ be a Banach space that admits a continuous embedding into $\m{D}'(\C)$. Furthermore, fix a non-negative compactly supported continuous function $\psi\in\m{D}(\C)$ forming a partition of unity, i.e., $\sum_{z\in\Z^2}\psi(\,\cdot-z)=1$, and let $m:\C\to\R_{\geq 0}$ be a weight function of the form $m(z)=(1+|z|)^r$, for some $r\geq 0$. Then, for $p\in[1,\infty]$, the Wiener amalgam space $W(\mathcal{B},L^p_m)$ is defined as
\begin{equation*}
W(\mathcal{B},L^p_m)=\left\{f\in \m{D}'(\C):\|f\|_{W(\mathcal{B},L^p_m)}\coleqq\Big\| \| f\, \overline{\psi}(\,\cdot -z)\|_\mathcal{B} \, m(z) \Big\|_{L^p(\mathrm{d}z )}<\infty \right\}.
\end{equation*}
The definition of $W(\mathcal{B},L^p_m)$ is independent of the choice of $\psi$, and different $\psi$ define equivalent norms on $W(\mathcal{B},L^p_m)$. Informally, $W(\mathcal{B},L^p_m)$ is the space of distributions (i.e., generalized functions) on $\C$ that are ``locally in $\mathcal{B}$'' and ``globally in $L^p_m$''.

Next, we claim that $\msc{M}_s^p\subset W(\m{M},L^p)$, for $p\in (1,\infty]$, where $\m{M}$ is the space of regular complex-valued Borel measures on $\C$ with the total variation norm. To see this, let $r>0$ be such that $\supp(\psi)\subset B_r(0)$. Now, for a measure $\mu=\sum_{\lambda\in\Lambda} \alpha_\lambda\delta_\lambda \in \msc{M}_s^p$ denote $|\mu|^p\coleqq \sum_{\lambda\in\Lambda} |\alpha_\lambda|^p\,  \delta_\lambda$. Then Hölder's inequality yields
\begin{equation*}
\begin{aligned}
\| \mu\, \overline{\psi}(\,\cdot -z)\|_{\m{M}} &\leq \sum_{\lambda\in \Lambda\cap B_r(z)} |\alpha_\lambda|\leq  \Bigg(  \sum_{\lambda\in \Lambda\cap B_r(z)} 1^q \Bigg)^{1/q }  \Bigg(\sum_{\lambda\in \Lambda \cap B_r(z)} |\alpha_\lambda|^p\Bigg)^{1/p}\\
&\lesssim_{\, \psi} \left(\sep(\Lambda)^{-2}\right)^{1/q}\,   \Bigg(\int_{\C} \mathds{1}_{\{ |y - z|\leq r \} }\,  \mrm{d}|\mu|^p(y)  \Bigg)^{1/p}, \quad \text{for } z\in \C,
\end{aligned}
\end{equation*}
where the last inequality follows since one can pack at most $r^2/(\sep(\Lambda)/2)^{-2}$ spheres of radius $\sep(\Lambda)/2$ in $B_r(z)$.
Therefore, as $\sep(\Lambda)\geq s$ Tonelli's theorem yields
\begin{equation*}
\begin{aligned}
\Big\| \| \mu\,\overline{\psi}(\,\cdot -z) \|_{\m{M}}   \Big\|_{L^p(\mathrm{d}z )}^p&\lesssim_{\,\psi,s} \int_{\C}\Bigg[ \int_{\C} \mathds{1}_{\{ |z-y|\leq r \} } \, \mrm{d}|\mu|^p \Bigg]\mrm{d}z\\
&=  \int_{\C}  \underbrace{ \int_{\C}  \mathds{1}_{\{ |z-y|\leq r \} }  \mrm{d}z }_{=\pi r^2}  \,  \mrm{d}|\mu|^p = \pi r^2 \|\mu\|_p^p <\infty,
\end{aligned}
\end{equation*}
and so $\mu\in  W(\m{M},L^p)$. As $\mu \in \msc{M}_s^p$ was arbitrary, we have therefore shown that
\begin{equation}\label{eq:WnormEquiv-0}
\|\mu\|_{W(\m{M},L^p)}\lesssim_{\, \psi, p,s} \|\mu\|_{p},\quad \text{for all }\mu\in\msc{M}_s^p,
\end{equation}
which establishes $\msc{M}_s^p\subset W(\m{M},L^p)$.

Now, by the Riesz-Markov-Kakutani representation theorem \cite[Thm. 6.19]{Rudin1987}, $\m{M}$ can be identified with the topological dual $C_0^*$ of 
\begin{equation*}
C_0=\{f\in L^\infty(\C): f\text{ continuous},\; \lim_{|z|\to\infty}|f(z)|=0\},
\end{equation*}
via the pairing $\langle \mu, f\rangle=\int_\C \overline{f}\mrm{d}\mu$. Therefore, by \cite[Thm. 2.8]{Feichtinger1983}, we have that $|f(y)|\mathds{1}_{B_r(z)}(y)$ is integrable w.r.t. the product measure $ \mrm{d}|\mu|(y) \times \mrm{d}z $ on $\C\times \C$, for $\mu\in W(\m{M},L^p)$ and $f\in W(C_0,L^q)$, and $W(\m{M},L^p)$ can be identified with the topological dual of $W(C_0,L^q)$ via the dual pairing
\begin{equation*}
\langle \mu,f\rangle\coleqq \int_{\C}\left[ \int_\C \overline{f}\, \mathds{1}_{B_r(z)} \mrm{d}\mu\right] \mrm{d}z.
\end{equation*}
An application of Fubini's theorem hence yields
\begin{equation*}
\begin{aligned}
\langle \mu,f\rangle&=\int_{\C}\left[ \int_\C \overline{f}(y)\, \mathds{1}_{B_r(z)}(y) \mrm{d}\mu(y)\right] \;\mrm{d}z= \int_\C \overline{f}(y) \int_{\C} \mathds{1}_{\{|z-y|\leq r\}}\,\mrm{d}z\; \mrm{d}\mu(y)= \pi r^2 \int\overline{f} \mrm{d}\mu.
\end{aligned}
\end{equation*}
Thus, as $\pi r^2$ is a constant depending only on the choice of $\psi$ through $\supp(\psi)\subset B_r(0)$, one can instead use the following simpler dual pairing to effect the correspondence between $W(\m{M},L^p)$ and $W(C_0,L^q)^*$:
\begin{equation*}
\langle \mu,f\rangle= \int\overline{f} \mrm{d}\mu= \sum_{\lambda\in\supp(\mu)}\overline{f(\lambda)}\mu(\{\lambda\}),
\end{equation*}
for $\mu\in W(\m{M},L^p)$ and $f\in W(C_0,L^q)$. Therefore, definition \eqref{eq:M-w*-conv} of weak-* convergence in $\msc{M}_s^p$ corresponds precisely to convergence in the weak-* topology on $W(\m{M},L^p)$ (i.e., the weak topology generated by $W(C_0,L^q)$).

Finally, in the special case of weak convergence of subsets of $\R^2$ defined in \eqref{eq:set-weak-conv}, following \cite[p. 398]{Groechenig2015}, we have that, if $\inf_{n\in\N}\sep(\Lambda_n)>0$, then weak convergence of subsets $\Lambda_n\xrightarrow[]{w} \Lambda$ is equivalent to $\sum_{\lambda\in\Lambda_n}\delta_\lambda\to\sum_{\lambda\in\Lambda}\delta_\lambda$ in the weak-* topology $W(C_0,L^1)$.


\section{Proof of Theorem  \ref{CritThm} }

As already mentioned in the introduction, the proof of  Theorem  \ref{CritThm}  relies on the theory of interpolation of entire functions. The idea for the proof is based on \cite[Thm. 1]{Aubel2015},  where the lower bound (analogous to the left-hand side of \eqref{eq: identifiability set of operators}), however, depends in a non-explicit manner on the supports of the individual measures in the identifiability condition. As our goal is to obtain an explicit lower bound, namely, a constant multiple of the minimum separation of the supports, our theorem needs to be stated in terms of the class density (according to Definition \ref{def:UBcd}) instead of simply considering the standard Beurling density (according to Definition \ref{def:standard-B-dens}) of the supports of the individual measures in the class. This difference will also require us to delve deeper into the interpolation theory underlying the proof of \cite[Thm. 1]{Aubel2015}.

We begin our exposition of the required technical tools by defining the Weierstrass $\sigma_\gamma$-function associated with $\Omega_\gamma=\{\omega_{m,n}=\gamma(m+in)\}_{m,n\in\Z}\,$:
\begin{equation*}
\sigma_\gamma(z)=z\hspace*{-1em}\prod_{(m,n)\in\Z^2\setminus\{(0,0)\}}\left(1-\frac{z}{\omega_{m,n}}\right)\exp{\left(\frac{z}{\omega_{m,n}}+\frac{1}{2}\frac{z^2}{\omega_{m,n}^2}\right)},\quad z\in\C.
\end{equation*}
We will need several basic facts about this function, which can be found in  \cite{Kehe2012} along with a more detailed account of its properties. Concretely, we note that the infinite product in the definition of $\sigma_\gamma$ converges absolutely uniformly on compact subsets of $\C$, and therefore defines an entire function. Moreover, $\sigma_\gamma$ satisfies the following growth estimate: 

\begin{lem}[\!\!{\cite[Cor. 1.21]{Kehe2012}}] \label{vanillaSigmaBd} We have $ |\sigma_\gamma(z)|e^{-\frac{\pi}{2}\gamma^{-2}|z|^2}\asymp_{\,\gamma} d(z,\Omega_\gamma)$, where $d(z,\Omega_\gamma)=\min\{|z-\omega|: \omega\in\Omega_\gamma\}$ denotes the Euclidean distance from $z$ to the lattice $\Omega_\gamma$.
\end{lem}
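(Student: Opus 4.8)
The plan is to reduce everything to the case $\gamma=1$ and then combine the classical quasi-periodicity of the Weierstrass $\sigma$-function with a compactness argument on the torus $\C/\Omega_1$. First I would record the homogeneity relation: substituting $\omega_{m,n}=\gamma(m+in)$ into the product defining $\sigma_\gamma$ and replacing $z$ by $z/\gamma$ factor by factor gives $\sigma_\gamma(z)=\gamma\,\sigma_1(z/\gamma)$, where $\sigma_1$ is the $\sigma$-function of the Gaussian integer lattice $\Omega_1=\Z+i\Z$. Since simultaneously $d(z,\Omega_\gamma)=\gamma\,d(z/\gamma,\Omega_1)$ and $\tfrac{\pi}{2}\gamma^{-2}|z|^2=\tfrac{\pi}{2}|z/\gamma|^2$, it suffices to prove
\[ \phi(z)\coleqq|\sigma_1(z)|\,e^{-\frac{\pi}{2}|z|^2}\asymp d(z,\Omega_1),\qquad z\in\C, \]
with absolute constants; undoing the scaling then yields the lemma (indeed with constants independent of $\gamma$).

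The heart of the argument is to show that $\phi$ is $\Omega_1$-periodic. For this I would invoke the quasi-periodicity $\sigma_1(z+\omega)=\chi(\omega)\,e^{\eta(\omega)(z+\omega/2)}\sigma_1(z)$, valid for $\omega\in\Omega_1$ with $\chi(\omega)\in\{\pm1\}$ and $\eta$ the additive Weierstrass quasi-period map. For the \emph{square} lattice, Legendre's relation $\eta(1)\cdot i-\eta(i)\cdot 1=2\pi i$ together with the $i$-symmetry $\eta(i\omega)=-i\,\eta(\omega)$ forces $\eta(1)=\pi$, hence $\eta(\omega)=\pi\bar\omega$ for every $\omega\in\Omega_1$. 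Taking absolute values in the quasi-periodicity relation and using $|z+\omega|^2=|z|^2+2\Re(\bar\omega z)+|\omega|^2$, one checks the exact identity $\Re\!\big(\eta(\omega)(z+\omega/2)\big)=\tfrac{\pi}{2}\big(|z+\omega|^2-|z|^2\big)$, so the exponential factor picked up by $|\sigma_1|$ cancels precisely the change in $e^{-\frac{\pi}{2}|\cdot|^2}$. Therefore $\phi(z+\omega)=\phi(z)$ for all $\omega\in\Omega_1$, and $\phi$ descends to a continuous function on the compact torus $\C/\Omega_1$.

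To finish I would analyze $\phi$ at the lattice points. The zeros of $\sigma_1$ are exactly the points of $\Omega_1$ and are all simple, so $\phi$ vanishes precisely on $\Omega_1$ and to first order there; consequently $g(z)\coleqq\phi(z)/d(z,\Omega_1)$ is continuous and strictly positive on $\C\setminus\Omega_1$. Near $0$ we have $\sigma_1(z)=z+O(z^5)$, hence $\phi(z)/|z|\to 1$ as $z\to 0$, and by the $\Omega_1$-periodicity of $\phi$ the same limit holds at every lattice point $\omega$ (where $d(\cdot,\Omega_1)=|\cdot-\omega|$ locally); thus $g$ extends continuously with $g(\omega)=1$. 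So $g$ is continuous, positive, and $\Omega_1$-periodic, and therefore attains a strictly positive minimum and a finite maximum on the closed fundamental domain $[0,1]^2$; by periodicity $0<c\le g(z)\le C<\infty$ on all of $\C$, which is exactly $\phi(z)\asymp d(z,\Omega_1)$, and the scaling step closes the proof.

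I expect the main obstacle to be the periodicity step: one must pin down the quasi-period constant $\eta(1)=\pi$ specifically for the square lattice — this genuinely requires Legendre's relation and the lattice's invariance under multiplication by $i$, rather than a direct evaluation of $\eta(1)$ — and then verify that the cancellation making $\phi$ lattice-periodic is \emph{exact}, which uses $\eta(1)=\pi$ and not merely $\eta(1)\in\R$. Once the periodicity of $\phi$ is established, the scaling reduction and the removable-singularity/compactness argument are routine.
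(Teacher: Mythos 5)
Your proof is correct, and it is essentially the standard argument behind the cited result \cite[Cor.~1.21]{Kehe2012} (the paper itself only cites this lemma rather than proving it): reduce to $\gamma=1$ by the homogeneity $\sigma_\gamma(z)=\gamma\,\sigma_1(z/\gamma)$, use the quasi-periodicity of $\sigma_1$ with $\eta(\omega)=\pi\bar\omega$ (pinned down via Legendre's relation and the $i$-invariance of the square lattice) to make $|\sigma_1(z)|e^{-\pi|z|^2/2}$ lattice-periodic, and conclude by compactness and the simplicity of the zeros. All the key identities you invoke check out, including the exact cancellation $\Re\bigl(\eta(\omega)(z+\omega/2)\bigr)=\tfrac{\pi}{2}\bigl(|z+\omega|^2-|z|^2\bigr)$.
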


\noindent 
In order to enable working with measures $\mu$ whose supports are not subsets of lattices, we will need to perturb the zeros of the Weierstrass $\sigma_\gamma$-function. We will do so following \cite{Kehe2012} and \cite[p. 109]{Seip1992-2}. Concretely, let $\m{I}\subset \Z\times\Z$ be an index set with $(0,0)\in\m{I}$, and let $\Lambda=\{\lambda_{m,n}\}_{(m,n)\in\m{I}}$ be a discrete subset of $\C$ with $\lambda_{m,n}\neq 0$ for $(m,n)\in \m{I}\setminus\{(0,0)\}$. We now define the modified Weierstrass function associated with $\Lambda$ by
\begin{equation}\label{ModifiedSigmaDef}
g_\Lambda(z)=(z-\lambda_{0,0})\, \hspace*{-1em}\prod_{(m,n)\in\m{I}\setminus\{(0,0)\}}\left(1-\frac{z}{\lambda_{m,n}}\right)\exp{\left(\frac{z}{\lambda_{m,n}}+\frac{1}{2}\frac{z^2}{\omega_{m,n}^2}\right)},\quad z\in\C.
\end{equation}
According to \cite[Lem. 4.21]{Kehe2012}, provided there exist $\gamma>0$ and $R>0$ such that $\Lambda$ is $R$-uniformly close to $\Omega_\gamma$, expression \eqref{ModifiedSigmaDef} converges uniformly on compact subsets of $\C$ to an entire function with zero set $\Lambda$.
The proof of Theorem  \ref{CritThm} relies on constructing and controlling the growth of an entire function interpolating a sequence of values $\{\beta_\lambda\}_{\lambda\in\Lambda}$ at the points of $\Lambda=\supp(\mu_1)\cup\supp(\mu_2)$, where $\mu_1,\mu_2\in \msc{M}_s^p$ are the measures for which \eqref{eq: identifiability set of operators} is to be established. This will be accomplished by means of ``basis functions'' that interpolate the one-hot sequences $\{\mathds{1}_{\{\lambda=\lambda'\} }\}_{\lambda\in\Lambda}$, for $\lambda'\in\Lambda$. The following lemma furnishes a prototype for these basis functions, obtained by ``dividing out'' a zero of the modified Weierstrass function associated with $\Lambda$, as well as a growth bound reminiscent of \cite{Kehe2012} and \cite[\S 2.2]{Seip1992-2}, with the crucial difference that our bound makes the dependence on the mutual separation of $\supp(\mu_1)$ and $\supp(\mu_2)$ explicit. The proof of the lemma largely follows \cite{Kehe2012}, the only difference being that we need to take the specific form $\Lambda=\supp(\mu_1)\cup\supp(\mu_2)$ of $\Lambda$ into account, carrying out the calculations more explicitly to extract the dependence on the mutual separation of $\supp(\mu_1)$ and $\supp(\mu_2)$.

\begin{lem}\label{massiveLemma}
Let $\Lambda=\{\lambda_{m,n}\}_{(m,n)\in\m{I}}$ be a relatively separated subset of $\C$ with $\lambda_{0,0}=0$. Furthermore, let $\rho$, $s$, $\theta$, $\gamma$, and $R$ be positive real numbers, and set $\Omega_\gamma=\{\omega_{m,n}=\gamma(m+in)\}_{m,n\in\Z}\,$. Define $\m{I}_s=\{(m,n)\in\m{I}:|\lambda_{m,n}|\leq \frac{s}{2}\}$ and suppose that
\begin{enumerate}[(i)]
\item $\#(\m{I}_s)\leq 2$, and, if $\m{I}_s=\{(0,0), (m',n')\}$, then $|\lambda_{m',n'}|\geq \rho\,$,
\item $n^+(\Lambda,(0,R')^2)\leq \theta R'\,^2$, for all $R'\geq R$, and
\item $|\lambda_{m,n}-\omega_{m,n}|\leq R$, for all $(m,n)\in\m{I}$.
\end{enumerate}
Now, let $g_\Lambda$ be given by \eqref{ModifiedSigmaDef} and define $\wtd{g}_{\Lambda}:\C\to\C$ according to 
\begin{equation}\label{eq:massive-def-gtilde}
\wtd{g}_\Lambda(z)=\frac{g_\Lambda(z)}{z}\prod_{(m,n)\in \m{I}_s} \hspace*{-2mm} \exp\left(\frac{z}{\omega_{m,n}}-\frac{z}{\lambda_{m,n}}\right)\prod_{(m,n)\in\Z^2\setminus\m{I}}\left(1-\frac{z}{\omega_{m,n}}\right)\exp{\left(\frac{z}{\omega_{m,n}}+\frac{1}{2}\frac{z^2}{\omega_{m,n}^2}\right)} .
\end{equation}
Then
\begin{itemize}
\item[(a)] $\wtd{g}_\Lambda(0)=1$ and $\wtd{g}_\Lambda(\lambda_{m,n})=0$, for $(m,n) \in \m{I}\setminus\{(0,0)\}$, and
\item[(b)] there exist constants $C>0$ and $c>0$ depending only on $s$, $\theta$, $\gamma$, and $R$ such that
\begin{equation}\label{longLem-statement}
|\wtd{g}_\Lambda(z)|e^{-\frac{\pi}{2}\gamma^{-2}|z|^2}\leq C (\rho\wedge 1)^{-1} 
e^{c|z|\log{|z|}},\quad\text{for all }z\in \C.
\end{equation}
\end{itemize}
\end{lem}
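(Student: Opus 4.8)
The plan is to handle (a) by direct inspection of the defining product \eqref{eq:massive-def-gtilde} and (b) by comparing $\wtd g_\Lambda$ against $\sigma_\gamma/z$, whose Gaussian growth is already pinned down by Lemma \ref{vanillaSigmaBd}. For (a): at $z=0$ every factor in \eqref{eq:massive-def-gtilde} equals $1$, and because $\lambda_{0,0}=0$ the quotient $g_\Lambda(z)/z=\prod_{(m,n)\in\m I\setminus\{(0,0)\}}(1-z/\lambda_{m,n})e^{z/\lambda_{m,n}+z^2/(2\omega_{m,n}^2)}$ is entire with value $1$ at the origin, whence $\wtd g_\Lambda(0)=1$; for $(m,n)\in\m I\setminus\{(0,0)\}$ the factor $(1-z/\lambda_{m,n})$ inside $g_\Lambda(z)/z$ vanishes at $z=\lambda_{m,n}$ and no other factor has a pole there, so $\wtd g_\Lambda(\lambda_{m,n})=0$. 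Entireness of $\wtd g_\Lambda$ follows from \cite[Lem. 4.21]{Kehe2012} for $g_\Lambda$ (hypothesis (iii) being precisely $R$-uniform closeness of $\Lambda$ to $\Omega_\gamma$), the finiteness of the $\m I_s$-product, and the uniform convergence on compacta of the $\Z^2\setminus\m I$-product, which is a sub-product of the one defining $\sigma_\gamma$; the apparent poles at $\omega_{m,n}$, $(m,n)\in\m I$, cancel, as the rewriting below shows.

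For (b) the first step is the algebraic identity
\begin{equation*}
\wtd g_\Lambda(z)=\frac{\sigma_\gamma(z)}{z}\,E_s(z)\,Q(z),
\end{equation*}
with $E_s(z)=\prod_{(m,n)\in\m I_s\setminus\{(0,0)\}}\frac{\omega_{m,n}(\lambda_{m,n}-z)}{\lambda_{m,n}(\omega_{m,n}-z)}$ (a product of at most one factor, by (i)) and $Q(z)=\prod_{(m,n)\in\m I\setminus\m I_s}\frac{(1-z/\lambda_{m,n})e^{z/\lambda_{m,n}}}{(1-z/\omega_{m,n})e^{z/\omega_{m,n}}}$; this is obtained by inserting $\sigma_\gamma(z)/z=\prod_{(m,n)\neq(0,0)}(1-z/\omega_{m,n})e^{z/\omega_{m,n}+z^2/(2\omega_{m,n}^2)}$, splitting the index set into $\m I_s\setminus\{(0,0)\}$, $\m I\setminus\m I_s$ and $\Z^2\setminus\m I$, and noting that the regularizers $\tfrac12 z^2/\omega_{m,n}^2$ of $\sigma_\gamma$ and of $g_\Lambda$ cancel on the common index set $\m I$. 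One then estimates the three factors. By Lemma \ref{vanillaSigmaBd} and $d(z,\Omega_\gamma)\le|z|$ (since $0\in\Omega_\gamma$) one has $|\sigma_\gamma(z)/z|\,e^{-\frac{\pi}{2}\gamma^{-2}|z|^2}\lesssim_\gamma d(z,\Omega_\gamma)/|z|\le 1$, and the factor $d(z,\Omega_\gamma)$ is held in reserve: since $\Omega_\gamma$ is $\gamma$-separated, $z$ lies within $\gamma/2$ of at most one lattice point, so $E_sQ$ has at most one pole in that neighbourhood. The factor $E_s$ supplies the explicit $(\rho\wedge1)^{-1}$: when $\m I_s=\{(0,0),(m',n')\}$ one has $|\lambda_{m',n'}|\in[\rho,s/2]$ and $|\omega_{m',n'}|\in[\gamma,s/2+R]$, hence $|\omega_{m',n'}/\lambda_{m',n'}|\le(s/2+R)/\rho\lesssim_{s,R}(\rho\wedge1)^{-1}$, while $|\lambda_{m',n'}-z|\le|z|+s/2$ and $|\omega_{m',n'}-z|^{-1}$ is either at most $2/\gamma$ or is absorbed by the reserved $d(z,\Omega_\gamma)$.

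The heart of the argument — and the expected main obstacle — is the bound on the bulk product $Q$, which is where the $e^{c|z|\log|z|}$ growth is created and where all the constants must be forced to depend on $s,\theta,\gamma,R$ only. Fixing an absolute $K$ (say $K=10$), split $\m I\setminus\m I_s$ into a \emph{far} range $|\omega_{m,n}|>K|z|$ and a \emph{near} range $|\omega_{m,n}|\le K|z|$. In the far range, expanding $\log\frac{(1-z/\lambda)e^{z/\lambda}}{(1-z/\omega)e^{z/\omega}}=\sum_{k\ge2}\tfrac1k z^k(\omega^{-k}-\lambda^{-k})$ and using $|\lambda_{m,n}-\omega_{m,n}|\le R$, $|\lambda_{m,n}|\asymp|\omega_{m,n}|$ gives $\bigl|\sum_{\mathrm{far}}\log(\cdot)\bigr|\lesssim_{R,\gamma}|z|^2\sum_{|\omega|>K|z|}|\omega|^{-3}\lesssim_{R,\gamma}|z|$ (this uses the second-order cancellation $\log((1-w)e^{w})=O(w^2)$). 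In the near range, write each factor as $\frac{|\lambda_{m,n}-z|\,|\omega_{m,n}|}{|\lambda_{m,n}|\,|\omega_{m,n}-z|}\,e^{\Re(z/\lambda_{m,n}-z/\omega_{m,n})}$ and control the three pieces using $|\lambda_{m,n}|>s/2$, $|\lambda_{m,n}-\omega_{m,n}|\le R$, the elementary lattice sums $\sum_{0<|\omega|\le T}|\omega|^{-1}\lesssim_\gamma T$ and $\sum_{0<|\omega|\le T}|\omega|^{-2}\lesssim_\gamma\log T$, and the annulus count $\#\{\omega\in\Omega_\gamma:|\omega-z|\in[j,j+1)\}\lesssim_\gamma j$: the $|\omega|^{-2}$-sum, against the $\Re(z/\lambda-z/\omega)$ piece, produces the decisive $\log|z|$ factor; the unique lattice point within $\gamma/2$ of $z$ (if any) contributes at most a polynomial in $|z|$ divided by $d(z,\Omega_\gamma)$, which is killed by the reserved $d(z,\Omega_\gamma)$; and all remaining factors multiply out to $e^{O_{R,\gamma,s}(|z|\log|z|)}$, the finitely many slots with $|\omega_{m,n}|<2R$ being absorbed into the constant. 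Multiplying the three estimates yields \eqref{longLem-statement} for $|z|\ge T_0=T_0(s,\gamma,R)$, and the compact range $|z|\le T_0$ is disposed of by the maximum modulus principle applied to the entire function $\wtd g_\Lambda$ on $\{|z|\le T_0\}$, using the bound already obtained on the boundary circle (after harmlessly enlarging $C$ and $c$). The genuinely delicate point is this last bookkeeping: keeping the displacement of the zeros from $\Omega_\gamma$ to $\Lambda$ at cost $e^{c|z|\log|z|}$ rather than $e^{c|z|^2}$ while matching the single dangerously close pole against the one $d(z,\Omega_\gamma)$ furnished by $\sigma_\gamma$, with hypotheses (ii) and (iii) being exactly what keep $\Lambda$ tame enough for this.
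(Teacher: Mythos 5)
Your proposal is correct and follows essentially the same route as the paper: factor $\wtd g_\Lambda$ as $\sigma_\gamma(z)/z$ times a correction, invoke Lemma \ref{vanillaSigmaBd}, extract $(\rho\wedge 1)^{-1}$ from the single short factor permitted by hypothesis (i), get $e^{O(|z|)}$ in the far range from the second-order cancellation in $\log((1-w)e^w)$, and produce the decisive $|z|\log|z|$ from the lattice sum $\sum_{|\omega|\le K|z|}|\omega|^{-2}\lesssim_\gamma\log|z|$ applied to the first-order exponential corrections, with the one dangerously close pole cancelled against the $d(z,\Omega_\gamma)$ furnished by the Weierstrass estimate. The only organizational deviations — reserving only $d(z,\Omega_\gamma)$ rather than the ratio $d(z,\Omega_\gamma)/d(z,\Lambda)$, and dispatching the compact range $|z|\le T_0$ by the maximum modulus principle instead of the paper's direct estimate of $h_6h_7$ — are harmless simplifications, not a different argument.
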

\noindent The proof of Lemma \ref{massiveLemma} can be found in the appendix.

The next preparatory step towards the proof of Theorem \ref{CritThm} is to relate Gabor systems generated by $\varphi(t)=2^{\frac{1}{4}}e^{-\pi t^2}$ with entire functions of suitably bounded growth by means of the Bargmann transform. Concretely, we will work with a definition of the Bargmann transform consistent with \cite{GroWal2001} in order to facilitate arguments involving the isometry property between modulation spaces and Bargmann-Fock spaces introduced next. 
For conjugate indices $p,q \in [1, \infty]$, the Bargmann-Fock space $\BFSp{p}{\C}$ is defined 
 as the set of all entire functions $F$ for which $\norm{F}_{\BFSp{p}{\C}}<\infty$, where
\begin{equation*}
	\norm{F}_{\BFSp{p}{\C}} \coleqq \left(\int_\C \abs{F\left(z\right)}^p e^{-p\pi \abs{z}^2/2} \mrm{d}z\right)^{1/p}, \quad\text{for }p\in[1,\infty),
\end{equation*}
and
\begin{equation*}
\norm{F}_{\BFSp{\infty}{\C}}\coleqq \sup_{z\in\C}{\abs{F(z)}e^{-\pi \abs{z}^2/2}}.
\end{equation*}
The Bargmann transform is now defined as the linear map $\bargmann \colon \MSp{p}{\R}\to\BFSp{p}{\C}$  given by
\begin{equation*}
(\bargmann f)(z)=2^{\frac{1}{4}}e^{-\pi z^2/2}\int_{\R}e^{2\pi tz -\pi t^2}f(t)\mrm{d}t, \quad z\in\C.
\end{equation*}
According to \cite[\S 1.4]{GroWal2001}, the Bargmann transform is an isometric isomorphism between the Banach spaces $\MSp{p}{\R}$  and $\BFSp{p}{\C}$, i.e., it is bijective and 
\begin{equation}\label{eq:Btrans-isom}
\norm{\bargmann f}_{\BFSp{p}{\C}} =\|f\|_{\MSp{p}{\R}}, \quad\text{for all }f\in \MSp{p}{\R}.
\end{equation}
Following \cite{GrycKemp2011}, when $p\in [1,\infty)$, the topological dual of $\BFSp{p}{\C}$ can be identified with $\BFSp{q}{\C}$ via the pairing
\begin{equation*}
\innerProd{F}{G}_{\BFSp{p}{\C}\times\BFSp{q}{\C}}=\int_{\C}F(z)\overline{G(z)}e^{-\pi\abs{z}^2}\mrm{d}z,
\end{equation*}
 for $F\in \BFSp{p}{\C}$ and $G\in\BFSp{q}{\C}$.

The following lemma is a generalization (from $L^2(\R)$ to $M^q(\R)$) of the standard identity \cite[Prop. 3.4.1]{Groechenig2000} relating the Bargmann transform with time-frequency shifts of the gaussian $\varphi(t)=2^{\frac{1}{4}} e^{-\pi t^2}$.

\begin{lem}\label{STFT=Barg-lem}
Let $q\in[1,\infty)$. Then, for every $y\in M^q(\R)$ and $\lambda=\tau+i\nu\in\C$, we have
\begin{equation*}
\langle y,\pi(\lambda)\varphi \rangle_{M^q(\R)\times M^p(\R)}=e^{-\pi i \tau\nu}e^{-\pi |\lambda|^2/2}(\bargmann y)(\bar{\lambda}).
\end{equation*}
\end{lem}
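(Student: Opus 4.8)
The plan is to verify the identity by a direct computation, starting from the definitions of both sides and reducing the claim to the known relation \cite[Prop. 3.4.1]{Groechenig2000} on $L^2(\R)$ together with the dual pairings \eqref{MSDualPairing} and the one defining the dual of $\BFSp{p}{\C}$. First I would expand the left-hand side using \eqref{MSDualPairing}, writing $\langle y,\pi(\lambda)\varphi\rangle_{M^q(\R)\times M^p(\R)} = \langle \m{V}_\varphi y, \m{V}_\varphi(\pi(\lambda)\varphi)\rangle_{L^q(\C)\times L^p(\C)}$. The inner STFT $\m{V}_\varphi(\pi(\lambda)\varphi)$ is the STFT of a time-frequency shift of the window, which by the standard covariance property of the STFT \cite[Lem. 3.1.3]{Groechenig2000} equals a modulated-and-translated copy of $\m{V}_\varphi\varphi$, and $\m{V}_\varphi\varphi$ itself is a Gaussian with a known closed form. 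This reduces everything to a Gaussian integral in the $L^q$--$L^p$ pairing.

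Alternatively — and I expect this to be the cleaner route — I would compare both sides against the $p=q=2$ case, where the identity is exactly \cite[Prop. 3.4.1]{Groechenig2000} (rephrased through the unitarity of $\bargmann$ on $L^2$). Concretely, fix $\lambda=\tau+i\nu$ and consider the two $\C$-linear functionals on $M^q(\R)$ given by $y\mapsto \langle y,\pi(\lambda)\varphi\rangle_{M^q(\R)\times M^p(\R)}$ and $y\mapsto e^{-\pi i\tau\nu}e^{-\pi|\lambda|^2/2}(\bargmann y)(\bar\lambda)$. Using \eqref{eq:Btrans-isom}, the second functional is bounded on $M^q(\R)$ because point evaluation $F\mapsto F(\bar\lambda)$ is a bounded linear functional on $\BFSp{q}{\C}$ (it is given by pairing against the reproducing-kernel-type element $z\mapsto e^{\pi z\lambda}$, which lies in $\BFSp{p}{\C}$, via the pairing $\innerProd{\cdot}{\cdot}_{\BFSp{q}{\C}\times\BFSp{p}{\C}}$; note the roles of $p$ and $q$ are swapped relative to the statement of the duality in the excerpt, which is consistent since $\bargmann y\in\BFSp{q}{\C}$). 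The first functional is bounded by Proposition \ref{wellDefProp} (or directly by \eqref{MSDualPairing} and Hölder). Both functionals agree on the dense subspace $M^1(\R)\subset M^q(\R)$ — indeed they agree on the Schwartz class, hence on $L^2\cap M^q$, by \cite[Prop. 3.4.1]{Groechenig2000}, and $\MSp{1}{\R}$ is dense in $M^q(\R)$ for $q<\infty$ \cite[Prop. 11.3.4]{Groechenig2000}. By continuity they agree on all of $M^q(\R)$.

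Writing this out: I would first record the $L^2$ identity, namely that for $f\in L^2(\R)$ one has $\langle f,\pi(\lambda)\varphi\rangle_{L^2(\R)} = e^{-\pi i\tau\nu}e^{-\pi|\lambda|^2/2}(\bargmann f)(\bar\lambda)$, which is \cite[Prop. 3.4.1]{Groechenig2000} after unwinding the normalization of $\bargmann$ from \cite{GroWal2001}; since the dual pairing $\langle\cdot,\cdot\rangle_{M^q(\R)\times M^p(\R)}$ restricts to the $L^2$ inner product on $L^2\cap M^q$, and $\pi(\lambda)\varphi\in\schwartzSpace{\R}\subset M^p(\R)$, this gives the claim on $L^2\cap M^q(\R)\supset M^1(\R)$. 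Then I would invoke density of $M^1(\R)$ in $M^q(\R)$ and joint continuity of the two sides in $y$ to conclude. The main obstacle — really the only subtle point — is bookkeeping the normalization constants in the definition of $\bargmann$ so that the phase factor $e^{-\pi i\tau\nu}$ and the Gaussian factor $e^{-\pi|\lambda|^2/2}$ come out exactly right, and making sure the reproducing property is invoked with the correct conjugate index; the functional-analytic extension from $L^2$ to $M^q$ is then routine given the tools already assembled in the excerpt.
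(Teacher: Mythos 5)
Your proposal is correct and follows essentially the same route as the paper: establish the identity on a dense subspace (Schwartz functions, where the dual pairing reduces to the $L^2$ inner product via the orthogonality relations and \cite[Prop.~3.4.1]{Groechenig2000} applies), then extend to all of $M^q(\R)$ by density, continuity of the dual pairing, the isometry \eqref{eq:Btrans-isom}, and boundedness of point evaluation on $\BFSp{q}{\C}$. The only cosmetic difference is that the paper phrases the extension as a limit of a sequence $y_n\to y$ rather than as equality of two bounded functionals, which is the same argument.
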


Before finally embarking on the proof of Theorem  \ref{CritThm}, we state the following two lemmas about abstract Banach spaces and Wiener amalgam spaces that will facilitate the application of the more specialized theory of Bargmann-Fock spaces. Their proofs can be found in the appendix.

\begin{lem}\label{ShortTheoryLemma}
Let $\m{A}:X\to Y$ be a continuous linear operator between Banach spaces $X$ and $Y$. We then have the following:
\begin{itemize}
\item[(i)] If $\m{A}$ is bounded below (i.e. there exists a $c>0$ such that $\|\mathcal{A}x\|\geq c\|x\|$, for all $x\in X$), then the adjoint $\mathcal{A}^*:Y^*\to X^*$ is surjective.
\item[(ii)] Suppose that there exists a constant $a>0$ such that, for every $f\in X^*$, there is a $g\in Y^*$ with $\m{A}^* g=f$ and $a\|g\|_{Y^*}\leq \|f\|_{X^*}$. Then $\m{A}$ is bounded from below by $a$.
\end{itemize}
\end{lem}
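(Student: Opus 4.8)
The plan is to prove Lemma \ref{ShortTheoryLemma}, which is a pair of standard functional-analytic facts about the adjoint of a continuous linear operator $\m{A}:X\to Y$ between Banach spaces; both parts are essentially classical and the proof amounts to invoking the right version of the closed range / open mapping machinery and a Hahn--Banach argument.

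For part (i), I would first observe that if $\m{A}$ is bounded below by $c>0$, then $\m{A}$ is injective and its range $\m{A}(X)$ is closed: indeed, if $\m{A}x_n\to y$ is Cauchy in $Y$, then $\|x_n-x_m\|\leq c^{-1}\|\m{A}x_n-\m{A}x_m\|$ shows $\{x_n\}$ is Cauchy in $X$, hence $x_n\to x$ and $y=\m{A}x$ by continuity. By the closed range theorem (e.g.\ \cite[Thm.~4.14]{Rudin1991}, though the excerpt does not cite it --- I would cite whatever standard functional analysis text is available, or simply prove it), $\m{A}$ has closed range if and only if $\m{A}^*$ has closed range, and moreover $\mathrm{ran}(\m{A}^*)=(\ker \m{A})^\perp$. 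Since $\m{A}$ is injective, $\ker\m{A}=\{0\}$, so $(\ker\m{A})^\perp=X^*$, and therefore $\m{A}^*$ is surjective. Alternatively, to keep the argument self-contained, I would argue directly: given $f\in X^*$, define a functional on the closed subspace $\m{A}(X)\subset Y$ by $\m{A}x\mapsto f(x)$ (well-defined by injectivity of $\m{A}$), bounded by $c^{-1}\|f\|$ since $|f(x)|\leq \|f\|\,\|x\|\leq c^{-1}\|f\|\,\|\m{A}x\|$; then extend it by Hahn--Banach to some $g\in Y^*$ with $\|g\|\leq c^{-1}\|f\|$, and check that $\m{A}^*g = f$. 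This self-contained version actually yields slightly more than surjectivity (a norm bound), which dovetails nicely with part (ii).

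For part (ii), the hypothesis is the converse-type statement: for every $f\in X^*$ there exists $g\in Y^*$ with $\m{A}^*g=f$ and $a\|g\|_{Y^*}\leq \|f\|_{X^*}$. I would fix $x\in X$ and use the Hahn--Banach corollary that there is an $f\in X^*$ with $\|f\|_{X^*}=1$ and $f(x)=\|x\|_X$. Picking the corresponding $g\in Y^*$ with $\m{A}^*g=f$ and $\|g\|_{Y^*}\leq a^{-1}\|f\|_{X^*}=a^{-1}$, we compute
\begin{equation*}
\|x\|_X = f(x) = (\m{A}^*g)(x) = g(\m{A}x)\leq \|g\|_{Y^*}\,\|\m{A}x\|_Y\leq a^{-1}\|\m{A}x\|_Y,
\end{equation*}
which rearranges to $\|\m{A}x\|_Y\geq a\|x\|_X$, i.e.\ $\m{A}$ is bounded below by $a$.

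I do not anticipate a genuine obstacle here --- both parts are textbook-level --- so the only real decision is presentational: whether to invoke the closed range theorem as a black box for part (i), or to give the short Hahn--Banach extension argument directly. I would lean toward the direct argument since it keeps the paper self-contained, makes part (i) and part (ii) look like mirror images of one another (Hahn--Banach extension vs.\ Hahn--Banach norming functional), and avoids having to pin down precisely which form of the closed range theorem is needed. The one small point to be careful about is the well-definedness and boundedness of the functional $\m{A}x\mapsto f(x)$ on $\m{A}(X)$ in part (i), which is exactly where the bounded-below hypothesis is used twice (once for injectivity, once for the norm bound).
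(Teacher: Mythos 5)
Your proposal is correct and follows essentially the same route as the paper: for (i) the paper also restricts $\m{A}$ to an isomorphism onto its (closed) range, inverts the adjoint of that restriction (via the inverse mapping theorem, where you construct the preimage functional $\m{A}x\mapsto f(x)$ explicitly), and extends by Hahn--Banach; for (ii) the paper performs the identical pairing computation $\|\m{A}x\|\geq\|g\|^{-1}|\langle x,\m{A}^*g\rangle|\geq a|\langle x,f\rangle|$ and then takes a supremum over unit-norm $f$, which is the same use of the Hahn--Banach norming functional as your version. The only cosmetic difference is that your direct argument for (i) also yields the quantitative bound $\|g\|\leq c^{-1}\|f\|$, which the paper does not state but does not need.
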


\begin{lem}\label{IrregConvLem}
Let $p\in[1,\infty)$ and $\Lambda\subset \R^2$ a separated subset and set $s=\sep(\Lambda)>0$. Then
\begin{equation*}
 \Big\| \sum_{\lambda\in\Lambda} \alpha_\lambda \,f(\,\cdot-\lambda) \Big\|_{L^p(\R^2)}\lesssim_{\,p,s} \|f\|_{W(L^\infty,L^1)}\|\alpha\|_{\ell^p(\Lambda)},
\end{equation*}
for all $\{\alpha_\lambda\}_{\lambda\in \Lambda}\subset \ell^p (\Lambda)$ and $f\in W(L^\infty,L^1)$. 
\end{lem}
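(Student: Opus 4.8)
\emph{Proof plan.} The plan is to expand $f$ along the fixed partition of unity $\{\psi(\,\cdot-k)\}_{k\in\Z^2}$ underlying the Wiener-amalgam norm, to control each resulting piece by a convolution-type estimate that uses only the separation of $\Lambda$, and then to reassemble. Since $|\sum_{\lambda\in\Lambda}\alpha_\lambda f(\,\cdot-\lambda)|\leq\sum_{\lambda\in\Lambda}|\alpha_\lambda|\,|f|(\,\cdot-\lambda)$ pointwise and $\||f|\|_{W(L^\infty,L^1)}=\|f\|_{W(L^\infty,L^1)}$, it suffices to treat the case $\alpha_\lambda\geq 0$ and $f\geq 0$. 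Fix $r>0$ with $\supp\psi\subset B_r(0)$ and set $f_k\coleqq f\cdot\psi(\,\cdot-k)$, so that $f=\sum_{k\in\Z^2}f_k$, $\supp f_k\subset B_r(k)$, and $\|f_k\|_{L^\infty}\leq\|\psi\|_{L^\infty}\|f\|_{L^\infty(B_r(k))}$. With $g_k\coleqq\sum_{\lambda\in\Lambda}\alpha_\lambda f_k(\,\cdot-\lambda)$ one has $g\coleqq\sum_{\lambda\in\Lambda}\alpha_\lambda f(\,\cdot-\lambda)=\sum_{k\in\Z^2}g_k$ pointwise (all summands being non-negative, this is monotone convergence), and hence $\|g\|_{L^p}\leq\sum_{k\in\Z^2}\|g_k\|_{L^p}$; in particular the $L^p$-finiteness of $g$ will fall out of the estimates below.

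Next I would estimate a single $\|g_k\|_{L^p}$. Since $f_k(x-\lambda)\neq 0$ forces $\lambda\in B_r(x-k)$, and $\Lambda$ is $s$-separated, a volume-packing argument shows that at most $N=N(s,r)<\infty$ points of $\Lambda$ contribute to $g_k(x)$ for any given $x$ (with $N\lesssim_{\,s}1$, as $r$ is fixed). Hölder's inequality applied to this finite sum gives $g_k(x)\leq\|f_k\|_{L^\infty}N^{1/q}\big(\sum_{\lambda\in\Lambda\cap B_r(x-k)}\alpha_\lambda^p\big)^{1/p}$; raising to the $p$-th power, integrating in $x$, and applying Tonelli's theorem (each $\lambda$ contributes a set of $x$ of measure $\pi r^2$) yields $\|g_k\|_{L^p}\lesssim_{\,p,s}\|f_k\|_{L^\infty}\|\alpha\|_{\ell^p(\Lambda)}$. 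Summing over $k$ and invoking the bound on $\|f_k\|_{L^\infty}$ reduces the claim to the purely amalgam-space estimate
\begin{equation*}
\sum_{k\in\Z^2}\|f\|_{L^\infty(B_r(k))}\lesssim\|f\|_{W(L^\infty,L^1)}.
\end{equation*}

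Establishing this last inequality is the main obstacle, since the integrand $z\mapsto\|f\overline\psi(\,\cdot-z)\|_{L^\infty}$ defining the amalgam norm need not be comparable at nearby points of $\C$, so one cannot simply ``sample'' it on $\Z^2$. The way around this is to integrate the trivial lower bound $\|f\overline\psi(\,\cdot-z)\|_{L^\infty}\geq|f(y)|\,\psi(y-z)$ (valid for a.e.\ $y$, since $\psi\geq 0$) over $z$: for $y\in B_r(k)$ the function $z\mapsto\psi(y-z)$ is supported in $B_{2r}(k)$ and has integral $\|\psi\|_{L^1}>0$, so $\int_{B_{2r}(k)}\|f\overline\psi(\,\cdot-z)\|_{L^\infty}\mrm{d}z\geq\|\psi\|_{L^1}|f(y)|$ for a.e.\ $y\in B_r(k)$; taking the essential supremum over such $y$ gives $\|\psi\|_{L^1}\|f\|_{L^\infty(B_r(k))}\leq\int_{B_{2r}(k)}\|f\overline\psi(\,\cdot-z)\|_{L^\infty}\mrm{d}z$. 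Summing over $k$ and using that the balls $\{B_{2r}(k)\}_{k\in\Z^2}$ have bounded overlap (at most $(4r+1)^2$ of them meet any point) then produces the displayed estimate, with a constant depending only on $r$ and $\|\psi\|_{L^1}$, i.e.\ only on the fixed window $\psi$. Combining the three steps gives $\|g\|_{L^p}\lesssim_{\,p,s}\|f\|_{W(L^\infty,L^1)}\|\alpha\|_{\ell^p(\Lambda)}$, as desired.
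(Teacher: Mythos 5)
Your proof is correct, but it takes a genuinely different route from the paper's. The paper dominates $\sum_\lambda |\alpha_\lambda|\,|f(z-\lambda)|$ pointwise by a single convolution $A\ast\tilde f$, where $A$ is the step function obtained by smearing the coefficients over a tiling of $\R^2$ by squares of diameter $s$ (each containing at most one point of $\Lambda$) and $\tilde f(z)=\max_{|w|\leq s}|f(z+w)|$ is a local maximal function; Young's inequality $\|A\ast\tilde f\|_{L^p}\leq\|\tilde f\|_{L^1}\|A\|_{L^p}$ then finishes, using $\|A\|_{L^p}\asymp_s\|\alpha\|_{\ell^p}$ and the (implicit) bound $\|\tilde f\|_{L^1}\lesssim\|f\|_{W(L^\infty,L^1)}$. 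You instead decompose the \emph{window} $f=\sum_k f\psi(\,\cdot-k)$ into its amalgam pieces, control each localized sum by Hölder over the $O_s(1)$ points of $\Lambda$ that can contribute at any given $x$ plus Tonelli, and resum via the discretized amalgam norm $\sum_k\|f\|_{L^\infty(B_r(k))}\lesssim\|f\|_{W(L^\infty,L^1)}$, which you prove by integrating the pointwise bound $|f(y)|\psi(y-z)\leq\|f\psi(\,\cdot-z)\|_{L^\infty}$ in $z$. Both arguments rest on the same two pillars — the separation of $\Lambda$ and the equivalence of continuous and sampled amalgam norms — so neither is more general; yours is more elementary (no Young's inequality, and you make the discretization step fully explicit, which the paper's proof leaves implicit in $\|\tilde f\|_{L^1}\lesssim\|f\|_{W(L^\infty,L^1)}$) at the cost of being longer. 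The only point you gloss over is that the null set on which $|f(y)|\psi(y-z)\leq\|f\psi(\,\cdot-z)\|_{L^\infty}$ fails may a priori depend on $z$; a one-line Fubini argument on the exceptional set in $(y,z)$ repairs this, so it is a presentational rather than a substantive gap.
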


\begin{proof}[Proof of Theorem \ref{CritThm}]
Fix $\mu_1,\mu_2\in\msc{H}$ and let $\Lambda_j=\supp(\mu_j)$, for $j\in\{1,2\}$, and $\Lambda=\Lambda_1\cup \Lambda_2$. We can then write $\mu_1-\mu_2=\sum_{\lambda\in\Lambda}\alpha_\lambda \delta_\lambda$, where $\alpha\in\ell^p(\Lambda)$, so that
$\|\mu_1-\mu_2\|_p=\|\alpha\|_{\ell^p}$ and $\linOp_{\mu_1}\varphi -\linOp_{\mu_2}\varphi=\linOp_\Lambda(\alpha,\varphi)$, for $\linOp_\Lambda (\,\cdot\, ,\varphi):\ell^p(\Lambda)\to M^p(\R)$ as defined in the statement of Proposition \ref{wellDefProp}.  With this,  \eqref{eq: identifiability set of operators} is equivalent to 
\begin{equation}\label{eq:main-thm-both-bd}
C_1 \big(\mrm{ms}(\Lambda_1,\Lambda_2) \wedge 1\big) \|\alpha\|_{\ell^p} \leq \|\m{H}_{\Lambda}(\alpha,\varphi)\|_{M^{p}(\R)}\leq  C_2 \|\alpha\|_{\ell^p},
\end{equation}
and hence it suffices to find constants $C_1=C_1(p,\msc{H},\varphi) >0$ and $C_2=C_2(p,\msc{H},\varphi)>0$ such that \eqref{eq:main-thm-both-bd} holds.
To this end, first note that by item (i) of Proposition \ref{wellDefProp} we have
\begin{equation*}
\|\m{H}_{\Lambda}(\alpha,\varphi) \|_{M^p(\R)}\lesssim_{\, \varphi}  \mrm{rel}(\Lambda)\|\alpha\|_p.
\end{equation*}
Furthermore, as $\mu_1,\mu_2\in \msc{M}_s^p$, we get
\begin{equation*}
\mrm{rel}(\Lambda_1\cup\Lambda_2)\leq \mrm{rel}(\Lambda_1) +\mrm{rel}(\Lambda_2)\lesssim s^{-2},
\end{equation*}
and so the upper bound in \eqref{eq:main-thm-both-bd} holds for some $C_2>0$ depending on $\varphi$ and $s$, as desired.

We proceed to establish the lower bound in \eqref{eq:main-thm-both-bd}. 
Note that this bound holds trivially if $\mrm{ms}(\Lambda_1,\Lambda_2)=0$ or $\Lambda_1=\Lambda_2= \varnothing$, so suppose w.l.o.g. that $\mrm{ms}(\Lambda_1,\Lambda_2)>0$ and $\Lambda_1\neq \varnothing$. Then, in  particular, $\Lambda\neq\varnothing$. Now, as $\linOp_\Lambda (\,\cdot\, ,\varphi):\ell^p(\Lambda)\to M^p(\R)$ is a continuous linear operator between Banach spaces, Lemma \ref{ShortTheoryLemma} implies that it suffices to find a $C_1=C_1(p,\msc{H},\varphi)>0$ such that the following statement holds:

(P1) \emph{
 For every $\beta\in \ell^q(\Lambda)$, there exists a $y\in M^q(\R)$ such that $\left(\linOp_\Lambda (\cdot,\varphi)\right)^*(y)=\beta$ and 
\begin{equation*}
C_1 \big(\mrm{ms}(\Lambda_1,\Lambda_2) \wedge 1\big) \|y\|_{M^q(\R)}\leq \|\beta\|_{\ell^q}.
\end{equation*}}
By item (ii) of Proposition \ref{wellDefProp} and Lemma \ref{STFT=Barg-lem}, we have the following expression for $\left(\linOp_\Lambda (\cdot,\varphi)\right)^*$ in terms of the Bargmann transform:
\begin{equation}\label{CritThm-Dual=Barg}
\left(\linOp_\Lambda (\cdot,\varphi)\right)^*(y)=\{e^{-\pi i \tau\nu}e^{-\pi|\lambda|^2/2}(\bargmann y)(\overline{\lambda})\}_{\lambda=\tau+i\nu \, \in \Lambda},\quad y\in M^q(\R).
\end{equation}
Thus, as the Bargmann transform is an isometric isomorphism between $M^p(\R)$ and $\m{F}^p(\C)$, and the map $\{\beta_\lambda\}_{\lambda=\tau+i\nu\, \in\Lambda}\mapsto \{\beta_\lambda e^{-\pi i \tau\nu}\}_{\lambda=\tau+i\nu \, \in \Lambda} $ is an isometric isomorphism on $\ell^q(\Lambda)$, the statement (P1) is equivalent to the following statement about interpolation:

(P2) \emph{
For every $\beta\in \ell^q(\Lambda)$, there exists an $F\in \m{F}^q(\C)$ such that $e^{-\pi |\lambda|^2/2 }F(\overline\lambda)=\beta_\lambda$, for all $\lambda\in\Lambda$, and 
\begin{equation} \label{eq:main-thm-P2}
C_1 \big(\mrm{ms}(\Lambda_1,\Lambda_2) \wedge 1\big) \|F\|_{\m{F}^q(\C)}\leq \|\beta\|_{\ell^q}.
\end{equation}
}
To prove (P2), we will make use of the interpolation basis functions provided by Lemma \ref{massiveLemma}. To this end, fix $\theta>0$ and $\gamma>0$ such that $2 \m{D}^+(\m{L})< 2\theta <\gamma^{-2}<1$, and let $\beta\in \ell^q(\Lambda)$ be arbitrary. Then, by Lemma \ref{lem:Rayleigh-number}, there exists an $R_0>0$ (depending only on $\msc{H}$) such that $n^+(\Lambda_{j},(0,R)^2)\leq \theta R^2$, for $j\in\{1,2\}$ and $R\geq R_0$. Now, for each $\lambda\in\Lambda$, define the set 
$
\wtd{\Lambda}_\lambda=\{\overline{\lambda'}-\overline{\lambda}: \lambda'\in \Lambda \}.
$
We will seek to apply Lemma \ref{massiveLemma} to each of the sets $\wtd{\Lambda}_\lambda$ as $\lambda$ ranges over $\Lambda$. To this end, first note that
\begin{equation*}
n^+(\wtd{\Lambda}_{\lambda},(0,R)^2)=n^+(\Lambda ,(0,R)^2) \leq n^+(\Lambda_{1},(0,R)^2)+n^+(\Lambda_{2},(0,R)^2)\leq 2 \theta R^2,
\end{equation*}
for all $R\geq R_0$. Therefore, as $\gamma<(2\theta)^{-1/2} $, it follows by Lemma \ref{lem:unif-lat-dist} that there exists an $R'=R'(\theta,\gamma, R_0)$ such that $\wtd{\Lambda}_\lambda$ is $R'$-uniformly close to $\Omega_\gamma=\{\omega_{m,n}= \gamma(m+in):m,n\in\Z\}$. In particular, there exists an enumeration $\wtd{\Lambda}_\lambda=\{\wtd{\lambda}_{m,n}\}_{(m,n)\in\wtd{\m{I}}}$ such that $|\wtd{\lambda}_{m,n}-\omega_{m,n}|\leq R'$, for all $(m,n)\in\wtd{\m{I}}$. Note that $0\in \wtd{\Lambda}_\lambda$ by definition of $\wtd{\Lambda}_\lambda$. In order to apply Lemma \ref{massiveLemma} we need to additionally ensure that we work with an enumeration of $\wtd{\Lambda}_\lambda=\{{\lambda}_{m,n}\}_{(m,n)\in{\m{I}}}$ (possibly different from the enumeration $\wtd{\Lambda}_\lambda=\{\wtd{\lambda}_{m,n}\}_{(m,n)\in\wtd{\m{I}}}$) that satisfies ${\lambda}_{0,0}=0$. To this end, let $(m_0,n_0)\in \wtd{\m{I}}$ be the index such that $\wtd{\lambda}_{m_0,n_0}=0$, and define $\m{I}$ and $\{{\lambda}_{m,n}\}_{(m,n)\in{\m{I}}}$ as follows:
\begin{itemize}
\item[--] If $(0,0)\notin\wtd{\m{I}}$, set $\m{I}=\big(\wtd{\m{I}}\setminus\{(m_0,n_0)\}\big) \cup\{(0,0)\} $, and let
\begin{equation*}
\lambda_{m,n}=\begin{cases}
0,& \text{if }(m,n)=(0,0),\\
\wtd{\lambda}_{m,n},\, & \text{if }(m,n)\in \m{I}\setminus \{(0,0)\}
\end{cases}.
\end{equation*}
\item[--] If $(0,0)\in\wtd{\m{I}}$, set $\m{I}=\wtd{\m{I}}$, and let 
\begin{equation*}
\lambda_{m,n}=\begin{cases}
0,& \text{if }(m,n)=(0,0),\\
\wtd{\lambda}_{0,0},\, & \text{if }(m,n)=(m_0,n_0),\\
\wtd{\lambda}_{m,n},\, & \text{if }\in (m,n)\in \m{I}\setminus \{(0,0),(m_0,n_0)\}
\end{cases}.
\end{equation*}
\end{itemize}
The new enumeration $\wtd{\Lambda}_\lambda=\{{\lambda}_{m,n}\}_{(m,n)\in{\m{I}}}$ satisfies $|\lambda_{0,0}-\omega_{0,0}|=0$ and $|\lambda_{m_0,n_0}-\omega_{m_0,n_0}|\leq |\wtd{\lambda}_{0,0}| + |\wtd{\lambda}_{m_0,n_0}-\omega_{m_0,n_0}|\leq 2R'$, and thus we have $|{\lambda}_{m,n}-\omega_{m,n}|\leq 2R'$, for all $(m,n)\in{\m{I}}$. The set $\wtd{\Lambda}_\lambda$ therefore satisfies the assumptions of Lemma \ref{massiveLemma} with $\rho\coleqq \mrm{ms}(\Lambda_1,\Lambda_2) \wedge\frac {s}{2}$, $s$, $\theta$, $\gamma$, and $R\coleqq R_0\vee (2R')$, and so the function $g_{-\lambda}\coleqq \wtd{g}_{\wtd{\Lambda}_\lambda}(\,\cdot - \overline{\lambda})$, where $\wtd{g}_{\wtd{\Lambda}_\lambda}$ is  defined according to \eqref{eq:massive-def-gtilde}, satisfies
\begin{equation}\label{eq:suff-thm-1}
g_{-\lambda}(\overline{\lambda'})=\begin{cases}
1,\, & \text{if }\lambda'=\lambda\\
0,\, & \text{if }\lambda'\neq \lambda
\end{cases}, \quad\text{for all } \lambda'\in\Lambda,
\end{equation}
and
\begin{equation}\label{eq:suff-thm-2}
|g_{-\lambda}(z)|
\leq C (\rho\wedge 1)^{-1} e^{\frac{\pi}{2}\gamma^{-2}|z-\overline{\lambda}|^2+c|z-\overline{\lambda}|\log{|z-\overline{\lambda}|}},\quad\text{for all }z\in \C,
\end{equation}
where $c>0$ and $C>0$ depend on $s$, $\theta$, $\gamma$, and $R$. Moreover, as $\lambda$ was arbitrary, \eqref {eq:suff-thm-1} and \eqref{eq:suff-thm-2} hold for all $\lambda\in\Lambda$.
Next, following \cite[p. 112]{Seip1992-2}, we consider the interpolation function
\begin{equation}\label{eq:main-thm-F-exp}
F(z)=\sum_{\lambda\in \Lambda}\beta_\lambda \;e^{\pi\lambda z-\frac{\pi}{2} |\lambda|^2}\;g_{-\lambda}(z).
\end{equation}
To see that $F$ is an element of $\m{F}^q(\C)$, observe that
\begin{align}
|F(z)|e^{-\frac{\pi}{2}|z|^2 }&\leq\sum_{\lambda\in\Lambda}|\beta_\lambda |e^{-\frac{\pi}{2}|z-\overline\lambda|^2}|g_{-\lambda}(z)|\notag\\
&\leq  C (\rho\wedge 1)^{-1} \sum_{\lambda\in\Lambda} |\beta_\lambda|e^{-\frac{\pi}{2}(1-\gamma^{-2})|z-\overline{\lambda}|^2+c|z-\overline{\lambda}|\log{|z-\overline{\lambda}|}}\notag\\
&=  C (\rho\wedge 1)^{-1}\sum_{j\in\{1,2\}}\sum_{\lambda\in\Lambda_j} |\beta_\lambda|f(z-\overline{\lambda}),\quad\text{for all }z\in \C,\notag
\end{align}
where
$
f(z)=\exp{\big[-\frac{\pi}{2}(1-\gamma^{-2})|z|^2+c|z|\log{|z|}\big]}
$.
Now, as $\gamma^{-2}<1$, we have that $f$ decays exponentially, and so $f\in W(L^\infty, L^1)$. Lemma \ref{IrregConvLem}  thus yields
\begin{equation*}
\Big\| \sum_{\lambda\in\Lambda_j} |\beta_\lambda|f(\cdot -\overline{\lambda})\Big\|_{L^q(\C)}\lesssim_{\, p,s}\|f\|_{W(L^\infty,L^1)} \|\{\beta_\lambda\}_{\lambda\in\Lambda_j}\|_{\ell^q},\quad\text{for } j\in\{1,2\},
\end{equation*}
and so 
\begin{equation}\label{eq:main-thm-last}
\|F\|_{\m{F}^q(\C)}=\big\| F(\cdot)e^{-\frac{\pi}{2}\abs{\cdot}^2} \big\|_{L^q(\C)}\leq C (\rho\wedge 1)^{-1}\Big\| \sum_{j\in\{1,2\}} \sum_{\lambda\in\Lambda_j} |\beta_\lambda|f(\cdot -\overline{\lambda})\Big\|_{L^q(\C)}\lesssim_{\,p,s,\gamma} C (\rho\wedge 1)^{-1} \|\beta\|_{\ell^q}.
\end{equation}
Now, recall that $\rho\coleqq \mrm{ms}(\Lambda_1,\Lambda_2) \wedge\frac {s}{2}$, and so $ \mrm{ms}(\Lambda_1,\Lambda_2) \wedge 1 \lesssim_{\, s}\rho\wedge 1 $. This together with \eqref{eq:main-thm-last} establishes \eqref{eq:main-thm-P2} with some $C_1>0$ depending on $s$, $\theta$, $\gamma$, $R_0$, and $R'$. As these quantities ultimately depend only on $s$ and $\msc{H}$, so does $C_1$. Finally, \eqref{eq:main-thm-F-exp} and the basis interpolation property \eqref{eq:suff-thm-1} together yield $F(\overline{\lambda})=e^{\pi |\lambda|^2/2}\beta_\lambda$, for all $\lambda\in\Lambda$. We have thus established (P2), thereby concluding the proof of the theorem.
\end{proof}


\section{Proof of Theorem \ref{NeceCritThm}}

In the proof of  Theorem \ref{NeceCritThm}  we will make use of the following results from \cite{Groechenig2015}, as well as a combinatorial lemma about squares in the plane, whose proof can be found in the appendix.

\begin{thm}[Non-uniform Balian-Low Theorem, {\cite[Cor. 1.2]{Groechenig2015}}] \label{GroBalianLow}
Let $\Lambda$ be a relatively separated subset of $\R^2$ and $x\in M^1(\R)$. If $\{\pi(\lambda) x\}_{\lambda\in\Lambda}$ is a Riesz sequence, i.e. $\|\sum_{\lambda\in\Lambda} c_\lambda \pi(\lambda) x\|_{L^2(\R)}\asymp_{\,\Lambda,x} \|c\|_{\ell^2(\Lambda)}$, for all $c\in\ell^2(\Lambda)$, then $D^+(\Lambda)<1$.
\end{thm}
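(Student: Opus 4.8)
The plan is to obtain the \emph{strict} bound $D^+(\Lambda)<1$ by combining the (non-strict) density theorem for Gabor Riesz sequences with a small-dilation deformation argument that is available precisely because $x\in M^1(\R)$. First I would dispose of trivialities: if $\Lambda=\varnothing$ there is nothing to prove, so assume $x\neq 0$ and $\Lambda\neq\varnothing$. A Gabor Riesz sequence is automatically separated --- since $\|\pi(\lambda)x\|_{L^2}=\|x\|_{L^2}$ is constant while $\|\pi(\lambda)x-\pi(\lambda')x\|_{L^2}\to 0$ as $|\lambda-\lambda'|\to 0$, the lower Riesz bound applied to the sequence $e_\lambda-e_{\lambda'}$ forces $\sep(\Lambda)>0$, so $D^+(\Lambda)$ is well defined. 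Next I would invoke the classical density theorem for Gabor systems (the Ramanathan--Steger / Christensen--Deng--Heil density theorem): any Gabor Riesz sequence $\{\pi(\lambda)g\}_{\lambda\in\Gamma}$ with $0\neq g\in L^2(\R)$ and $\Gamma$ separated satisfies $D^+(\Gamma)\leq 1$. It therefore remains only to rule out the critical case $D^+(\Lambda)=1$.

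Suppose, toward a contradiction, that $D^+(\Lambda)=1$. For $\epsilon\in(0,1)$ set $\Lambda_\epsilon=(1-\epsilon)\Lambda=\{(1-\epsilon)\lambda:\lambda\in\Lambda\}$. A direct computation with the counting function gives $n^+(\Lambda_\epsilon,[0,R]^2)=n^+\!\big(\Lambda,[0,R/(1-\epsilon)]^2\big)$, whence $D^+(\Lambda_\epsilon)=D^+(\Lambda)/(1-\epsilon)^2=1/(1-\epsilon)^2>1$. By the density theorem just quoted, $\{\pi(\lambda)x\}_{\lambda\in\Lambda_\epsilon}$ is therefore \emph{not} a Riesz sequence, for every $\epsilon\in(0,1)$. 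On the other hand, the map $z\mapsto(1-\epsilon)z$ is a global Lipschitz perturbation of the plane with Lipschitz constant $\epsilon$, so $\Lambda\mapsto\Lambda_\epsilon$ is a ``small'' deformation of the index set in the sense of \cite{Groechenig2015}, and $\sep(\Lambda_\epsilon)=(1-\epsilon)\sep(\Lambda)>0$. The key input is the deformation-stability theorem of \cite{Groechenig2015}: for windows $x\in M^1(\R)$, the Riesz-sequence property of a Gabor system is stable under sufficiently small such deformations, i.e.\ there is $\epsilon_0>0$ so that $\{\pi(\lambda)x\}_{\lambda\in\Lambda_\epsilon}$ is again a Riesz sequence for all $0<\epsilon\le\epsilon_0$ (with Riesz bounds close to those of the original system). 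This contradicts the previous sentence, and hence $D^+(\Lambda)<1$.

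The main obstacle is the deformation-stability statement itself, which is the analytic heart of \cite{Groechenig2015}: one must show that the Gram matrix $\big[\langle\pi(\lambda)x,\pi(\lambda')x\rangle\big]_{\lambda,\lambda'}$ of the deformed system is a small perturbation, in operator norm on $\ell^2$, of the original Gram matrix; this rests on the off-diagonal decay $|\langle\pi(\lambda)x,\pi(\lambda')x\rangle|=|(\m{V}_x x)(\lambda-\lambda')|$ being dominated by an integrable, time-frequency-translation-invariant envelope --- exactly the kind of amalgam-space estimate that $x\in M^1(\R)$ guarantees (then $\m{V}_\varphi x$ lies in $W(L^\infty,L^1)$), and which fails for general $x\in L^2(\R)$, consistent with the fact that $\mathds{1}_{[0,1]}$ generates a Gabor Riesz basis over $\Z^2$, which has $D^+=1$. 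An essentially equivalent alternative route, also in the spirit of \cite{Groechenig2015}, would extract from $D^+(\Lambda)\ge 1$ a weak limit $\Lambda_\infty$ of translates of $\Lambda$ with $D^-(\Lambda_\infty)\ge 1$ (a combinatorial ``box'' argument), observe that the Riesz bounds pass to such weak limits when $x\in M^1(\R)$ so that $\Lambda_\infty$ has uniform density exactly $1$ and still generates a Gabor Riesz sequence, and then conclude by contradiction with the nonexistence of Gabor Riesz bases having $M^1$ generators; there the obstacle is repackaged into that nonexistence fact together with the weak-limit stability, both of which again use the $M^1$ hypothesis in an essential way.
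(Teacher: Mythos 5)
The paper does not prove Theorem \ref{GroBalianLow}; it is imported verbatim from \cite[Cor.\ 1.2]{Groechenig2015}, so there is no internal argument to compare your attempt against. That said, your sketch is a faithful reconstruction of how the result is derived in that reference: the non-strict bound $D^+(\Lambda)\leq 1$ follows from the classical Ramanathan--Steger/Christensen--Deng--Heil density theorem, valid for arbitrary $L^2$ windows, and the $M^1$ hypothesis enters exactly once, to invoke the Lipschitz-deformation stability theorem of \cite{Groechenig2015} that excludes $D^+(\Lambda)=1$ by showing the Riesz property would survive the dilation $\Lambda\mapsto(1-\epsilon)\Lambda$ for small $\epsilon$. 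One phrasing caution: this dilation is \emph{not} a uniformly small perturbation of the plane --- the displacement $\lambda\mapsto -\epsilon\lambda$ is unbounded --- so ``Lipschitz constant $\epsilon$'' must be read as describing the deviation-from-identity map $\lambda\mapsto\epsilon\lambda$, which is indeed the quantity the definition of Lipschitz convergence in \cite{Groechenig2015} controls (its relative increment $|(\tau(\lambda)-\tau(\lambda'))-(\lambda-\lambda')|/|\lambda-\lambda'|=\epsilon$ is small, not its sup norm); with that reading your appeal to their deformation theorem is legitimate. Your ``essentially equivalent alternative route'' via weak limits and the non-existence of $M^1$ Gabor Riesz bases at critical density is not really independent --- it is the internal machinery \cite{Groechenig2015} uses to prove the deformation theorem itself --- so the two paths you outline collapse into the same argument at different levels of packaging.
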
 
\begin{thm}[\!\!{\cite[Thm. 3.2]{Groechenig2015}}]\label{GroStab}
Let $\Lambda$ be a relatively separated subset of $\R^2$ and $x\in M^1(\R)$. Then $\|\sum_{\lambda\in\Lambda} c_\lambda \pi(\lambda) x\|_{M^p(\R)}\asymp_{\,\Lambda,x} \|c\|_{\ell^p(\Lambda)}$, $c\in\ell^p(\Lambda)$, holds for some $p\in[1,\infty]$ if and only if it holds for all $p\in[1,\infty]$.
\end{thm}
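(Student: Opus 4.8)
The plan is to transfer the statement to a spectral property of the Gram matrix of the system $\{\pi(\lambda)x\}_{\lambda\in\Lambda}$ and to invoke a non-commutative Wiener's lemma. Write $C_p\colon\ell^p(\Lambda)\to M^p(\R)$, $C_pc=\sum_{\lambda\in\Lambda}c_\lambda\,\pi(\lambda)x$, for the synthesis operator (bounded for every $p\in[1,\infty]$ by Proposition \ref{wellDefProp}(i) when $p<\infty$, and by duality when $p=\infty$, since $\Lambda$ is relatively separated), and $D_p\colon M^p(\R)\to\ell^p(\Lambda)$, $D_py=(\langle y,\pi(\lambda)x\rangle)_{\lambda\in\Lambda}$, for the analysis operator, which is bounded for every $p$ because $D_q=(C_p)^*$ by Proposition \ref{wellDefProp}(ii). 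Let $G=\big(\langle\pi(\mu)x,\pi(\lambda)x\rangle\big)_{\lambda,\mu\in\Lambda}$ be the Gram matrix; it is Hermitian and satisfies $G=D_pC_p$ on $\ell^p(\Lambda)$ for every $p$. Since the upper bound $\|C_pc\|_{M^p(\R)}\lesssim_{\,\Lambda,x}\|c\|_{\ell^p}$ always holds, the two-sided estimate at exponent $p$ is equivalent to ``$C_p$ is bounded below'', which by Lemma \ref{ShortTheoryLemma} is in turn equivalent to ``$D_q=(C_p)^*$ maps onto $\ell^q(\Lambda)$''. I would therefore restate the theorem as: these equivalent properties hold for one $p$ if and only if they hold for all $p$.

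The first substantive ingredient uses $x\in M^1(\R)$ to localize $G$. One has $(\m{V}_x x)\in W(C_0,L^1)$, and since $|G_{\lambda,\mu}|=|(\m{V}_x x)(\mu-\lambda)|$ while $\Lambda$ is relatively separated, $G$ lies in the Banach $*$-algebra $\m{A}$ of matrices over $\Lambda$ with $W(C_0,L^1)$-type off-diagonal decay. This algebra embeds continuously into the bounded operators on $\ell^p(\Lambda)$ for every $p\in[1,\infty]$ and is inverse-closed in the bounded operators on $\ell^2(\Lambda)$ (a non-commutative Wiener's lemma, valid here since the underlying weight is trivial). From this I would deduce that ``$G$ is invertible on $\ell^p(\Lambda)$'' does not depend on $p$: spectral inclusion gives $\sigma_{\ell^p}(G)\subseteq\sigma_{\m{A}}(G)=\sigma_{\ell^2}(G)$, so invertibility on $\ell^2$ propagates to every $\ell^p$ with $G^{-1}\in\m{A}$; conversely, invertibility on one $\ell^p$ yields invertibility on $\ell^q$ (via $G^{T}=\overline{G}\in\m{A}$ and the duality/conjugation symmetries), and Riesz--Thorin interpolation of $G^{-1}$ between $\ell^p$ and $\ell^q$ then gives invertibility on $\ell^2$. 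One direction of the reformulated theorem is now immediate: if $G$ is invertible on $\ell^p(\Lambda)$, then $\|c\|_{\ell^p}=\|G^{-1}Gc\|_{\ell^p}\le\|G^{-1}\|_{\ell^p\to\ell^p}\|D_p\|_{M^p\to\ell^p}\|C_pc\|_{M^p(\R)}$, so $C_p$ is bounded below. Combining with the previous paragraph, the whole theorem reduces to the single implication: \emph{if the two-sided estimate holds at one exponent $p$, then $G$ is invertible on $\ell^2(\Lambda)$.}

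To prove this last implication I would proceed as follows. If $C_p$ is bounded below, then by Lemma \ref{ShortTheoryLemma}(i) and the open mapping theorem there is a bounded right inverse $E\colon\ell^q(\Lambda)\to M^q(\R)$ of $D_q$; setting $z_\mu\coleqq Ee_\mu$ (with $e_\mu$ the canonical basis vectors) produces a system biorthogonal to $\{\pi(\lambda)x\}_{\lambda\in\Lambda}$, i.e.\ $\langle z_\mu,\pi(\lambda)x\rangle=\mathds{1}_{\{\lambda=\mu\}}$, with $d\mapsto\sum_\mu d_\mu z_\mu$ bounded from $\ell^q(\Lambda)$ into $M^q(\R)$. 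The decisive---and hardest---step is to show that this dual system can be chosen localized in a way compatible with the whole scale of modulation spaces, equivalently that $E$ can be taken bounded from $\ell^r(\Lambda)$ into $M^r(\R)$ for every $r\in[1,\infty]$, and in particular (by interpolation) a bounded right inverse of $D_2$ on $\ell^2(\Lambda)$, so that $\{z_\mu\}$ is a Bessel sequence in $L^2(\R)=M^2(\R)$. Granted this, the conclusion is short: for $c\in\ell^2(\Lambda)$ and $y=C_2c$, biorthogonality gives $c_\mu=\langle y,z_\mu\rangle$, whence $\|c\|_{\ell^2}^2=\sum_\mu|\langle y,z_\mu\rangle|^2\le B\|y\|_{L^2(\R)}^2$ with $B$ the Bessel bound, so $C_2$ is bounded below and $G=D_2C_2$ is a bounded, bounded-below, Hermitian positive operator on $\ell^2(\Lambda)$, hence invertible---closing the loop. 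I expect the localization of the dual system $\{z_\mu\}$ to be the main obstacle; it is exactly the content of the localization theory of Banach frames and Riesz sequences (the dual of an intrinsically $M^1$-localized Riesz sequence is again intrinsically localized, which places the dual Gram matrix in $\m{A}$), and is the genuine heart of the result, the remaining steps being bookkeeping around Proposition \ref{wellDefProp} and the non-commutative Wiener's lemma.
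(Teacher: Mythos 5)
The paper does not prove this result; it is quoted verbatim from \cite[Thm.~3.2]{Groechenig2015}, so there is nothing in the paper to compare against. Judging the proposal on its own merits: the bookkeeping around $C_p$, $D_q=(C_p)^*$, the factorization $G=D_pC_p$, the continuous embedding of the envelope-localized matrix algebra $\m{A}$ into $B(\ell^p)$ for all $p$, the Sj\"ostrand/Sun-type inverse-closedness of $\m{A}$ in $B(\ell^2)$, and the duality--interpolation argument that makes ``$G$ invertible on $\ell^p$'' a $p$-independent property are all sound and indeed in the spirit of how such results are proved. Likewise the easy direction ($G$ invertible on $\ell^p$ $\Rightarrow$ $C_p$ bounded below) and the $p=2$ closing step ($\langle Gc,c\rangle=\|C_2c\|^2_{L^2}$ plus Hermitian positivity) are correct.

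The problem is that you have left the decisive implication---$C_p$ bounded below for some $p\neq 2$ implies $G$ invertible on $\ell^2$---entirely unaddressed, and the appeal you make to bridge it is circular. Your reduction produces a biorthogonal system $z_\mu=Ee_\mu\in M^q(\R)$ with $\sup_\mu\|z_\mu\|_{M^q}<\infty$ and with $d\mapsto\sum_\mu d_\mu z_\mu$ bounded $\ell^q\to M^q$, and then asserts that ``the dual of an intrinsically $M^1$-localized Riesz sequence is again intrinsically localized'' would place $\{z_\mu\}$ in $M^1$ uniformly (hence yield the $\ell^2$-Bessel bound). But that localization theorem (Fornasier--Gr\"ochenig and its relatives) is proved for $\ell^2$-Riesz sequences and Hilbert-space frames; it literally starts from invertibility of $G$ on $\ell^2$. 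Invoking it here presupposes the conclusion. Nothing in the hypotheses currently at hand rules out, a priori, that $\mathrm{Im}(C_p)$ is closed and complemented in $M^p$ while $G$ has a kernel on $\ell^2$, and you give no argument excluding this. Concretely: your chain $G=D_pC_p$, $C_p$ bounded below, does not yield $G$ bounded below on $\ell^p$ unless one also knows $D_p$ bounded below on $\mathrm{Im}(C_p)$, which is equivalent to $C_q$ bounded below---exactly what is to be proved for the conjugate exponent. Until this genuine gap is filled with an actual argument (not a pointer to $\ell^2$-based localization theory), the proof is incomplete, and this gap is the entire content of the theorem.
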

\begin{lem}[\!\!{\cite[Lem. 4.5]{Groechenig2015}}]\label{GroSetBA}
Let $\{\Lambda_n\}_{n\in\N}$ be a sequence of relatively separated subsets of $\R^2$. If $\sup_{n\in\N}\mrm{rel}(\Lambda_n)<\infty$, then there exists a subsequence $\{\Lambda_{n_k}\}_{k\in\N}$ that converges weakly to a relatively separated set.
\end{lem}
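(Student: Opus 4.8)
\textbf{Proof proposal (Lemma~\ref{GroSetBA}).}
The plan is to prove Lemma~\ref{GroSetBA} by a Blaschke-type selection argument, followed by a verification that the resulting Kuratowski limit realizes the weak convergence \eqref{eq:set-weak-conv}. Set $M\coleqq\sup_{n\in\N}\mrm{rel}(\Lambda_n)<\infty$, and for a closed ball $B_R(z)$ write $B_R^\circ(z)\coleqq B_R(z)\setminus\partial B_R(z)$ for the corresponding open ball. If infinitely many $\Lambda_n$ are empty, passing to that subsequence yields $\Lambda_n\xrightarrow{w}\varnothing$, so we may assume all $\Lambda_n$ are non-empty. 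The one quantitative input is a uniform local point bound: covering $B_R(z)$ by $C_R$ closed unit balls, with $C_R$ depending only on $R$, gives $\#(\Lambda_n\cap B_R(z))\le C_R M$ for all $n\in\N$ and all $z\in\C$; write $N(R)\coleqq C_R M$.

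First I would fix a countable family $\{B_{R_j}(z_j)\}_{j\in\N}$ of closed balls whose open interiors cover $\C$ (take $R_j\in\N$, $z_j\in\mathbb{Q}^2$). For each $j$ the sets $A_n^{(j)}\coleqq(\Lambda_n\cap B_{R_j}(z_j))\cup\partial B_{R_j}(z_j)$ are non-empty and closed inside the compact metric space $B_{R_j}(z_j)$, so by Blaschke's selection theorem (compactness of the hyperspace of non-empty closed subsets of a compact metric space under the Hausdorff metric $\mrm{dist}$) a subsequence of $\{A_n^{(j)}\}_n$ converges. Diagonalizing over $j$ produces one subsequence $(n_k)$ with $A_{n_k}^{(j)}\to K_j$ for every $j$, where $K_j\subseteq B_{R_j}(z_j)$ is non-empty and closed. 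Using that Hausdorff convergence of the $A_{n_k}^{(j)}$ lets one ignore the common boundary term at interior points, one checks that for $\lambda\in B_{R_j}^\circ(z_j)$ the conditions ``$\lambda\in K_j$'', ``$\lambda=\lim_k\lambda_{n_k}$ for some $\lambda_{n_k}\in\Lambda_{n_k}$'', and ``$\mrm{dist}(\lambda,\Lambda_{n_k})\to 0$'' are all equivalent; in particular the $K_j$ agree on overlaps, so $\Lambda\coleqq\bigcup_j(K_j\cap B_{R_j}^\circ(z_j))$ is well defined and closed, and it equals both the Kuratowski lower and upper limits of $\{\Lambda_{n_k}\}_k$.

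It then remains to prove that $\Lambda$ is relatively separated and that $\Lambda_{n_k}\xrightarrow{w}\Lambda$. For the former: if $\Lambda$ contained $N(R+1)+1$ distinct points in some $B_R(z)$, then, each being a Kuratowski lower-limit point, for $k$ large $\Lambda_{n_k}$ would contain that many distinct points in $B_{R+1}(z)$, contradicting $\#(\Lambda_{n_k}\cap B_{R+1}(z))\le N(R+1)$; with $R=1$ this gives $\mrm{rel}(\Lambda)\le N(2)<\infty$. For the latter, fix $R>0$ and $z\in\C$; since both $(\Lambda_{n_k}\cap B_R(z))\cup\partial B_R(z)$ and $(\Lambda\cap B_R(z))\cup\partial B_R(z)$ contain $\partial B_R(z)$, it suffices to bound the two one-sided Hausdorff quantities. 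Every $\lambda\in\Lambda\cap B_R(z)$ lies within vanishing distance of $(\Lambda_{n_k}\cap B_R(z))\cup\partial B_R(z)$: this follows from ``$\mrm{dist}(\lambda,\Lambda_{n_k})\to 0$'' together with $\lambda\in B_R^\circ(z)$ when $\lambda$ is interior, and is trivial when $\lambda\in\partial B_R(z)$; uniformity is automatic because $\Lambda\cap B_R(z)$ is finite. Conversely, if points $\lambda'_{n_k}\in\Lambda_{n_k}\cap B_R(z)$ stayed at distance $\ge\epsilon_0>0$ from $(\Lambda\cap B_R(z))\cup\partial B_R(z)$ along a subsequence, they would lie in the compact set $B_{R-\epsilon_0}(z)$, subconverge to some $\lambda'\in B_R^\circ(z)$, and $\lambda'$ being a Kuratowski upper-limit point would force $\lambda'\in\Lambda\cap B_R(z)$ --- contradicting the distance bound. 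This establishes \eqref{eq:set-weak-conv}.

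The step I expect to be the main obstacle is the bookkeeping at the boundary $\partial B_R(z)$: one must verify that the $\partial B_R(z)$ term in \eqref{eq:set-weak-conv} genuinely absorbs the points of $\Lambda_{n_k}$ that drift toward the boundary (this is exactly why the first one-sided bound only gives closeness to $(\Lambda\cap B_R(z))\cup\partial B_R(z)$ and not to $\Lambda\cap B_R(z)$ alone), and that all passages between open and closed balls in the relative-separation estimate are carried out consistently. Everything else --- Blaschke selection, the diagonal extraction, and the finiteness arguments --- is routine once the uniform bound $N(R)$ is in place.
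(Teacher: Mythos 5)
The paper does not prove this lemma; it is imported verbatim from Gr\"ochenig, Ortega-Cerd\`a, and Romero (reference \cite{Groechenig2015}, Lemma~4.5), so there is no ``paper's own proof'' to compare against. Your Blaschke-selection/diagonal argument is correct, and it is in fact the same kind of argument used in the cited source: a local Hausdorff compactness coupled with a diagonal extraction over a countable cover of the plane. Let me record the points that make it work and one small thing worth stating more explicitly.

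The uniform local bound $\#(\Lambda_n\cap B_R(z))\le N(R)$ from $\sup_n\mrm{rel}(\Lambda_n)<\infty$ is the only quantitative input needed, and you use it in exactly two places (relative separation of the limit, and implicitly to guarantee that local limits are ``thin''). Appending $\partial B_{R_j}(z_j)$ to $A_n^{(j)}$ so that those sets are nonempty closed subsets of a compact metric space is the right move and is precisely what the boundary term in~\eqref{eq:set-weak-conv} is designed for. The observation that $\lambda\in K_j\cap B_{R_j}^\circ(z_j)$ iff $\mrm{dist}(\lambda,\Lambda_{n_k})\to 0$ along the extracted subsequence is the linchpin: it gives consistency of the $K_j$ on overlaps, closedness and well-definedness of $\Lambda$, and --- crucially --- it makes the Kuratowski lower and upper limits coincide because Hausdorff convergence of $A_{n_k}^{(j)}$ along the \emph{full} subsequence upgrades any subsequential limit point to a full-sequence lower-limit point. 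That coincidence is what lets the weak-convergence check go through for \emph{every} $(R,z)$, not only the countable family used for the extraction, and you use it correctly on both one-sided Hausdorff bounds. Your relative-separation estimate via ``$N(R+1)+1$ distinct points of $\Lambda$ would eventually spawn $N(R+1)+1$ distinct points of $\Lambda_{n_k}$ in $B_{R+1}(z)$'' is also fine, since you can first isolate a pairwise separation $\delta>0$ for the finitely many candidate points of $\Lambda$.

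The one place I would tighten the write-up is the second one-sided bound. You argue by contradiction and pass to a sub-subsequence of $\{\lambda'_{n_k}\}$; to close the loop you should say explicitly that a limit point $\lambda'$ of that sub-subsequence still lies in the Kuratowski upper limit of the \emph{full} extracted subsequence $\{\Lambda_{n_k}\}$, which is where the equality of the upper and lower Kuratowski limits (established via the $K_j$) is invoked. As you predicted, the boundary bookkeeping is the only real subtlety, and you have handled it: the $\partial B_R(z)$ term absorbs sequences drifting to the sphere, and the contradiction hypothesis places $\lambda'_{n_k}$ in the compact ball $B_{R-\epsilon_0}(z)$, strictly inside, so the Kuratowski argument detects only genuine interior accumulation points. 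I see no gap.
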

\begin{lem}\label{RecursiveSquareLemma}
Let $Y\subset \R^2$, $n\in\N$, and suppose that $K_n$ is a square in the plane of side length $\sqrt{2}(2^n+1)$ such that $\#(K_n\cap Y)\geq 2^{2n}+1$. Then there exist squares $K_0,K_1,\dots, K_{n-1}$ so that, for every $j\in\{0,1,\dots,n-1\}$, 
\begin{itemize}
\item[(i)] $K_j\subset K_{j+1}$, $K_j$ has sides of length $\sqrt{2}(2^j+1)$ parallel to the sides of $K_{j+1}$, and $K_j$ and $K_{j+1}$ share a corner,
\item[(ii)] $\# (K_j\cap Y)\geq 2^{2j}+1$.
\end{itemize}
We call a sequence $(K_0,K_1,\dots,K_n)$ satisfying (i) and (ii) a sequence of \textit{nested squares}.
\end{lem}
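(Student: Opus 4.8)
The plan is a \emph{reverse induction} on $j$: starting from the given $K_n$, we construct $K_{n-1}$, then $K_{n-2}$, and so on down to $K_0$, at each stage producing $K_j$ out of the square $K_{j+1}$ built in the previous stage. Since every square we produce will have sides parallel to those of $K_n$, we may fix coordinates so that $K_n$ --- and hence all the $K_j$ --- is axis-parallel. The assertion carried through the induction is precisely the pair ``$K_{j+1}$ is an axis-parallel square of side length $\sqrt2(2^{j+1}+1)$ with $\#(K_{j+1}\cap Y)\geq 2^{2(j+1)}+1$''; for $j+1=n$ this is exactly the hypothesis of the lemma.

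The one geometric ingredient is the elementary covering fact: an axis-parallel square $Q=[0,L]^2$ is covered by its four ``corner sub-squares'' of side length $\ell$, namely $[0,\ell]^2$, $[L-\ell,L]\times[0,\ell]$, $[0,\ell]\times[L-\ell,L]$, and $[L-\ell,L]^2$, provided $2\ell\geq L$. Indeed, $2\ell\geq L$ forces $[0,\ell]\cup[L-\ell,L]=[0,L]$, so each coordinate of any point of $Q$ lies in $[0,\ell]$ or in $[L-\ell,L]$, which places the point in one of the four sub-squares. Applying this with $L=\sqrt2(2^{j+1}+1)$ and $\ell=\sqrt2(2^j+1)$, the requirement $2\ell\geq L$ becomes $2(2^j+1)\geq 2^{j+1}+1$, i.e. $2^{j+1}+2\geq 2^{j+1}+1$, which holds; this is exactly why the side lengths are taken of the form $\sqrt2(2^j+1)$ rather than $\sqrt2\cdot 2^j$. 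So $K_{j+1}$ is covered by four sub-squares, each of side length $\sqrt2(2^j+1)$, each contained in $K_{j+1}$, and each sharing a corner with $K_{j+1}$, so that any one of them satisfies property (i) at index $j$.

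It remains to select the right one. Since the four corner sub-squares cover $K_{j+1}$, every point of $K_{j+1}\cap Y$ lies in at least one of them, and by the inductive hypothesis $\#(K_{j+1}\cap Y)\geq 2^{2(j+1)}+1=4\cdot 2^{2j}+1$. Hence by pigeonhole at least one sub-square, which we take as $K_j$, satisfies $\#(K_j\cap Y)\geq 2^{2j}+1$: otherwise all four would contain at most $2^{2j}$ points of $Y$, whence their union --- and in particular $K_{j+1}\cap Y$ --- would contain at most $4\cdot 2^{2j}$ points, contradicting the bound above. This $K_j$ satisfies (i) and (ii) at index $j$, and the pair ``$K_j$ axis-parallel of side length $\sqrt2(2^j+1)$, $\#(K_j\cap Y)\geq 2^{2j}+1$'' is exactly the inductive hypothesis one level down, so the recursion proceeds until $K_0$ is obtained. (For $n=0$ there is nothing to construct; for $n=1$ the single step above yields $K_0$ directly.)

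This is a routine covering-plus-pigeonhole induction, so I do not expect a genuine obstacle; the only points needing care are to fix once and for all the convention that all squares are taken parallel to $K_n$ (so the covering fact applies verbatim at every stage), and to verify the arithmetic inequality $2(2^j+1)\geq 2^{j+1}+1$ that makes the four corner sub-squares of the prescribed side length actually cover their parent square.
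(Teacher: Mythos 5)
Your proof is correct and takes essentially the same approach as the paper: a pigeonhole argument over four corner sub-squares of $K_{j+1}$, iterated down to $K_0$. The only cosmetic difference is that you cover $K_{j+1}$ directly by four overlapping corner squares of the target side length $\sqrt{2}(2^j+1)$ (justified by your inequality $2(2^j+1)\geq 2^{j+1}+1$), whereas the paper partitions $K_{j+1}$ into four disjoint quarters and then enlarges the chosen quarter to the required size; both yield the same conclusion.
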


\begin{proof}[Proof of Theorem \ref{NeceCritThm}]
We argue by contradiction, so suppose that $\msc{H}(\m{L})^p$ is identifiable, but $\m{D}^+(\m{L})\geq \frac{1}{2}$. Define $\gamma_n=\sqrt{2}(1+2^{-n})$ and $R_n=2(2^n+1)$, for $n\in\N$. It then follows by Lemma \ref{lem:class-lat-dist} that, for every $n\in\N$, there exists a $\wtd{\mu}_n\in \msc{H}(\m{L})^p$ such that $\supp(\wtd{\mu}_n)$ is not $R_n$-uniformly close to $\Omega_{\gamma_n}$. Indeed, if this were not the case for some $n\in\N$, we would have $\m{D}^+(\m{L})\leq \gamma_n^{-2}<\frac{1}{2}$, contradicting our assumption that $\m{D}^+(\m{L})\geq \frac{1}{2}$. Fix such a $\wtd{\mu}_n$ for each $n$.

Now, for a fixed $n\in\N$, define the sets 
\begin{equation*}
S_{k,\ell}=\left[\sqrt{2}(2^n+1)k,\sqrt{2}(2^n+1)(k+1)\right)\times \left[\sqrt{2}(2^n+1)\ell,\sqrt{2}(2^n+1)(\ell+1)\right)\subset \R^2,
\end{equation*}
 for $(k,\ell)\in\Z^2$, forming a partition of the plane into squares of side length $\sqrt{2}(2^n+1)$. As every $S_{k,\ell}$ consists of exactly $2^{2n}$ fundamental cells of the lattice $\Omega_{\gamma_n}$ and the diagonal of $S_{k,\ell}$ has length $R_n$, there must exist a pair $(k_n,\ell_n)\in\Z^2$ such that $\#(S_{k_n,\ell_n}\cap \supp(\wtd{\mu}_n))\geq 2^{2n}+1$, for otherwise $\supp(\wtd{\mu}_n)$ would be $R_n$-uniformly close to $\Omega_{\gamma_n}$, contradicting our choice of $\wtd{\mu}_n$.

We set $\wtd{K}^n_n=S_{k_n,\ell_n}$ and apply Lemma \ref{RecursiveSquareLemma} with $\wtd{K}^n_n$ and $Y=\supp(\wtd{\mu}_n)$ to obtain a sequence $(\wtd{K}_0^n,\wtd{K}_1^n,\dots,\wtd{K}_n^n)$ of nested squares. Next, let $\lambda_n$ be the center of $\wtd{K}^n_0$ and note that, as $\m{L}$ is shift-invariant by assumption, there exists a measure $\mu_n\in \msc{H}(\m{L})^p$ with support $\supp(\wtd{\mu}_n )-\lambda_n$. Therefore, setting $K_j^n=\wtd{K}_j^n-\lambda_n$, we have that $(K_0^n,K_1^n,\dots,K_n^n)$ is a sequence of nested squares, $\# (K_j^n\cap \supp(\mu_n))\geq 2^{2j}+1$, and $K_0^n=\big[-\sqrt{2},\sqrt{2}\big)\times\big[-\sqrt{2},\sqrt{2}\big)$, for all $n\in\N$ and $j\in\{0,1,\dots n\}$. We next need to verify the following auxiliary claim.

\textit{Claim: Let $r \in\N$ and suppose that $\{\mu_{n_k^{r}}\}_{k\in\N }$ is a subsequence of $\{\mu_n\}_{n\in\N}$ such that $K_j^{n_j^{r}}=K_j^{n_k^{r}}$, for all $j\in\{1,2,\dots, r\}$ and all $k\geq j$. Then there exists a further subsequence $\{\mu_{n_k^{r+1}}\}_{k\in\N}$ such that $K_{r+1}^{n_{r+1}^{r+1}}=K_{r+1}^{n_k^{r+1}}$, for all $k\geq r+1$.}

\noindent \textit{Proof of Claim: } Let $\msc{K}$ be the set of squares $K'\subset K_{r}^{n_r^{r} }$ of side length $\sqrt{2}(2^{r+1}+1)$ such that $K'$ and $K_{r}^{n_r^{r} }$ have parallel sides and share a corner. As $(K_{0}^{n_k^r},K_1^{n_k^r},\dots,K_{r}^{n_k^r}, K_{r+1}^{n_k^r})$ is a sequence of nested squares, for all $k\geq r+1$, we have that $K_{r+1}^{n_k^r}\in\msc{K}$, for all $k\geq r+1$. But $\#\msc{K}=4$, and therefore at least one element of $\msc{K}$ appears infinitely often in the sequence $\{K_{r+1}^{n_k^r}\}_{k\geq r+1}$. We can therefore extract a subsequence $\{\mu_{n_k^{r+1}}\}_{k\in\N}$ of $\{\mu_{n_k^{r}}\}_{k\in\N }$ such that $K_{r+1}^{n_{r+1}^{r+1}}=K_{r+1}^{n_k^{r+1}}$, for all $k\geq r+1$, establishing the claim.

 Now, as $K_0^{n}=K_0^{0}$, for all $n\in\N$, we can apply a diagonalization argument together with the Claim to construct a subsequence $\{\mu_{n_k}\}_{k\in \N}$ of $\{\mu_n\}_{n\in\N}$ such that $K_j^{n_j}=K_j^{n_k}$ for all $j \in \N$ and all $k\geq j$.
Next, as $\msc{H}(\m{L})^p$ is $s$-regular, we have $ \inf_{\Lambda\in \m{L} }\mrm{sep}(\Lambda)\geq s >0$, and so $\sup_{k\in\N}\mrm{rel}(\supp(\mu_{n_k}))\lesssim s^{-1}<\infty$. Therefore, by passing to a further subsequence of $\{\mu_{n_k}\}_{k\in\N}$ if necessary, Lemma \ref{GroSetBA} implies the existence of a set $\Lambda^*\subset\R^2$ such that $\supp(\mu_{n_k})\xrightarrow[]{w} \Lambda^*$ as $k\to \infty$. Then,  as $\#(K_j^{n_j}\cap \supp(\mu_{n_j}) )\geq 2^{2j}+1$, we have 
\begin{equation*}
n^+\Big(\Lambda^*,[0, \sqrt{2}(2^j+1)+2]^2 \Big) \geq \#\Big(\Lambda^*\cap(K_j^{n_j}+B_1(0))\Big)\geq 2^{2j}+1,
\end{equation*}
 for all sufficiently large $j\in\N$, and so
\begin{equation}\label{eq:conv-thm-1/2}
D^+(\Lambda^*)\geq \limsup_{j\to\infty}\frac{2^{2j}+1}{\left(\sqrt{2}(2^j+1)+2 \right)^2}=\frac{1}{2}.
\end{equation}

Now, as $\m{L}$ is CSI, we have $\Lambda^*\in \m{L}$ and  $\Lambda^*+(\frac{s}{2},0)\in \m{L}$. Moreover, as $\sep(\supp(\mu_{n_k}))\geq s$, for all $k\in\N$, we have $\sep(\Lambda^*)\geq s$, and so $\mrm{ms}(\Lambda^*,\Lambda^*+(\frac{s}{2},0) )\geq\frac{s}{2}$. Therefore, as we assumed $\msc{H}(\m{L})^p$ to be identifiable, there exist $C_1,C_2>0$ depending on $p$ and $\m{L}$  and a probing signal $x\in M^1(\R)$ such that 
\begin{equation*}
C_1 \left(\frac{s}{2}\wedge 1\right)\norm{\nu}_{p} \leq \norm{ \sum_{\lambda\in\Lambda} \nu_\lambda \pi(\lambda) x }_{\MSp{p}{\R}} \leq C_2\norm{\nu}_{p},
\end{equation*}
for all $\nu\in\msc{M}^p$ supported on $\Lambda\coleqq \Lambda^*\cup\left(\Lambda^*+(\frac{s}{2},0)\right)$, and so by Theorems \ref{GroStab} and  \ref{GroBalianLow} we must have $D^+(\Lambda)<1$. On the other hand, \eqref{eq:conv-thm-1/2} implies
\begin{equation*}
D^+(\Lambda)=2D^+(\Lambda^*)\geq 1,
\end{equation*}
which stands in contradiction to $D^+(\Lambda)<1$. Our inital assumption must hence be false, concluding the proof of the theorem.
\end{proof}


\section{Proofs of Theorems  \ref{w*w*thm} and \ref{RobustConvThm}}

We start with the following proposition that quantifies the behavior of Riesz sums of time-frequency shifts of the probing signal $x$ under perturbation of the individual time-frequency shifts. We do so under a mild condition on the time-frequency spread of the probing signal. Concretely, $x$ will be assumed to be an element of the weighted modulation space $M^1_m(\R)=\{f\in \m{S}': \m{V}_\varphi f\in L^1_m(\R)\}$.

\begin{prop}\label{PerturLem}
Let $p\in[1,\infty)$, $\Lambda\subset\C$  a separated set, and let $\alpha \in\ell^p(\Lambda)$. Suppose that $x\in M^1_m(\R)$ with the weight function $m(z)=1+|z|$. Then
\begin{itemize}
\item[(i)] there exists a $\Phi\in W(L^\infty,L^1)$ depending only on $x$ such that
\begin{equation*}
|\m{V}_\varphi (x-\pi(\epsilon)x)|(u,v)\leq |\epsilon| \Phi(u,v),
\end{equation*}
for all $(u,v)\in\R^2$ and $\epsilon\in\C$ with $|\epsilon|\leq 1$.
\item[(ii)] \begin{equation*}
\Big\| \sum_{\lambda\in\Lambda}\alpha_\lambda \pi(\lambda) x-\sum_{\lambda\in\Lambda} \alpha_\lambda e^{2\pi i \Re (\lambda ) \Im(\varepsilon_\lambda) }\; \pi(\lambda+\varepsilon_\lambda)x \Big\|_{M^p(\R)}\lesssim_{\, p,s,x} \|\bve\|_{\ell^\infty (\Lambda)} \|\alpha\|_{\ell^p(\Lambda)},
\end{equation*}
for all $\bve =\{\varepsilon_\lambda\}_{\lambda\in\Lambda} \in\ell^\infty(\Lambda)$ such that $\|\bve\|_{\ell^\infty(\Lambda)}\leq 1$.
\end{itemize}
\end{prop}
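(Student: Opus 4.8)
The proof naturally splits into the two items, with item (i) being the analytic core and item (ii) following by a fairly mechanical Wiener-amalgam estimate combined with the STFT covariance identities.

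For item (i), the plan is to expand $\m{V}_\varphi(x-\pi(\epsilon)x)$ using linearity of the STFT and the standard covariance property $\m{V}_\varphi(\pi(\epsilon)x)(u,v) = e^{-2\pi i \Re(\epsilon)v}\,\m{V}_\varphi x(u-\Re(\epsilon), v-\Im(\epsilon))$ (see \cite[Lem. 3.1.3]{Groechenig2000}). Writing $w = (u,v)$ and $\epsilon=(a,b)$ with $|\epsilon|\le 1$, this gives
\begin{equation*}
|\m{V}_\varphi(x-\pi(\epsilon)x)(w)| \le |\m{V}_\varphi x(w) - \m{V}_\varphi x(w-\epsilon)| + |1 - e^{-2\pi i a v}|\,|\m{V}_\varphi x(w-\epsilon)|.
\end{equation*}
The second term is bounded by $2\pi |a|\,|v|\,|\m{V}_\varphi x(w-\epsilon)| \le 2\pi|\epsilon|\,|v|\,|\m{V}_\varphi x(w-\epsilon)|$; since $x\in M^1_m(\R)$ with $m(z)=1+|z|$, the function $|v|\,|\m{V}_\varphi x|$ lies in $L^1$, and translating by $\epsilon$ over $|\epsilon|\le 1$ can be absorbed into a $W(L^\infty,L^1)$-type majorant by taking a local supremum (this is where the amalgam structure enters). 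The first term is controlled by the mean value theorem: $|\m{V}_\varphi x(w) - \m{V}_\varphi x(w-\epsilon)| \le |\epsilon|\sup_{|y|\le 1}|\nabla \m{V}_\varphi x(w - y)|$, and $\nabla \m{V}_\varphi x$ can be expressed (up to smooth, rapidly decaying factors) in terms of $\m{V}_\varphi(tx)$, $\m{V}_\varphi(x')$, and $\m{V}_\varphi x$ itself, all of which are in $W(L^\infty, L^1)$ precisely because $x\in M^1_m(\R)$ and $\varphi\in\m{S}(\R)$. Combining the two pieces and defining $\Phi(w)$ to be the resulting (local-sup-regularized) majorant yields a single $\Phi\in W(L^\infty,L^1)$ depending only on $x$, with $|\m{V}_\varphi(x-\pi(\epsilon)x)(w)|\le|\epsilon|\,\Phi(w)$ for all $|\epsilon|\le 1$, as claimed.

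For item (ii), the idea is to telescope. Using the covariance relation $\pi(\lambda+\varepsilon_\lambda) = e^{2\pi i \Re(\lambda)\Im(\varepsilon_\lambda)}\,\pi(\varepsilon_\lambda)\pi(\lambda)$ (the commutation relation for time-frequency shifts, \cite[Lem. 1.3.1]{Groechenig2000}), the $\lambda$-th term of the difference becomes $\alpha_\lambda\,\pi(\lambda)(x - \pi(\varepsilon_\lambda)x)$ — wait, one must be careful with the order; writing it out, $e^{2\pi i \Re(\lambda)\Im(\varepsilon_\lambda)}\pi(\lambda+\varepsilon_\lambda)x = \pi(\lambda)\pi(\varepsilon_\lambda)x$ up to the correct phase bookkeeping, so the difference is $\sum_\lambda \alpha_\lambda\,\pi(\lambda)(x-\pi(\varepsilon_\lambda)x)$. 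Taking the $M^p$ norm, using the isometry of the STFT onto its image and then Lemma \ref{IrregConvLem} (or a direct amalgam convolution bound) with the observation that each $x - \pi(\varepsilon_\lambda)x$ has STFT dominated in modulus by $\|\bve\|_{\ell^\infty}\Phi$ via item (i), reduces the estimate to
\begin{equation*}
\Big\|\sum_{\lambda\in\Lambda}|\alpha_\lambda|\,\|\bve\|_{\ell^\infty}\,\Phi(\cdot-\lambda)\Big\|_{L^p(\R^2)} \lesssim_{\,p,s}\|\bve\|_{\ell^\infty}\,\|\Phi\|_{W(L^\infty,L^1)}\,\|\alpha\|_{\ell^p(\Lambda)},
\end{equation*}
where the implicit constant absorbs $\|\Phi\|_{W(L^\infty,L^1)}$, itself a function of $x$ only; this is exactly the form of Lemma \ref{IrregConvLem} applied to the separated set $\Lambda$ with $\sep(\Lambda)=s$. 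One small technical point is that the STFT of $\sum_\lambda \alpha_\lambda \pi(\lambda) f$ is not literally $\sum_\lambda \alpha_\lambda (\m{V}_\varphi f)(\cdot-\lambda)$ — there are unimodular phase factors from the covariance relation — but since Lemma \ref{IrregConvLem} bounds in terms of the modulus $|\alpha_\lambda|$ of the coefficients and those phases are unimodular, they are harmless; the dependence of $\m{V}_\varphi\varphi$ on $\varphi$ (a fixed Gaussian) also contributes only a universal constant.

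\textbf{Main obstacle.} The delicate step is item (i): one must produce a \emph{single} majorant $\Phi\in W(L^\infty,L^1)$ valid uniformly over all perturbations $|\epsilon|\le 1$, which forces the translation-by-$\epsilon$ ambiguity to be handled by passing to local suprema, and one must verify that $\nabla\m{V}_\varphi x$ — equivalently the ``derivative-type'' STFTs $\m{V}_\varphi(tx)$ and $\m{V}_{\varphi'} x$ — genuinely lie in $W(L^\infty,L^1)$ given only $x\in M^1_m(\R)$ with the linear weight $m(z)=1+|z|$. This is precisely the point where the weight $m(z)=1+|z|$ (rather than $m\equiv 1$) is needed: it supplies the extra decay that compensates for the loss of one power of $|z|$ from differentiation, so that after the mean value estimate the result is still integrable. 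Once this is in place, item (ii) is essentially bookkeeping with covariance relations plus an invocation of Lemma \ref{IrregConvLem}.
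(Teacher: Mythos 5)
Your proposal is correct and follows essentially the same route as the paper: a mean-value/fundamental-theorem-of-calculus bound produces the factor $|\epsilon|$, a local supremum over $|\epsilon|\le 1$ yields a single majorant in $W(L^\infty,L^1)$ (with the weight $m(z)=1+|z|$ absorbing the derivative-type terms $\m{V}_{\varphi'}x$ and the $|u|+|v|$ factors), and item (ii) follows from the commutation relation together with Lemma \ref{IrregConvLem}. The only cosmetic difference is that the paper splits $\pi(\epsilon)$ into its modulation and translation factors and handles the modulation part via the fundamental identity of time-frequency analysis (reducing it to a translation of $\widehat{x}$), whereas you apply the two-dimensional mean value theorem to $\m{V}_\varphi x$ directly, which additionally uses the (true) identity $\partial_v\m{V}_\varphi x=-2\pi i\,\m{V}_\varphi(tx)$ and the fact that $tx\in M^1(\R)$ whenever $x\in M^1_m(\R)$.
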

\begin{proof}
(i) Fix an $\epsilon\in\C$ with $|\epsilon|\leq 1$. We split
\begin{equation}\label{PerturLemSplit}
|\m{V}_\varphi(x-\pi(\epsilon)x)|\leq |\m{V}_\varphi(x-\modu_{\Im(\epsilon) }x)|+|\m{V}_\varphi \modu_{\Im(\epsilon) }(x-\tra_{\Re(\epsilon)}x)|
\end{equation}
and bound each term on the right-hand side separately, beginning with the second term. To this end, we first define the auxiliary quantity $F (u,v)=(\m{V}_{\varphi}x)(u,v)e^{2\pi i uv}$. Then, for all $(u,v)\in\R^2$ and $\tau\in \R$, we have
\begin{equation*}
\begin{aligned}
|\m{V}_{\varphi}(x-\tra_\tau x)|(u,v)&=|(\m{V}_{\varphi} x)(u,v) - e^{-2\pi i \tau v} (\m{V}_{\varphi}x)(u-\tau,v) |=| F(u,v)- F(u-\tau,v)|\\
&=\abs{\int_{-\tau}^0 (\partial_{u}F)(u+r,v)\, \mrm{d}r} \leq |\tau| \sup_{r\in[-\tau,0]}|(\partial_u F)(u+r,v)|.
\end{aligned}
\end{equation*}
Therefore, for all $(u,v)\in\R^2$,
\begin{align}
|\m{V}_\varphi \modu_{\Im(\epsilon)}(x-\tra_{\Re(\epsilon) }x)|(u,v)&=|\m{V}_\varphi(x-\tra_{\Re(\epsilon) }x)|(u,v - \Im(\epsilon) ) \notag\\
&\leq |\Re(\epsilon)| \sup_{r\in[-\Re(\epsilon),0]}|(\partial_{u}F)(u+r,v- \Im(\epsilon))|\notag\\
&\leq |\epsilon| \;\cdot\hspace*{-1em} \sup_{\|(r_1,r_2)\|_{\infty}\leq 1}|(\partial_{u}F)(u+r_1,v+r_2)|,\label{PerturLem2ndTerm}
\end{align}
where we used the assumption $|\epsilon|\leq 1$.
Next, recalling the definition of $F$, we have
\begin{align}
|(\partial_u F)(u,v)|&=\left|\left( \partial_u (\m{V}_{\varphi}x)(u,v) \right)e^{2\pi i u v} +  (\m{V}_{\varphi}x)(u,v) \cdot \partial _u e^{2\pi i u v} \right| \notag\\
&=\left| \partial_u \langle x,\modu_v\tra_u\varphi\rangle +2\pi i v \langle x,\modu_v\tra_u\varphi\rangle\right|\notag\\
&\leq|(\m{V}_{\varphi'}x)(u,v)|+2\pi |v| |(\m{V}_\varphi x)(u,v)|,\label{PerturLemAbound}
\end{align}
where interchanging $\partial_u$ with the inner product in the last step is justified due to $\m{V}_{\varphi'}x\in L^1_m(\R^2)$ (which follows from \cite[Prop. 12.1.2]{Groechenig2000}). 
Therefore, \eqref{PerturLem2ndTerm} and \eqref{PerturLemAbound} together yield
\begin{equation}\label{PerturLem2ndTerm-bound}
|\m{V}_\varphi \modu_{\Im(\epsilon) }(x-\tra_{\Re(\epsilon) }x)|(u,v)\leq |\epsilon| \;\cdot\hspace*{-3mm}  \sup_{\|(r_1,r_2)\|_{\infty}\leq 1} \Psi(u+r_1,v+r_2),
\end{equation}
for all $(u,v)\in \R^2$, where we set
\begin{equation*}
\Psi(u,v)\coleqq |(\m{V}_{\varphi'}x)(u,v)|+2\pi \left(|u|+|v|\right) |(\m{V}_\varphi x)(u,v)|.
\end{equation*}
We bound the first term in \eqref{PerturLemSplit} in a similar manner, this time using another auxiliary quantity, namely $G (u,v)=(\m{V}_{\varphi}\widehat{x})(u,v)e^{2\pi i uv}$. Then, using the fundamental identity of time-frequency analysis \cite[eq. (3.10)]{Groechenig2000}
\begin{equation*}
(\m{V}_f  g)(u,v)=e^{-2\pi i u v}(\m{V}_{\widehat{f}}  \widehat{g})(v,-u), \qquad  f\in\m{S}, g\in\m{S}', (u,v)\in\R^2,
\end{equation*}
 and the fact that $\widehat{\varphi}=\varphi$, we obtain
\begin{align}
|\m{V}_\varphi(x-\modu_{\Im(\epsilon) }x)|(u,v)&=|\m{V}_{\widehat{\varphi}}(\widehat{x}-\tra_{\Im(\epsilon)}\widehat{x})|(v,-u)\notag\\
&\leq |\epsilon|\;\cdot\hspace*{-1em} \sup_{\|(r_1,r_2)\|_{\infty}\leq 1}|(\partial_{u}G)(v+r_1,-(u+r_2))|,\label{PerturLem1stTerm}
\end{align}
and
\begin{align}
|(\partial_u G)(v,-u)|&\leq|(\m{V}_{\varphi'}\widehat{x})(v,-u)|+2\pi |u| |(\m{V}_\varphi \widehat{x})(v,-u)|\label{PerturLemAhatbd-1}\\
&\leq|(\m{V}_{i \widehat{\varphi'}}\, \widehat{x})(v,-u)|+2\pi |u| |(\m{V}_{\widehat \varphi}\,  \widehat{x})(v,-u)|\label{PerturLemAhatbd-2}\\
&=|(\m{V}_{\varphi'}x)(u,v)|+2\pi |u| |(\m{V}_\varphi x)(u,v)|\label{PerturLemAhatbd}
\end{align}
for all $(u,v)\in\R^2$,  where \eqref{PerturLemAhatbd-1} is obtained analogously to \eqref{PerturLemAbound}, in \eqref{PerturLemAhatbd-2} we used $\varphi'= i \widehat{\varphi'}$, and in \eqref{PerturLemAhatbd} we again used the fundamental identity of time-frequency analysis.

Combining \eqref{PerturLem1stTerm} and \eqref{PerturLemAhatbd} thus yields
\begin{equation}\label{PerturLem1stTerm-bound}
|\m{V}_\varphi(x-\modu_{\Im(\epsilon)}x)|(u,v)\leq |\epsilon| \;\cdot\hspace*{-3mm}  \sup_{\|(r_1,r_2)\|_{\infty}\leq 1} \Psi(u+r_1,v+r_2),
\end{equation}
and so \eqref{PerturLem2ndTerm-bound} and \eqref{PerturLem1stTerm-bound} together give
\begin{equation*}
|\m{V}_\varphi(x-\pi(\epsilon)x)|(u,v)\leq 2|\epsilon|\;\cdot\hspace*{-3mm} \sup_{\|(r_1,r_2)\|_{\infty}\leq 1} \Psi(u+r_1,v+r_2).
\end{equation*}
Therefore, in order to complete the proof of item (i), it suffices to take 
\begin{equation*}
\Phi(u,v)= \sup_{\|(r_1,r_2)\|_{\infty}\leq 1} 2\Psi(u+r_1,v+r_2),
\end{equation*}
and show that $\Phi\in W(L^\infty,L^1)$. In fact, as
\begin{align}
\|\Phi\|_{W(L^\infty,L^1)}&\lesssim \int_{\R^2}\sup_{\|(y_1,y_2)\|_{\infty}\leq 1} \hspace*{-1em}\Phi(u+y_1,v+y_2)\,\mrm{d}u \mrm{d}v \leq 2\int_{\R^2}\sup_{\|(r_1,r_2)\|_{\infty}\leq 2} \hspace*{-1em}\Psi(u+r_1,v+r_2)\,\mrm{d}u\mrm{d}v\notag\\
&\lesssim \|\Psi\|_{W(L^\infty,L^1)}\notag,
\end{align}
it suffices to establish that $\Psi\in W(L^\infty,L^1)$. To this end, note that $\|\m{V}_{\varphi'}x\|_{L^1(\R^2)}\asymp \|x\|_{M^1(\R)}$ (see \cite[Prop. 11.4.2]{Groechenig2000}), and so by \cite[Prop. 12.1.11]{Groechenig2000} we have $\m{V}_{\varphi'}x\in W(L^\infty,L^1)$. Next, as $\varphi\in M^1_m(\R)$ and $x\in M^1_m(\R)$, we have by \cite[Prop. 12.1.11]{Groechenig2000} that $\m{V}_\varphi x\in W(L^\infty,L^1_m)$. Therefore, using $|u|+|v|\lesssim 1 + |u+iv|=m(u+iv)$, we have
\begin{equation*}
\|\Psi\|_{W(L^\infty,L^1)}\lesssim \|\m{V}_{\varphi'}x \|_{W(L^\infty,L^1)}+\|\m{V}_{\varphi}x \|_{W(L^\infty,L^1_m)}<\infty,
\end{equation*}
as desired.

\noindent(ii)  Recalling the definition of $\|\cdot\|_{M^p(\R)}$, we have
\begin{align}
&\Big\| {\sum_{\lambda\in\Lambda}\alpha_\lambda \pi(\lambda) x-\sum_{\lambda\in\Lambda} \alpha_\lambda e^{2\pi i \Re(\lambda)\Im(\varepsilon_\lambda)}\; \pi(\lambda+\varepsilon_\lambda)x} \Big\|_{M^p(\R)}\notag\\
=\;&\Big\| {\sum_{\lambda\in\Lambda}\alpha_\lambda  \m{V}_\varphi( \pi(\lambda) x)-\sum_{\lambda\in\Lambda} \alpha_\lambda e^{2\pi i \Re(\lambda)\Im(\varepsilon_\lambda)}\;  \m{V}_\varphi(\pi(\lambda+\varepsilon_\lambda)x) } \Big\|_{L^p(\R^2)}\notag\\
=\;&\Big\| {\sum_{\lambda\in\Lambda}\alpha_\lambda\, \m{V}_\varphi( \pi(\lambda)(x-\pi(\varepsilon_\lambda)x))} \Big\|_{L^p(\R^2)},\notag
\end{align}
where we used the commutation relation $\pi(\lambda+\varepsilon_\lambda)=e^{-2\pi i \Re(\lambda)\Im(\varepsilon_\lambda)}\pi(\lambda)\pi(\varepsilon_\lambda)$. Now, by item (i) we have 
\begin{equation*}
\abs{\m{V}_\varphi( \pi(\lambda)(x-\pi(\varepsilon_\lambda)x)) }=\abs{\m{V}_\varphi(x-\pi(\varepsilon_\lambda)x)}(\,\cdot-\lambda)\leq \|\bve\|_{\ell^\infty(\Lambda)} \Phi(\,\cdot-\lambda)
\end{equation*}
pointwise, for all $\lambda\in\Lambda$, and so, by Lemma \ref{IrregConvLem}, we find that
\begin{equation*}\label{PerturLemFin}
\Big\|{\sum_{\lambda\in\Lambda}\alpha_\lambda\, \m{V}_\varphi ( \pi(\lambda)(x-\pi(\varepsilon_\lambda)x) )}\Big\|_{L^p(\R^2)}\hspace*{-3mm}\leq\Big\|{\sum_{\lambda\in\Lambda}|\alpha_\lambda| \|\bve\|_{\ell^\infty(\Lambda)} \Phi (\,\cdot -\lambda)}\Big\|_{L^p(\R^2)} \lesssim_{\, p,s,x} \|\bve\|_{\ell^\infty(\Lambda)} \|\alpha\|_{\ell^p(\Lambda)}.
\end{equation*}
This establishes (ii) and completes the proof.
\end{proof}

We next show that weak-* convergence of measures $\mu_n\in \msc{M}^p_s$ implies weak-* convergence of the measurements $\linOp_{\mu_n}x\in M^p(\R)$.

\begin{prop}\label{ForConProp}
Let $p\in(1,\infty)$, $x\in M^1_m(\R)$ with the weight function $m(z)=1+|z|$, and let $\{\mu_n\}_{n\in\N}\subset \msc{M}^p_s$ be a sequence converging to some $\mu\in \msc{M}^p_s$ in the weak-* topology $W(C_0,L^q)$. Then $\linOp_{\mu_n}x\to \linOp_{\mu}x$ in the weak-* topology of $M^p(\R)$.
\end{prop}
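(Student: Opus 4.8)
The plan is to unwind both weak-* topologies into statements about the short-time Fourier transform and then exploit the duality structure set up in Section \ref{sec:LatticesWiener}. Recall that weak-* convergence $\mu_n\xrightarrow{w^*}\mu$ in $\msc{M}^p_s\subset W(\m{M},L^p)$ means $\sum_{\lambda\in\supp(\mu_n)}\overline{f(\lambda)}\mu_n(\{\lambda\})\to\sum_{\lambda\in\supp(\mu)}\overline{f(\lambda)}\mu(\{\lambda\})$ for every test function $f\in W(C_0,L^q)$, while weak-* convergence $\linOp_{\mu_n}x\to\linOp_\mu x$ in $M^p(\R)$ (whose predual is $M^q(\R)$ via the pairing \eqref{MSDualPairing}) means $\langle\linOp_{\mu_n}x,y\rangle_{M^p(\R)\times M^q(\R)}\to\langle\linOp_\mu x,y\rangle_{M^p(\R)\times M^q(\R)}$ for every $y\in M^q(\R)$. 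So the first step is to fix an arbitrary $y\in M^q(\R)$ and compute, using item (ii) of Proposition \ref{wellDefProp} together with the definition $\linOp_\mu x=\sum_{\lambda\in\supp(\mu)}\mu(\{\lambda\})\pi(\lambda)x$,
\begin{equation*}
\langle\linOp_\mu x,y\rangle_{M^p(\R)\times M^q(\R)}=\sum_{\lambda\in\supp(\mu)}\mu(\{\lambda\})\,\overline{\langle y,\pi(\lambda)x\rangle_{M^q(\R)\times M^p(\R)}}=\sum_{\lambda\in\supp(\mu)}\overline{g_{x,y}(\lambda)}\,\mu(\{\lambda\}),
\end{equation*}
where $g_{x,y}(\lambda)\coleqq\langle y,\pi(\lambda)x\rangle_{M^q(\R)\times M^p(\R)}=\overline{(\m{V}_x y)(\lambda)}$ up to a unimodular phase. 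The desired conclusion is then exactly $\sum_\lambda\overline{g_{x,y}(\lambda)}\mu_n(\{\lambda\})\to\sum_\lambda\overline{g_{x,y}(\lambda)}\mu(\{\lambda\})$, which is precisely the weak-* convergence hypothesis \emph{applied to the test function} $f=g_{x,y}$.

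Hence the crux of the proof — and the step I expect to be the main obstacle — is verifying that $g_{x,y}$ is an admissible test function, i.e. that $g_{x,y}\in W(C_0,L^q)$. This requires two things. First, continuity and decay at infinity: $g_{x,y}(\lambda)=\langle y,\pi(\lambda)x\rangle$ is (a phase times) the cross-STFT $\m{V}_x y$, which is continuous since time-frequency shifts act continuously, and one must argue it vanishes at infinity. Second, and more substantively, the Wiener-amalgam membership $\|\sup_{|w|\le1}|g_{x,y}(\cdot+w)|\|_{L^q}<\infty$. The natural tool here is Gröchenig's convolution relation for STFTs: for $x\in M^1(\R)$ (indeed $x\in M^1_m(\R)\subset M^1(\R)$) and $y\in M^q(\R)$, one has the pointwise bound $|\m{V}_\varphi y|*|\m{V}_\varphi x|^{\vee}$ controlling $|\m{V}_x y|$ (cf. \cite[Lem. 11.3.3]{Groechenig2000}), combined with the fact that $\m{V}_\varphi x\in W(L^\infty,L^1)$ (since $x\in M^1(\R)$, by \cite[Prop. 12.1.11]{Groechenig2000}) and $\m{V}_\varphi y\in L^q(\R^2)$ (since $y\in M^q(\R)$). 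Young-type convolution inequalities for Wiener amalgam spaces — specifically $W(L^\infty,L^1)*L^q\hookrightarrow W(C_0,L^q)$, or more simply $W(C_0,L^q)*W(C_0,L^1)\subseteq W(C_0,L^q)$ — then give $\m{V}_x y\in W(C_0,L^q)$, hence $g_{x,y}\in W(C_0,L^q)$. Here the hypothesis $x\in M^1(\R)$ (rather than merely $M^1_m$) is exactly what is needed; the stronger weight $m$ is not required for this particular proposition, though it is used elsewhere.

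With $g_{x,y}\in W(C_0,L^q)$ established, the proof closes in one line: the weak-* convergence hypothesis $\mu_n\to\mu$ in $W(\m{M},L^p)$, tested against $g_{x,y}$, yields $\langle\linOp_{\mu_n}x,y\rangle_{M^p(\R)\times M^q(\R)}\to\langle\linOp_\mu x,y\rangle_{M^p(\R)\times M^q(\R)}$; since $y\in M^q(\R)$ was arbitrary and $M^q(\R)$ is the predual of $M^p(\R)$, this is by definition weak-* convergence $\linOp_{\mu_n}x\to\linOp_\mu x$ in $M^p(\R)$. One should also note at the outset that $\linOp_{\mu_n}x,\linOp_\mu x\in M^p(\R)$ are well-defined elements by Proposition \ref{wellDefProp}, so the pairings make sense. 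The only mildly delicate bookkeeping is tracking the unimodular phase factor $e^{-\pi i\tau\nu}$ (from Lemma \ref{STFT=Barg-lem} or the commutation relations) relating $g_{x,y}$ to $\overline{\m{V}_x y}$ — but multiplication by a bounded continuous unimodular function preserves membership in $W(C_0,L^q)$, so this causes no difficulty.
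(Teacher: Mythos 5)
Your proposal is correct and follows essentially the same route as the paper: reduce weak-* convergence in $M^p(\R)$ to testing against $y\in M^q(\R)$, identify $\langle y,\pi(\cdot)x\rangle$ with the cross-STFT $\m{V}_x y$, show it lies in $W(C_0,L^q)$ via the pointwise domination $|\m{V}_x y|\leq |\m{V}_\varphi y|\ast|\m{V}_\varphi x(-\,\cdot\,)|$ together with $\m{V}_\varphi x\in W(L^\infty,L^1)$ and the amalgam convolution relation, and then invoke the hypothesis with this test function. The only cosmetic difference is that the paper establishes continuity of $\lambda\mapsto\langle y,\pi(\lambda)x\rangle$ explicitly through Proposition \ref{PerturLem}(i) rather than citing strong continuity of time-frequency shifts, and your observation that $x\in M^1(\R)$ already suffices here is accurate.
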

\begin{proof}
As $p\in(1,\infty)$, $M^p(\R)$ is reflexive and so its weak and weak-* topologies coincide \cite[Thm. 11.3.6]{Groechenig2000}. Due to the dual pairing \eqref{MSDualPairing}, this topology is generated by the linear functionals $\langle \,\cdot\, , y\rangle_{M^p(\R)\times M^q(\R)}$, for $y\in M^q(\R)$, and so we have to show that
\begin{equation}\label{eq:ForConProp-1}
\lim_{n\to\infty} \langle \linOp_{\mu_n}x, y\rangle_{M^p(\R)\times M^q(\R)}= \langle \linOp_{\mu}x, y\rangle_{M^p(\R)\times M^q(\R)},\quad\text{for all }y\in M^q(\R).
\end{equation}
Now, for $y\in M^q(\R)$, set $f_y(\lambda)\coleqq \langle y,\pi(\lambda)x\rangle_{M^q(\R)\times M^p(\R)}$, for $\lambda=\tau+i\nu$. If we show that $f_y\in C_0$, we will then have
\begin{equation}\label{ForConPropEq1}
\langle \linOp_{\mu_n}x, y\rangle_{M^p(\R)\times M^q(\R)}= \sum_{\lambda\in\supp(\mu_n)}\mu_n(\{\lambda \})\langle \pi(\lambda)x,y\rangle_{M^p(\R)\times M^q(\R)}= \langle \mu_n, f_y\rangle_{W(\m{M},L^p)\times W(C_0,L^q)},
\end{equation}
since the dual pairing is continuous in its first argument, and so, as $\mu_{n}\xrightarrow[]{w^*} \mu$ by assumption, \eqref{ForConPropEq1} will imply \eqref{eq:ForConProp-1}. Therefore, in order to complete the proof it suffices to show that $f_y\in C_0$, for all $y\in M^q(\R)$.

To this end, fix an arbitrary $y\in M^q(\R)$ and note that then $\m{V}_x \, y\in W(L^\infty, L^q)$ by \cite[Thm. 12.2.1]{Groechenig2000}. On the other hand, H\"{o}lder's inequality yields
\begin{align}
\abs{f_y(\lambda)}=\abs{\langle y,\pi(\lambda)x\rangle_{M^q(\R)\times M^p(\R)}}&\leq \int_{\R^2}\abs{(\m{V}_\varphi y)(u,v)}\abs{(\m{V}_\varphi \pi(\lambda)x)(u,v)}\,\mrm{d}u\mrm{d}v\notag\\
&\leq\int_{\R^2}\abs{(\m{V}_\varphi y)(u,v)}\abs{(\m{V}_\varphi x)(u-\tau,v-\nu)}\,\mrm{d}u\mrm{d}v\notag\\
&=(|\m{V}_\varphi y|\ast |\m{V}_\varphi x (-\,\cdot\, )|)(\tau,\nu).\label{eq:ForConProp-2}
\end{align}
Next, as $x\in M^1(\R)$, we have $\m{V}_\varphi x\in W(L^\infty,L^1)$ by \cite[Thm. 12.2.1]{Groechenig2000}, which, together with $\m{V}_\varphi y\in L^q(\R^2)$ and \eqref{eq:ForConProp-2}, implies  by \cite[Thm. 11.1.5]{Groechenig2000} that $f_y\in W(L^\infty,L^q)$. This, in particular, shows that $f_y(z)\to 0$ as $|z|\to\infty$.
Consider now arbitrary $\lambda\in\C$  and $\epsilon\in\C$ with $|\epsilon|\leq 1$. Then, by applying item (i) of Proposition \ref{PerturLem} with $x$ replaced by $\pi(\lambda)x$, we find a $\Phi_\lambda \in W(L^\infty,L^1)\subset W(L^\infty, L^p)$ depending only on $x$ and $\lambda$ such that
\begin{equation*}
|\m{V}_{\varphi}(\pi(\lambda)x-\pi(\epsilon)\pi(\lambda)x)|\leq |\epsilon| \cdot \Phi_\lambda
\end{equation*}
pointwise. We thus have
\begin{align}
\|\pi(\lambda)x-\pi(\lambda+\epsilon)x\|_{M^p(\R)}&=\|\pi(\lambda)x-e^{2\pi i \Re(\epsilon)\Im(\lambda)}\pi(\epsilon)\pi(\lambda)x\|_{M^p(\R)}\notag\\
&=\|\pi(\lambda)x\|_{M^p(\R)}\abs{1-e^{2\pi i \Re(\epsilon)\Im(\lambda)}}+\|\pi(\lambda)x-\pi(\epsilon)\pi(\lambda)x\|_{M^p(\R)}\notag\\
&\leq \|\pi(\lambda)x\|_{M^p(\R)}\abs{1-e^{2\pi i \Re(\epsilon)\Im(\lambda)}}+ |\epsilon|\cdot  \|\Phi_\lambda\|_{L^q(\R^2)}\notag,
\end{align}
and so $\lim_{\epsilon\to 0}\|\pi(\lambda+\epsilon)x\to\pi(\lambda)x\|_{M^p(\R)}$. Therefore, by the continuity of the dual pairing in the second argument, we get
\begin{equation*}
f_y(\lambda+\epsilon)=\langle y,\pi(\lambda+\epsilon)x\rangle_{M^q(\R)\times M^p(\R)}\to \langle y,\pi(\lambda)x\rangle_{M^q(\R)\times M^p(\R)}=f_y(\lambda)\quad\text{as }\epsilon\to 0,
\end{equation*}
and so, as $\lambda$ was arbitrary, we deduce that $f_y$ is continuous. We have hence established that $f_y\in C_0$, completing the proof.
\end{proof}

We are now ready to prove Theorems \ref{w*w*thm} and \ref{RobustConvThm}. In addition to Propositions \ref{PerturLem} and \ref{ForConProp}, we will need the Banach-Alaoglu theorem as well as the inequality \eqref{eq:WnormEquiv-0}.

\begin{proof}[Proof of Theorem \ref{w*w*thm}] (i) Let $\mu,\wtd{\mu}\in\msc{H}(\m{L})^p$ be such that $\linOp_{\wtd\mu}\,x= \linOp_{\mu}x$, write ${\mu}=\sum_{\lambda\in\Lambda}\alpha_{\lambda}\delta_{{\lambda}}$, $\wtd{\mu}=\sum_{\wtd\lambda\in\wtd\Lambda}\alpha_{\wtd\lambda}\delta_{\wtd{\lambda}}$, and for $R>0$ define
\begin{align}
\delta_R&=\min\left\{|\lambda-\wtd{\lambda}|:\lambda\in \Lambda\cap B_R(0),\; \wtd{\lambda}\in\wtd{\Lambda}\setminus \Lambda\right\}\wedge \frac{s}{2}\wedge 1,\quad\text{and}\notag\\
\wtd{\Lambda}_R&=\{\wtd{\lambda}\in\wtd{\Lambda}\setminus \Lambda:|\wtd{\lambda}|>R+\delta_R,\, d(\wtd{\lambda},\Lambda)\leq \delta_R\}\notag.
\end{align}
Informally, $\delta_R$ is the distance between $\Lambda$ and $\wtd{\Lambda}$ restricted to the disk $B_R(0)$ (not counting the points in $\Lambda\cap\wtd \Lambda$), and $\wtd{\Lambda}_R$ is the part of the support of $\wtd{\mu}$ which is at least $R+\delta_R$ away from the origin and everywhere within $\delta_R$ of $\Lambda$. 
Fix an $R>0$, and, for $\wtd{\lambda}\in\wtd{\Lambda}_R$, write $\lambda(\wtd{\lambda})$ for the point of $\Lambda$ such that $|\wtd{\lambda}-\lambda(\wtd{\lambda})|\leq\delta_R$. Note that this point is unique, as $\delta_R\leq s/2$. Next, define the measures
\begin{align}
\wtd\mu_R^1&=\sum_{\wtd{\lambda}\in\wtd{\Lambda}_R}\alpha_{\,\wtd{\lambda}}\; e^{2\pi i\,  \Im(\lambda(\wtd\lambda)-\wtd{\lambda})\Re(\wtd\lambda) }\delta_{\lambda(\wtd{\lambda})},\quad\text{and}\notag\\
\wtd\mu_R^2&=\sum_{\wtd{\lambda}\in\wtd\Lambda\setminus\wtd{\Lambda}_R}\alpha_{\, \wtd{\lambda}}\,\delta_{\,\wtd{\lambda}}\notag
\end{align}
and note that $\wtd\mu_R^1$ is the measure obtained by ``shifting the support'' of the restricted measure $\wtd\mu\mathds{1}_{\wtd\Lambda_R}$ onto $\Lambda$, and $\wtd\mu_R^2$ is the remaining part of $\wtd{\mu}$.

Next, we have $\mrm{ms}(\supp(\mu-\wtd{\mu}^1_R),\supp(\wtd{\mu}^2_R))\geq \delta_R$.
Now, as $\supp(\mu-\wtd\mu_R^1)\subset\supp(\mu)$ and $\supp(\wtd{\mu}_R^2)\subset \supp(\wtd{\mu})$, we have that  $\mu-\wtd\mu_R^1$ and  $\wtd{\mu}_R^2$ are elements of $\msc{H}(\m{L})^p$, and so the identifiability condition \eqref{eq: identifiability set of operators} can be applied to the measures $\mu-\wtd\mu_R^1$ and $\wtd{\mu}_R^2$, yielding
\begin{align}
C_1(\delta_R\wedge 1)\|\mu-\wtd\mu_R^1-\wtd{\mu}_R^2\|_p&\leq \|\linOp_\mu x-\linOp_{\wtd\mu_R^1}x-\linOp_{\wtd{\mu}_R^2}x\|_{M^p(\R)}\notag\\
&=\|\linOp_{\wtd\mu} x-\linOp_{\wtd\mu_R^1}x-\linOp_{\wtd{\mu}_R^2}x\|_{M^p(\R)}\notag\\
&=\Big\| {\sum_{\wtd{\lambda}\in\wtd{\Lambda}_R}\alpha_{\,\wtd\lambda}\,\pi(\wtd\lambda)x
-\sum_{\wtd{\lambda}\in\wtd{\Lambda}_R}\alpha_{\,\wtd{\lambda}}\,e^{2\pi i\, \Im(\lambda(\wtd\lambda)-\wtd{\lambda})\Re(\wtd\lambda) }\pi({\lambda(\wtd{\lambda})})x
}\Big\|_{M^p(\R)}\notag\\
&\lesssim_{\, p,s,x} \,\delta_R\cdot \|\{\alpha_{\lambda}\}_{\lambda\in\wtd\Lambda_R}\|_{\ell^p}\notag ,
\end{align}
where in the last step we used item (ii) of Proposition \ref{PerturLem} with $\bve=\{\lambda(\wtd{\lambda}) - \wtd{\lambda}\}_{\wtd{\lambda}\in\wtd{\Lambda}_R}$, noting that $|\lambda(\wtd{\lambda}) - \wtd{\lambda}|\leq \delta_R\leq 1$, for all $\wtd{\lambda}\in\wtd{\Lambda}_R$. Therefore, by dividing both sides by $\delta_R$, we obtain
\begin{equation*}
\|\mu-\wtd\mu_R^1-\wtd{\mu}_R^2\|_p\lesssim_{\,p,\msc{H}(\m{L})^p,x  }\|\{\alpha_{\lambda}\}_{\lambda\in\wtd\Lambda_R}\|_{\ell^p}.
\end{equation*}
Now, as $R>0$ was arbitrary and $ \|\{\alpha_{\lambda}\}_{\lambda\in\wtd\Lambda_R}\|_{\ell^p}\to 0$ as $R\to \infty$, we deduce that $\|\mu-\wtd\mu_R^1-\wtd{\mu}_R^2\|_p\to 0$ as $R\to \infty$. Moreover, as $\|(\mu-\wtd\mu)\mathds{1}_{B_{R}(0)}\|_p\leq \|\mu-\wtd\mu_R^1-\wtd{\mu}_R^2\|_p$, we obtain $\|(\mu-\wtd\mu)\mathds{1}_{B_{R}(0)}\|_p\to 0$ as $R\to \infty$, which implies $\wtd{\mu}=\mu$ and hence completes the proof of (i).

\noindent(ii) The ``if'' direction follows immediately by Proposition \ref{ForConProp}. To show the ``only if'' direction, suppose that $\linOp_{\mu_n}x\to \linOp_{\mu}x$ in the weak-* topology of $M^p(\R)$.
It then suffices to establish that every subsequence of $\{\mu_n\}_{n\in\N}$ has a further subsequence that converges to $\mu$ in the weak-* topology of $\msc{M}_s^p$. To this end, fix an arbitrary subsequence $\{\mu_{n_k}\}_{k\in\N}$  of $\{\mu_n\}_{n\in\N}$ and let $\Lambda=\supp(\mu)$ and $\Lambda_k=\supp(\mu_{n_k})$. We then have $\limsup_{n}\mrm{rel}(\Lambda_k)\lesssim s^{-2}<\infty$, and so by \cite[Lem. 4.5]{Groechenig2015}, $\{\Lambda_k\}_{k\in\N}$ has a subsequence $\{\Lambda_{k_\ell}\}_{\ell\in\N}$ that converges weakly to a relatively separated set $\wtd{\Lambda}$. Note that, as $\sep(\Lambda_k)\geq s$, for all $k\in\N$, we also have $\sep(\wtd{\Lambda})\geq s$. Now, using \eqref{eq:WnormEquiv-0} and the identifiability condition \eqref{eq: identifiability set of operators}, we have
\begin{equation*}
\sup_{k\in\N}\|\mu_{n_{k_\ell}}\|_{W(\m{M},L^p)}\lesssim_{\,p,s} \sup_{k\in\N} \|\mu_{n_{k_\ell}}\|_{p}\lesssim_{\,p, \msc{H}(\m{L})^p,x } \; \sup_{k\in\N} \|\linOp_{\mu_{n_{k_\ell}}}x\|_{M^p(\R)}<\infty,
\end{equation*}
and therefore, by the Banach-Alaoglu theorem \cite[Thm. 3.15]{Rudin1991}, $\{\mu_{n_{k_\ell}}\}_{\ell\in\N}$ has a subsequence, which we will w.l.o.g. also denote by $\{\mu_{n_{k_\ell}}\}_{\ell\in\N}$ to lighten notation, such that $\mu_{n_{k_\ell}}\xrightarrow[]{w^*}  \wtd{\mu}$, for some $\wtd{\mu}\in W(\m{M},L^p)$. Note that then $\supp(\wtd{\mu})\subset \wtd{\Lambda}$ as $\Lambda_{k_\ell}\xrightarrow[]{w} \wtd\Lambda $, and so $\wtd{\mu}\in \msc{H}(\m{L})^p $.
It remains to show that $\wtd{\mu}=\mu$. To this end, note that by Proposition \ref{ForConProp} we have $\linOp_{\mu_n}x\to \linOp_{\wtd{\mu}}x$ in the weak-* topology of $M^p(\R)$, and so by uniqueness of weak-* limits we deduce that $\linOp_{\wtd{\mu}}x=\linOp_{{\mu}}x$.  From this it follows by item (i) of the theorem that $\wtd{\mu}=\mu$, which finishes the proof.
\end{proof}

\begin{proof}[Proof of Theorem \ref{RobustConvThm}]
Fix $\epsilon>0$ and an arbitrary finite subset $\wtd\Lambda\subset\Lambda$ such that $\|\mu-\wtd\mu\|_{p}<\epsilon$, where we set $\wtd\mu=\mu\mathds{1}_{\wtd\Lambda}$, and write $\wtd\mu=\sum_{\lambda\in\wtd\Lambda}\alpha_\lambda\delta_\lambda$. Then $\|\linOp_{\wtd\mu}x-\linOp_{\mu}x\|_{M^p(\R)}<C_2\epsilon$ by the identifiability condition. Next, as $\linOp_{\mu_n}x\to\linOp_{\mu}x$ in norm, this convergence also holds in the weak-* topology, and so by Theorem \ref{w*w*thm} we have $\mu_n\xrightarrow[]{w^*} \mu$. We can therefore decompose each $\mu_n$ according to $\mu_n=\nu_n+\rho_n$, where
\begin{equation*}
\nu_n=\sum_{\lambda\in{\wtd\Lambda}}\alpha_\lambda^{(n)}\delta_{\lambda+\bve_n(\lambda)},
\end{equation*}
and $\lim_{n\to\infty}\alpha_\lambda^{(n)}=\alpha_\lambda$, $\lim_{n\to\infty}|\bve_n(\lambda)|= 0$, for every $\lambda\in\wtd{\Lambda}$.
Now, for every $n\in\N$, define the ``shifted'' measure
\begin{equation*}
\wtd{\nu}_n=\sum_{\lambda\in{\tilde\Lambda}}\alpha_\lambda^{(n)} e^{-2\pi i\, \Re(\lambda) \Im(\varepsilon_\lambda)}\,\delta_\lambda.
\end{equation*}
Then, as $\wtd{\Lambda}$ is finite, we have $\lim_{n\to\infty}\|\bve_n\|_{\ell^\infty ({\wtd\Lambda})}=0$ and $\sup_{n\in\N} \|\alpha^{(n)}\|_{\ell^p({\wtd\Lambda})} <\infty$, which together with item (ii) of Proposition \ref{PerturLem} yields
\begin{align}
\|\linOp_{\wtd{\nu}_n}x-\linOp_{\nu_n}x\|_{M^p(\R)}&=\Big\| {\sum_{\lambda\in{\wtd\Lambda}}\alpha_\lambda^{(n)} e^{-2\pi i\, \Re(\lambda) \Im(\varepsilon_\lambda)}\pi(\lambda)x-
\sum_{\lambda\in{\wtd\Lambda}}\alpha_\lambda^{(n)}\pi(\lambda+\bve_n(\lambda))x} \Big\|_{M^p(\R)}\notag\\
&\lesssim_{\,p,s,x} \|\bve_n\|_{\ell^\infty ({\wtd\Lambda})} \|\alpha^{(n)}\|_{\ell^p({\wtd\Lambda})}\to 0 \quad\text{as }n\to\infty.\label{eq:robust-conv-thm-1}
\end{align}
 
To bound $\|\rho_n\|_p$, we employ the identifiability condition \eqref{eq: identifiability set of operators} with the measures $\rho_n$ and $\wtd\mu-\wtd\nu_n$. Concretely, we note that $\supp(\rho_n)\subset\supp(\mu_n)$ and $\supp(\wtd\mu-\wtd\nu_n)\subset\supp(\wtd{\mu})\subset\supp(\mu)$, and so $\rho_n$ and $\wtd\mu-\wtd\nu_n$ are indeed elements of $\msc{H}(\m{L})^p$. Now, $\mrm{ms}(\supp(\wtd\mu-\wtd\nu_n),\supp(\rho_n))\geq s-\|\bve_n(\lambda)\|_{\ell^\infty(\wtd{\Lambda})}>s/2$, for sufficiently large $n$, and so, by combining \eqref{eq:robust-conv-thm-1}, $\|\linOp_{\wtd\mu}x-\linOp_{\mu}x\|_{M^p(\R)}<C_2\epsilon$, and the assumption $\lim_{n\to\infty} \|\linOp_{\mu_n}x-\linOp_{\mu}x\|_{M^p(\R)}= 0$, we get
\begin{align}
C_1 \left(\frac{s}{2}\wedge 1\right)\|\wtd\mu-\wtd\nu_n-\rho_n\|_p&\leq\|\linOp_{\wtd\mu} x-\linOp_{\wtd\nu_n}x-\linOp_{\rho_n}x\|_{M^p(\R)}\notag\\
&=\| \linOp_{\wtd\mu} x- \linOp_{\wtd\nu_n}x - \linOp_{\mu_n}x + \linOp_{\nu_n}x\|_{M^p(\R)}\notag\\
&\leq \|\linOp_{\wtd\mu}x -\linOp_{\mu}x \|_{M^p(\R)}+ \|\linOp_{\mu}x -\linOp_{\mu_n}x \|_{M^p(\R)}+\|\linOp_{\nu_n}x -\linOp_{\wtd\nu_n}x \|_{M^p(\R)}\notag\\
&<2C_2\epsilon
\end{align}
for large enough $n$. On the other hand, $\|\wtd\mu-\wtd\nu_n-\rho_n\|_p\geq \|\rho_n\|_p$, and so
\begin{equation*}
\|\rho_n\|_p\leq \frac{2C_2}{C_1(s/2\wedge 1)}\epsilon\leq \frac{4C_2}{C_1(s\wedge 1)}\epsilon,
\end{equation*}
for all sufficiently large $n$. This concludes the proof.
\end{proof}


\section{Proofs of Proposition \ref{prop:examples-CSI} and Corollaries  \ref{cor:fin-class}, \ref{cor:SuffSepClass}, and \ref{cor:RayClass}}
 
\begin{proof}[Proof of Proposition \ref{prop:examples-CSI}]
Let $\m{L}$ be any of the classes $\m{L}_{s}^{\text{sep}}$, $\m{L}_{s,N}^{\text{fin}}$, and $\m{L}_{s,\theta,R}^{\text{Ray}}$. It then follows directly from the definitions of these sets that $\Lambda+z\coleqq \{\lambda+z :\lambda\in\Lambda\}\in\m{L}$, for all $\Lambda\in\m{L}$ and $z\in\R^2$, so it remains to verify closure under weak convergence. To this end, let $\{\Lambda_n\}_{n\in\N}$ be a sequence in $\m{L}$ such that $\Lambda_n\xrightarrow[]{w} \Lambda$ as $n\to\infty$, for some $\Lambda\subset\R^2$. In all three cases we have that $\inf_{n\in\N}\sep(\Lambda_n)\geq s$ implies $\sep(\Lambda)\geq s$, which establishes that $\m{L}_{s}^{\text{sep}}$ is CSI. 

Suppose now that $\m{L}=\m{L}_{s,N}^{\text{fin}}$. It then suffices to show that $\#\Lambda\leq N$. To this end, let $\wtd{\Lambda}$ be an arbitrary finite subset of $\Lambda$, and consider a point $\lambda\in\wtd\Lambda$. Then, for all sufficiently large $n$, there exists a $\lambda_n\in\Lambda_n$ such that $|\lambda-\lambda_n|<(\sep(\wtd{\Lambda})\wedge s)/2$. We deduce that, for all sufficiently large $n$, there exists a $\wtd{\Lambda}_n\subset\Lambda_n$ such that $\#\wtd\Lambda_n=\#\wtd\Lambda$. But $\#\Lambda_n\leq N$, for all $n\in\N$, by definition of the class $\m{L}_{s,N}^{\text{fin}}$, so we must have $\#\wtd\Lambda\leq N$. Now, as $\wtd{\Lambda}$ was arbitrary, we must have $\#\Lambda\leq N$, and so $\Lambda\in \m{L}_{s,N}^{\text{fin}}$. This establishes that $\m{L}_{s,N}^{\text{fin}}$ is CSI.

For $\m{L}=\m{L}_{s,\theta,R}^{\text{Ray}}$, fix an arbitrary translate $K^{\circ}_{x,y}=(x,x+R)\times(y,y+R)$ of $(0,R)^2$, and consider a point $\lambda\in\Lambda\cap K^{\circ}_{x,y}$. Then, as $K^{\circ}_{x,y}$ is open, we have that for all sufficiently large $n$ there exists a $\lambda_n\in\Lambda_n$ such that $|\lambda-\lambda_n|<s/2$ and $\lambda_n\in K^{\circ}_{x,y}$. Therefore, as $\lambda$ was arbitrary, and $\Lambda\cap K^{\circ}_{x,y}$ is finite, we have $\#(\Lambda\cap K^{\circ}_{x,y}) \leq \#(\Lambda_n\cap K^{\circ}_{x,y})\leq \theta R^2$, for sufficiently large $n$. Thus, as $K^{\circ}_{x,y}$  was arbitrary, we obtain $n^+ \left(\Lambda,(0,R)^2\right)\leq \theta R^2$, and so $\Lambda\in \m{L}_{s,\gamma,R}^{\text{Ray}}$. This establishes that $\m{L}_{s,\gamma,R}^{\text{Ray}}$ is CSI and thereby completes the proof.
\end{proof}

\begin{proof}[Proof of Corollary \ref{cor:fin-class}]
The proof is effected by verifying the conditions of Theorem \ref{CritThm}. As $s \coleqq { \inf_{\Lambda\in\m{L}_{s,N}^{\text{fin}}}} \, \sep(\Lambda)>0$, by definition of $\m{L}_{s,N}^{\text{fin}}$, it suffices to show that $\m{D}^+(\m{L}_{s,N}^{\text{fin}})<\frac{1}{2}\,$. To this end, fix a $\Lambda=\{\lambda_1,\dots,\lambda_n\} \in \m{L}_{s,N}^{\text{fin}}$, where $n\leq N$. Set $R=2 \sqrt{2}\lceil \frac{N+1}{2}\rceil$, and for $k\in\{1,\dots, n\}$ let $S_k=\{\xi\in\Omega_2: |\xi-\lambda_k|\leq R\}$, where we recall that $\Omega_2=\{2(m+in):m,n\in\Z\}$. Now, for every $k\in\{1,\dots,n\}$, choose a $\xi_k\in S_k\setminus\{\xi_1,\dots,\xi_{k-1}\}$. Note that this is possible as $\# S_k\geq N$, for all $k$. We thus have $|\lambda_k-\xi_k|\leq R$ for all $k$, and so, as $\Lambda$ was arbitrary, we have that $\Lambda$ is $R$-uniformly close to $\Omega_2$, for all $\Lambda\in \m{L}_{s,N}^{\text{fin}}$. Lemma \ref{lem:class-lat-dist} therefore implies $\m{D}^+(\m{L}_{s,N}^{\text{fin}})\leq 2^{-2}<\frac{1}{2}$, completing the proof.
\end{proof}

\begin{proof}[Proof of Corollary \ref{cor:SuffSepClass}]
First assume that $s>2\cdot 3^{-\frac{1}{4}}$. In view of Theorem \ref{CritThm}, it again suffices to verify that $\m{D}^+({\m{L}_{s}^{\text{sep}}})<\frac{1}{2}\,$. To do this, we will need a special case of the plane packing inequality by Folkman and Graham \cite{Folkman1969}:

\vspace*{3mm}
\emph{Let $K$ be a compact convex set. Then any subset of $K$ whose any two points are at least $1$ apart (in the Euclidean metric) has cardinality at most $\frac{2}{\sqrt{3}}\mrm{Area}(K)+\frac{1}{2}\mrm{Per}(K)+1$.}
\vspace*{3mm}

For our purposes $K$ will be a square of side length $R$, to be specified later. By scaling, we see that then any subset of $[0,R]^2$ whose any two points are at least $s$ apart has cardinality at most $\frac{2}{\sqrt{3}}\frac{R^2}{s^2}+\frac{2R}{s}+1$. Let $\theta\in \big(\frac{2}{\sqrt{3}}s^{-2},\frac{1}{2}\big)$, $\gamma\in (\sqrt{2},\theta^{-1/2})$, and $R>0$ such that $2(sR)^{-1}+R^{-2}\leq \theta-\frac{2}{\sqrt{3}} s^{-2}$.
Then, for all $\Lambda\in {\m{L}_{s}^{\text{sep}}}$, we have
\begin{equation*}
n^+(\Lambda,(0,R)^2)\leq \frac{2}{\sqrt{3}} \frac{R^2}{s^2}+\frac{2R}{s}+1\leq \left( \frac{2}{\sqrt{3}} s^{-2} + 2(s R)^{-1} +R^{-2} \right)R^2\leq \theta R^2,
\end{equation*}
and so, by Lemma \ref{lem:unif-lat-dist}, there exists an $R'=R'(\theta,\gamma,R)>0$ such that $\Lambda$ is $R'$-uniformly close to $\Omega_\gamma$. Therefore, as $\Lambda$ was arbitrary, it follows by Lemma \ref{lem:class-lat-dist} that $\m{D}^+({\m{L}_{s}^{\text{sep}}})\leq \gamma^{-2}<\frac{1}{2}$, and so $\msc{H}({\m{L}_{s}^{\text{sep}}})^p$ is identifiable by the probing signal $\varphi$.

Now consider the case $s\leq2\cdot 3^{-\frac{1}{4}}$ and let $r=2\cdot 3^{-\frac{1}{4}}$, $\Lambda'=\big\{r m(1,0)+r n\big(\frac{1}{2},\frac{\sqrt{3}}{2}\big): m,n\in\Z\big\}\in {\m{L}_{s}^{\text{sep}}}$.  Then $D^+(\Lambda')=\left(\frac{\sqrt{3}}{2}r^2\right)^{-1}=\frac{1}{2}$, and so by Lemma \ref{lem:Rayleigh-number} we have $\m{D}^+({\m{L}_{s}^{\text{sep}}})\geq D^+(\Lambda')=\frac{1}{2}$. It thus follows by Theorem \ref{NeceCritThm} that $\msc{H}({\m{L}_{s}^{\text{sep}}})^p$ is not identifiable.
\end{proof}

\begin{proof}[Proof of Corollary \ref{cor:RayClass}]
Consider first $\theta<\frac{1}{2}$. Fix $\gamma\in (\sqrt{2},\theta^{-1/2})$, and let $\Lambda\in {\m{L}_{s,\theta,R}^{\text{Ray}}}$. Then, by Lemma \ref{lem:unif-lat-dist}, there exists an $R'=R'(\theta,\gamma,R)>0$ such that $\Lambda$ is $R'$-uniformly close to $\Omega_\gamma$. Therefore, as $\Lambda$ was arbitrary, it follows by Lemma \ref{lem:class-lat-dist} that $\m{D}^+({\m{L}_{s,\theta,R}^{\text{Ray}}})\leq \gamma^{-2}<\frac{1}{2}$, and so Theorem \ref{CritThm} implies that $\msc{H}({\m{L}_{s,\theta,R}^{\text{Ray}}})^p$ is identifiable by the probing signal $\varphi$.

For $\theta> \frac{1}{2}$, suppose by way of contradiction that there exists a sequence $\{R_n\}_{n\in\N}$ of positive numbers such that $\lim_{n\to\infty}R_n=\infty$ and $\msc{H}({\m{L}_{s,\theta,R_n}^{\text{Ray}}})^p$ is  identifiable, for all $n\in\N$. Let $\{\gamma_n\}_{n\in\N}$ be a sequence of positive numbers such that $\gamma_n>\theta^{-1/2}>s$ and
\begin{equation*}
\gamma_n^{-2}+4\left(R_n^{-1}\gamma_n^{-1}+R_n^{-2}\right)\leq \theta,
\end{equation*}
for all $n\in\N$, and $\lim_{n\to\infty}\gamma_n=\theta^{-1/2}$. We then have
\begin{equation*}
n^+(\Omega_{\gamma_n},(0,R_n)^2)\leq \left(R_n\gamma_n^{-1}\right)^{2}+4\left(R_n\gamma_n^{-1}+1\right)\leq\theta R_n,
\end{equation*}
where the second term takes into account the points of $\Omega_{\gamma_n}$ along the four sides of the square $(0,R_n)^2$. Therefore $\Omega_{\gamma_n}\in \msc{H}({\m{L}_{s,\theta,R_n}^{\text{Ray}}})^p$, for all $n\in\N$, and so, as $\lim_{n\to\infty}\gamma_n^{-2}=\theta$ and $\theta>\frac{1}{2}$, it follows by Lemma \ref{lem:Rayleigh-number} that $\m{D}^+({\m{L}_{s,\theta,R_n}^{\text{Ray}}})\geq D^+(\Omega_{\gamma_n})=\gamma_n^{-2}>\frac{1}{2}$, for all sufficiently large $n$. This stands in contradiction to Theorem \ref{NeceCritThm}, completing the proof.
\end{proof}










\section*{Acknowledgments}
The authors would like to thank D. Stotz and R. Heckel for inspiring discussions and H. G. Feichtinger for his comments on an earlier version of the manuscript.

\bibliographystyle{IEEEtran} 
\bibliography{ref}

\section*{Appendix: proofs of auxiliary results}

\begin{proof}[Proof of Proposition \ref{wellDefProp}]
Note that $\linOp_\Lambda$ is, in fact, the synthesis operator treated in \cite{Groechenig2015}, and so items (i) and (ii) follow immediately from \cite[\S 2.5]{Groechenig2015}. We proceed to establish (iii). To this end, recall that $\varphi(t)=2^{\frac{1}{4}} e^{-\pi t^2}$, fix an arbitrary $(x,\omega)\in\Lambda$, and let $\wtd\Lambda$ be a finite subset of $\Lambda$ such that $(x,\omega)\in\wtd{\Lambda}$. Then, for $a,b\in\R$ and $(\tau,\nu)\in\wtd{\Lambda}$, we have
\begin{equation*}
\begin{aligned}
&\langle \modu_\nu \tra_\tau\, \modu_{-\omega-b}\tra_a\varphi , \tra_{x+a}\modu_{-b} \varphi \rangle_{M^q(\R)\times M^p(\R)}\\
=\,& \langle \varphi, \tra_{-a}\, \modu_{\omega+b}\tra_{-\tau} \modu_{-\nu} \tra_{x+a}\modu_{-b}\varphi \rangle_{L^2(\R)}\\
=\,& e^{-2\pi i (x+a)b}\langle \varphi, \tra_{-a}\, \modu_{\omega+b}\tra_{-\tau} \modu_{-\nu-b} \tra_{x+a}\varphi  \rangle_{L^2(\R)}\\
=\,& e^{-2\pi i (x+a)b}e^{2\pi i \tau(\nu+b)}\langle \varphi, \tra_{-a}\, \modu_{(\omega+b)+(-\nu-b)}\tra_{(x+a)-\tau}\varphi  \rangle_{L^2(\R)}\\
=\,& e^{-2\pi i (x+a)b}e^{2\pi i \tau(\nu+b)} e^{-2\pi i a (\omega-\nu)}\langle \varphi,\modu_{\omega-\nu}\tra_{(x+a)-\tau-a}\varphi  \rangle_{L^2(\R)}\\
=\,& e^{-2\pi i ab}e^{2\pi i \tau\nu} e^{-2\pi i a (\omega-\nu)-2\pi i b(x-\tau)} (\m{V}_{\varphi}\varphi)(x-\tau,\omega-\nu),
\end{aligned}
\end{equation*}
and therefore
\begin{equation}\label{eq:wellDefProp-1}
\begin{aligned}
&e^{2\pi i a b} \left\langle \linOp_{{\wtd{\Lambda}}} \,(\alpha, \modu_{-\omega-b}\tra_a\varphi) , \tra_{x+a}\modu_{-b} \varphi \right\rangle_{M^q(\R)\times M^p(\R)} \\
&\qquad\qquad\qquad=\sum_{(\tau,\nu)\in\wtd{\Lambda}}\alpha_{\tau,\nu}\, e^{2\pi i \tau\nu} e^{-2\pi i a (\omega-\nu)-2\pi i b(x-\tau)} (\m{V}_{\varphi}\varphi)(x-\tau,\omega-\nu).
\end{aligned}
\end{equation}
Now, let $\epsilon>0$ be arbitrary. We multiply both sides of \eqref{eq:wellDefProp-1} by $\epsilon \, e^{-\pi \epsilon (a^2+b^2)}$ and integrate over $(a,b)\in\R^2$. Then, as the Fourier transform of $\sqrt{\epsilon} \varphi(\cdot\,\sqrt{\epsilon})$ is $\varphi(\cdot /\sqrt{\epsilon})$, the integral of the right-hand side equals
\begin{equation*}
\sum_{(\tau,\nu)\in\wtd{\Lambda}}\alpha_{\tau,\nu}e^{2\pi i \tau\nu} e^{-\pi \left[ (\omega-\nu)^2+ (x-\tau)^2\right]\epsilon^{-1}} (\m{V}_{\varphi}\varphi)(x-\tau,\omega-\nu),
\end{equation*}
and the integral of the left-hand side satisfies
\begin{equation*}
\begin{aligned}
& \left| \int_{\R^2} \epsilon e^{-\pi \epsilon (a^2+b^2)}  e^{2\pi i a b} \left\langle \linOp_{{\wtd{\Lambda}}} \,(\alpha, \modu_{-\omega-b}\tra_a\varphi) , \tra_{x+a}\modu_{-b} \varphi \right\rangle_{M^q(\R)\times M^p(\R)} \, \mrm{d}a \, \mrm{d}b \;\right|\\
\leq\,& \int_{\R^2} \epsilon e^{-\pi \epsilon (a^2+b^2)} \left|\left\langle \linOp_{{\wtd{\Lambda}}} \,(\alpha, \modu_{-\omega-b}\tra_a\varphi) , \tra_{x+a}\modu_{-b} \varphi \right\rangle_{M^q(\R)\times M^p(\R)}\right| \, \mrm{d}a \, \mrm{d}b \;\\
\leq\,&  \int_{\R^2} \epsilon e^{-\pi \epsilon (a^2+b^2)}\|\linOp_{\wtd{\Lambda}}(\alpha,\cdot)\| \| \modu_{-\omega-b}\tra_a\varphi\|_{M^1(\R)} \| \tra_{x+a}\modu_{-b} \varphi\|_{M^q(\R)} \, \mrm{d}a \, \mrm{d}b \\
\leq\, & \|\linOp_{\wtd{\Lambda}}(\alpha,\cdot)\| \|\varphi\|_{M^1(\R)} \|\varphi\|_{M^q(\R)}.
\end{aligned}
\end{equation*}
We hence deduce that
\begin{equation*}
\left|\sum_{(\tau,\nu)\in\wtd{\Lambda}}\alpha_{\tau,\nu}e^{2\pi i \tau\nu} e^{-\pi \left[ (\omega-\nu)^2+ (x-\tau)^2\right]\epsilon^{-1}} (\m{V}_{\varphi}\varphi)(x-\tau,\omega-\nu)\right| \leq  \|\linOp_{\wtd{\Lambda}}(\alpha,\cdot)\| \|\varphi\|_{M^1(\R)} \|\varphi\|_{M^q(\R)},
\end{equation*}
and so, upon letting $\epsilon\to 0$, we obtain
\begin{equation*}
\left| \alpha_{x,\omega}e^{2\pi i x\omega}  (\m{V}_{\varphi}\varphi)(0,0)\right|\leq \|\linOp_{\wtd{\Lambda}}(\alpha,\cdot) \| \|\varphi\|_{M^1(\R)} \|\varphi\|_{M^q(\R)}.
\end{equation*}
Next, note that we can write $\linOp_{{\wtd{\Lambda}}}=\linOp_{\Lambda}-\linOp_{{\Lambda\setminus \wtd{\Lambda}}}$, and so, by item (i) of the proposition, there exists a universal constant $C>0$ such that
\begin{equation*}
\|\linOp_{{\wtd{\Lambda}}}(\alpha,\cdot )\|\leq \|\linOp_\Lambda(\alpha,\cdot )\| + C\, \|\{\alpha_\lambda\}_{\lambda\in \Lambda\setminus \wtd\Lambda}\|_{\ell^p}.
\end{equation*}
Therefore, since $(\m{V}_{\varphi}\varphi)(0,0)=\|\varphi\|_{L^2(\R)}^2=1$, we have
\begin{equation*}
|\alpha_{x,\omega}|\leq \left( \|\linOp_{\Lambda}(\alpha,\cdot )\| + C \, \|\{\alpha_\lambda\}_{\lambda\in \Lambda\setminus \wtd\Lambda}\|_{\ell^p}\right) \|\varphi\|_{M^1(\R)} \|\varphi\|_{M^q(\R)}.
\end{equation*}
The term $\|\{\alpha_\lambda\}_{\lambda\in \Lambda\setminus \wtd\Lambda}\|_{\ell^p}$ can now be made arbitrarily small by choosing a sufficiently large $\wtd{\Lambda}$, and hence we deduce that
\begin{equation*}
|\alpha_{x,\omega}|\leq  \|\linOp_{\Lambda}(\alpha,\cdot )\| \|\varphi\|_{M^1(\R)} \|\varphi\|_{M^q(\R)}.
\end{equation*}
As $(x,\omega)\in\Lambda$ was arbitrary, we obtain
\begin{equation*}
\|\alpha\|_{\ell^\infty}= \sup_{(x,\omega)\in\Lambda}|\alpha_{x,\omega}|\leq \|\linOp_{\Lambda}(\alpha,\cdot )\| \|\varphi\|_{M^1(\R)} \|\varphi\|_{M^q(\R)},
\end{equation*}
establishing \eqref{eq:wellDefProp-0}.
\end{proof}

\begin{proof}[Proof of Lemma \ref{lem:Rayleigh-number}]
Item (i) is a direct consequence of Definition \ref{def:UBcd} and the fact that $n^+(\Lambda, (0,R)^2)\allowbreak \leq n^+(\Lambda, [0,R]^2) $. To show item (ii), let $\epsilon>0$ be arbitrary, and set $\theta=\m{D}^+(\m{L})+\epsilon $. Then, by Definition \ref{def:UBcd}, we have
\begin{equation*}
\frac{n^+(\Lambda, [0,R]^2)}{R^2}\leq \m{D}^+(\m{L})+\epsilon
\end{equation*}
for every $\Lambda\in \m{L}$ and sufficiently large $R$, and thus
\begin{equation}\label{eq:Helmut_1}
D^{+}(\Lambda) =  \limsup_{R \to \infty} \frac{n^+(\Lambda, [0,R]^2)}{R^2}\leq  \m{D}^+(\m{L})+\epsilon.
\end{equation}
Now, as \eqref{eq:Helmut_1} holds for every $\Lambda\in\m{L}$, we deduce that $\sup_{\Lambda \in \m{L}} D^+(\Lambda)\leq \m{D}^+(\m{L})+\epsilon$, and hence, as $\epsilon>0$ was arbitrary, we obtain $\sup_{\Lambda \in \m{L}} D^+(\Lambda)\leq \m{D}^+(\m{L})$.
\end{proof}

\begin{proof}[Proof of Lemma \ref{lem:unif-lat-dist}]
We identify $\C$ with $\R^2$ for ease of exposition. For a positive integer $q$, define the set $\m{S}_q$ according to
\begin{equation*}
\m{S}_q=\{[q k R, q(k+1)R]\times [q\ell R,q(\ell +1)R] \, :\, k,\ell\in\Z \},
\end{equation*}
and note that $\m{S}_q$ is a collection of squares in $\R^2$ of side length $qR$ tessellating the plane.
A simple counting argument now yields,  for all $K\in\m{S}_q$,
\begin{equation*}
\#(\Omega_\gamma\cap K)\geq (qR\gamma^{-1})^2 - 4\left( qR\gamma^{-1}+1\right),
\end{equation*}
where the subtracted term accounts for the $\lceil qR\gamma^{-1}\rceil$ points adjacent to each of the edges of the square $K$, but which might not be inside it. On the other hand, as every element of $\m{S}_q$ can be covered by $(q+1)^2$ translates of $(0,R)^2$, using $n^+(\Lambda , (0,R)^2)\leq \theta R^2$ we have $\#(\Lambda \cap K) \leq (q+1)^2 \cdot \theta R^2 =((q+1)R \theta^{1/2})^2$ for all $K\in\m{S}_q$.  Therefore, as $\gamma^{-1}> \theta^{1/2}$ by assumption, there exists a positive integer $q'=q'(\theta,R,\gamma)$ such that
\begin{equation*}
 (q' R\gamma^{-1})^2 - 4\left( q' R\gamma^{-1}+1\right)\geq \big((q'+1) R \theta^{1/2} \big)^2.
\end{equation*}
We thus have $\#(\Omega_\gamma\cap K)\geq \#(\Lambda \cap K) $, for all $K\in\m{S}_{q'}$, and can therefore enumerate $\Lambda=\{\lambda_{m,n}\}_{(m,n)\in\m{I}}$ so that, for every $(m,n)\in \m{I}$, $\lambda_{m,n}$ and  $\omega_{m,n}$ are contained in the same square $K_{m,n}\in \m{S}_{q'}$. Setting $R'=\sqrt{2}q' R$ to be the length of the diagonal of the squares in $\m{S}_{q'}$ now yields $|\lambda_{m,n}-\omega_{m,n}|\leq R'$, for all $(m,n)\in\m{I}$, as desired.
\end{proof}

\begin{proof}[Proof of Lemma \ref{lem:class-lat-dist}]
We again identify $\C$ with $\R^2$ for ease of exposition.
Note that we have $n^+(\Lambda,[0,R']^2)\leq n^+(\Omega_\gamma ,[0,R'+2R]^2)$, for all $R'>0$ and $\Lambda\in\m{L}$, by the uniform closeness assumption, and so
\begin{equation*}
\m{D}^+(\m{L})=\limsup_{R'\to\infty}\sup_{\Lambda\in\m{L}}\frac{n^+(\Lambda,[0,R']^2)}{R'^2}\leq \limsup_{R'\to\infty}\frac{n^+(\Omega_\gamma,[0,R'+2R]^2)}{(R'+2R)^2}\left(1+\frac{2R}{R'}\right)^2= \gamma^{-2}.
\end{equation*}
\end{proof}

\begin{proof}[Proof of Lemma \ref{massiveLemma}]
Note that the terms $z-\lambda_{0,0}$ and $z$ in the defining expressions of $g_\Lambda$ and $\wtd{g}_\Lambda$ cancel owing to $\lambda_{0,0}=0$, so the interpolation property (a) follows immediately. We may hence proceed to establishing statement (b). 
To this end, we begin by observing that the assumptions (i), (ii), and (iii) remain valid and the conclusion of the lemma unchanged if we replace $\rho$ by $\rho\wedge 1$ and $R$ by $R\vee 1\vee (3\gamma)$, so we may assume w.l.o.g. that $\rho\leq 1$ and $R\geq 1\vee (3\gamma)$.
Now, set $\m{I}' \coleqq \m{I}\setminus\{(0,0)\}$ and write
\begin{equation}\label{longLem-1stEq}
\widetilde{g}_\Lambda(z)e^{-\frac{\pi}{2}\gamma^{-2}|z|^2}=\frac{\sigma_\gamma(z)e^{-\frac{\pi}{2}\gamma^{-2}|z|^2}}{d(z,\Omega_\gamma)}\,\frac{d(z,\Lambda)}{z}\,h(z),
\end{equation}
where $d$ is the Euclidean distance from a point to a subset of $\C$ and
\begin{equation*}
h(z)=\frac{d(z,\Omega_\gamma)}{d(z,\Lambda)} \prod_{(m,n)\in\m{I}_s }\exp\left(\frac{z}{\omega_{m,n}}-\frac{z}{\lambda_{m,n}}\right) \prod_{(m,n)\in\m{I}'}\frac{(1-z/\lambda_{m,n})\exp(z/\lambda_{m,n})}{(1-z/\omega_{m,n})\exp(z/\omega_{m,n})}\notag.
\end{equation*}
Note that $|{d(z,\Lambda)}/{z}|\leq 1$ owing to $0\in\Lambda$, and by Lemma \ref{vanillaSigmaBd} we have $|\sigma_\gamma(z)|e^{-\frac{\pi}{2}\gamma^{-2}|z|^2}/d(z,\Omega_\gamma)\lesssim_{\,\gamma}\!1$, so in order to complete the proof it suffices to establish
\begin{equation}\label{eq:h-goal-bound}
|h(z)|\lesssim_{\,s,\theta,\gamma,R} \rho^{-1} e^{c |z|\log{|z|}},\quad\forall z\in\C,
\end{equation}
for some $c=c(s,\theta,\gamma,R)>0$. This will be effected by bounding various auxiliary quantities associated with $h$. 
To this end, we begin by bounding
\begin{equation*}
h_{\mrm{aux}}(z)\coleqq \prod_{(m,n)\in A(z)}\frac{(1-z/\lambda_{m,n})\exp(z/\lambda_{m,n})}{(1-z/\omega_{m,n})\exp(z/\omega_{m,n})},
\end{equation*}
from above, where $A(z)\subset\m{I}$ is given by
\begin{equation*}
A(z)=\left\{(m,n)\in\m{I}': |\lambda_{m,n}|>2|z|\vee 6R \right\}.
\end{equation*}
To this end, we first establish the following basic bounds valid for all $z\in\C$ and $(m,n)\in A(z)$:
\begin{enumerate}[leftmargin=12mm, label=(A\arabic*)]
\item $|\omega_{m,n}|>5R$, \; ${|\omega_{m,n}|/|\lambda_{m,n}|}\in\left({ 5/6, 5/4}\right)$,  
\item  $|z/\omega_{m,n}|<\frac{4}{5} \mathds{1}_{\{ |z|\leq 4R\} }  + \frac{4}{7} \mathds{1}_{\{ |z|>4R\}}$, and
\item $|z/\lambda_{m,n}|<1/2$.
\end{enumerate}
The inequality in (A1) follows from $|\omega_{m,n}|\geq |\lambda_{m,n}|-R>6R-R=5R$, and the upper and lower bounds on $|\omega_{m,n}|/|\lambda_{m,n}|$ are due to $|\omega_{m,n}|\geq |\lambda_{m,n}|-R>\frac{5}{6}|\lambda_{m,n}|$ and $|\lambda_{m,n}|\geq |\omega_{m,n}|-R>\frac{4}{5}|\omega_{m,n}|$. To show (A2), consider the cases $|z|\leq 4R$ and $|z|>4R$ separately. If $|z|\leq 4R$, then $|z/\omega_{m,n}|<4R/5R=4/5$, whereas if $|z|>4R$, then $|\omega_{m,n}|\geq |\lambda_{m,n}|-R>2|z|-R>\frac{7}{4}|z|$ and so $|z/\omega_{m,n}|<4/7$. Finally, (A3) follows directly by the definition of $A(z)$.

Now, using (A2), we have
\begin{align}
|h_{\mrm{aux}}(z)|&\leq \prod_{(m,n)\in A(z)}\frac{|1-z/\lambda_{m,n}| |\exp(z/\lambda_{m,n})|}{|1-z/\omega_{m,n}| |\exp(z/\omega_{m,n})|} \notag\\
&\leq\Bigg(\mathds{1}_{\{|z|> 4R\}} \;  + \mathds{1}_{\{|z|\leq 4R\}}\hspace*{-4mm} \prod_{\substack{(m,n)\in A(z)\\ |\omega_{m,n}|\leq \frac{7}{4}|z|\leq 7R }} \hspace*{-3mm} \frac{\left(1+\frac{1}{2}\right)e^{\frac{1}{2}}}{\left(1-\frac{4}{5}\right)e^{-\frac{4}{5}}} \Bigg)\prod_{\substack{(m,n)\in A(z)\\ |\omega_{m,n}|> \frac{7}{4}|z| }}\frac{|1-z/\lambda_{m,n}| |\exp(z/\lambda_{m,n})|}{|1-z/\omega_{m,n}| |\exp(z/\omega_{m,n})|} \notag\\
&\lesssim_{\,\gamma,R} \Bigg| \prod_{(m,n)\in \wtd A(z)}\frac{(1-z/\lambda_{m,n})\exp(z/\lambda_{m,n})}{(1-z/\omega_{m,n})\exp(z/\omega_{m,n})}\Bigg|,\quad\forall z\in\C \label{case<4Q-h4est-pre},
\end{align}
where we set
\begin{equation*}
\wtd{A}(z)=\{(m,n)\in A(z) : |\omega_{m,n}|> {\textstyle \frac{7}{4}}|z|\}.
\end{equation*}
Next, for $z\in\C$ and $(m,n)\in A(z)$, it follows by  (A2) that $1-{z/\omega_{m,n}}$ and $1-{z/z_{m,n}}$ lie in the domain $\C\setminus\R_{\leq 0}$ of the complex logarithm, which we denote by $\mrm{Log}$. We can thus write
\begin{align}
& \prod_{(m,n)\in \wtd A(z)}\frac{(1-z/\lambda_{m,n})\exp(z/\lambda_{m,n})}{(1-z/\omega_{m,n})\exp(z/\omega_{m,n})}\notag\\
=\, &\prod_{(m,n)\in \wtd A(z)}\exp{\left[\mrm{Log}\left(1-\frac{z}{\lambda_{m,n}}\right)+\frac{z}{\lambda_{m,n}}-\mrm{Log}\left(1-\frac{z}{\omega_{m,n}}\right)-\frac{z}{\omega_{m,n}}\right]}\notag\\
= \,&\prod_{(m,n)\in \wtd A(z)}\exp{\left[-\sum_{k=2}^{\infty}\frac{1}{k}\left(\frac{z}{\lambda_{m,n}}\right)^k+\sum_{k=2}^{\infty}\frac{1}{k}\left(\frac{z}{\omega_{m,n}}\right)^k\right]}\notag\\
=\,&\exp{\Bigg[\sum_{(m,n)\in \wtd A(z)} \sum_{k=2}^\infty \frac{z^k}{k}\left(\frac{1}{\omega_{m,n}^k}-\frac{1}{\lambda_{m,n}^k}\right)\Bigg]},\notag
\end{align}
which together with \eqref{case<4Q-h4est-pre} yields
\begin{equation}\label{case<4Q-h4est}
|h_{\mrm{aux}}(z)|\lesssim_{\,\gamma,R} \exp\Bigg[\sum_{(m,n)\in \wtd A(z)} \sum_{k=2}^\infty \frac{|z|^k}{k}\abs{\frac{1}{\omega_{m,n}^k}-\frac{1}{\lambda_{m,n}^k}} \Bigg],\quad \forall z\in\C. 
\end{equation}
Using (A1), we further have
\begin{align}
\abs{\frac{1}{\omega_{m,n}^k}-\frac{1}{\lambda_{m,n}^k}}&=\frac{\abs{\lambda_{m,n}-\omega_{m,n}}\sum_{\ell=0}^{k-1}|\lambda_{m,n}|^{k-1-\ell}|\omega_{m,n}|^\ell}{\abs{\omega_{m,n}}^k\abs{\lambda_{m,n}}^k}\notag\\
&\leq\frac{R\cdot |\omega_{m,n}|^{k-1} \sum_{\ell=0}^{k-1}\left(\frac{6}{5}\right)^\ell}{\abs{\omega_{m,n}}^{2k}\left(\frac{4}{5}\right)^k}\notag\\
&\leq 5R \cdot { 1.5 }^{k}\abs{\omega_{m,n}}^{-(k+1)}, \label{eq:mass-lem-recip-est}
\end{align}
for all $z\in\C$, $(m,n)\in A(z)$, and all $k\in \N$.
Next, defining $c_{\mrm{aux}}(z,R)\coleqq \frac{7|z|}{4}\vee 5R$, (A1) and (A2) imply that
$\wtd{A} (z)\subset\{(m,n)\in\Z^2 : |\omega_{m,n}|> c_{\mrm{aux}}(z,R)\}$, for all $z\in\C$, and so \eqref{case<4Q-h4est} and \eqref{eq:mass-lem-recip-est} together yield
\begin{align}
|h_{\mrm{aux}}(z)|&\leq \exp\Bigg[ \sum_{\substack{(m,n)\in \wtd{A}(z) }}\sum_{k=2}^\infty{  \frac{5R |z|^k}{k}}\cdot 1.5^k |\omega_{m,n}|^{-(k+1)} \Bigg]\notag\\
&\leq  \exp\Bigg[\sum_{\substack{(m,n)\in\Z^2 \\ |\omega_{m,n}|>c_{\mrm{aux}}(z,R)}}\quad \sum_{k=2}^\infty {  \frac{5R |z|^k}{k}} \cdot 1.5^k |\omega_{m,n}|^{-(k+1)} \Bigg]\notag\\
&=\exp\Bigg[ \sum_{k=2}^\infty { \frac{5R  |z|^k}{k}} \cdot 1.5^k \cdot \sum_{\substack{(m,n)\in\Z^2 \\ |\omega_{m,n}|>c_{\mrm{aux}}(z,R)}} |\omega_{m,n}|^{-(k+1)} \Bigg]. \label{use-int-est-1}
\end{align}
Recall that $R\geq 3\gamma$, and so $c_{\mrm{aux}}(z,R)=\frac{7|z|}{4}\vee 5R\geq \frac{7|z|}{4} \vee \frac{20\gamma }{\sqrt{2}}$ implies
\begin{equation}\label{eq:int-test-weird-c1-bd}
c_{\mrm{aux}}(z,R)-\gamma/\sqrt{2}\geq  0.95\, c_{\mrm{aux}}(z,R) \geq 1.6\, |z|.
\end{equation}
Now, write $K_\gamma$ for the square of side length $\gamma$ centered at $(0,0)$ and use $R\geq 3\gamma$ and \eqref{eq:int-test-weird-c1-bd} to obtain the following bound
\begin{align}
\sum_{|\omega_{m,n}|>c_{\mrm{aux}}(z,R)}|\omega_{m,n}|^{-(k+1)}&\leq \sum_{|\omega_{m,n}|>c_{\mrm{aux}}(z,R)}\left(\frac{|\omega_{m,n}|+\frac{\gamma}{\sqrt{2}}}{|\omega_{m,n}|}\right)^{k+1} \int_{(m\gamma,n\gamma)+K_\gamma} |w|^{-(k+1)}\abs{\mrm{d}w}\\
&\leq \left(1+\frac{\gamma}{5R\sqrt{2}}\right)^{k+1}\int_{|w|\geq c_{\mrm{aux}}(z,R)-\frac{\gamma}{\sqrt{2}}}|w|^{-(k+1)}\abs{\mrm{d}w}\notag\\
&\leq \left(1+ \frac{1}{15\sqrt{2}}\right)^{k+1} \frac{2\pi}{k-1}\left(c_{\mrm{aux}}(z,R)-\gamma/\sqrt{2}\right)^{1-k}\notag\\
&\leq  1.04^{k+1} \cdot 2\pi \left(1.6 |z|\right)^{1-k}\notag\\
&< 11 \cdot 0.65^{k} |z|^{1-k},\label{int-est-1}
\end{align}
for all $z\in\C\setminus\{0\}$ and $k\geq 2$. We now use \eqref{int-est-1} in \eqref{use-int-est-1} to obtain
\begin{align}
|h_{\mrm{aux}}(z)|&\lesssim_{\,\gamma,R} \exp \Bigg[ \sum_{k=2}^\infty \frac{5 R|z|^k}{k} \cdot 1.5^k \cdot 11 \cdot 0.65^{k} |z|^{1-k}\Bigg]\notag \\
&\leq {\Bigg[  R\,|z|\cdot \frac{55}{2} \sum_{k=2}^\infty 0.975^k\Bigg]}\notag\\
&=e^{1100 R\, |z|},\quad\forall z\in\C.\label{eq:haux-final-bd-1}
\end{align}

We are now ready to bound $h$, and we do so by treating  the cases $|z|> 3R$ and $|z|\leq3R$ separately.

\noindent \textit{Case $|z|>3R\,$}:
We analyze $h(z)$ as a product
$
h(z)=\prod_{k=1}^{5}h_k(z)
$, where
\begin{align}
h_1(z)&=\exp\Bigg[ z\sum_{\substack{(m,n)\in\m{I}'\setminus\m{I}_s \\ |\lambda_{m,n}|\leq 2|z|}}\left(\frac{1}{\lambda_{m,n}}-\frac{1}{\omega_{m,n}}\right)\Bigg], & h_2(z)&=\frac{d(z,\Omega_\gamma)}{d(z,\Lambda)}\prod_{\substack{(m,n)\in\m{I}' \\ |\lambda_{m,n}-z|\leq 2R}} \frac{1-z/\lambda_{m,n}}{1-z/\omega_{m,n}}, \notag \\[5mm]
h_3(z)&=\prod_{\substack{(m,n)\in \m{I}' \\ |\lambda_{m,n}-z|>2R \\ |\omega_{m,n}|\leq 2R }}\frac{1-z/\lambda_{m,n}}{1-z/\omega_{m,n}},
&  h_4(z)&= \prod_{\substack{(m,n)\in\m{I}' \\ |\lambda_{m,n}-z|>2R \\|\omega_{m,n}|>2R\\|\lambda_{m,n}|\leq 2|z|}}\frac{1-z/\lambda_{m,n}}{1-z/\omega_{m,n}},\quad\text{and}\notag\\
h_5(z)&=\prod_{\substack{(m,n)\in\m{I}'\\ |\lambda_{m,n}|>2|z|}}\frac{(1-z/\lambda_{m,n})\exp(z/\lambda_{m,n})}{(1-z/\omega_{m,n})\exp(z/\omega_{m,n})}, & & \notag
\end{align}
and bound the functions $h_j$ in order.

\noindent \textit{Bounding $h_1$}:
Note that $|\lambda_{m,n}|>3R$ implies $|\omega_{m,n}|>2R$, and hence $|\lambda_{m,n}|\geq |\omega_{m,n}|-R>\frac{1}{2}|\omega_{m,n}|$. We thus have the following bound for $(m,n)\in \m{I}'\setminus\m{I}_s$:
\begin{equation*}
\begin{aligned}
\left|\frac{1}{\lambda_{m,n}}-\frac{1}{\omega_{m,n}}\right|&\leq \mathds{1}_{\{|\lambda_{m,n}|\leq 3R\}}(2s^{-1}+ \gamma^{-1})+ \mathds{1}_{\{|\lambda_{m,n}|> 3R\}}\frac{|\lambda_{m,n}-\omega_{m,n}|}{|\omega_{m,n}||\lambda_{m,n}|}\\
&\leq  \mathds{1}_{\{|\lambda_{m,n}|\leq 3R\}}(2s^{-1}+\gamma^{-1})+2R|\omega_{m,n}|^{-2},
\end{aligned}
\end{equation*}
and therefore, as $\#\{(m,n)\in \m{I}': |\lambda_{m,n}|\leq 3R\}\leq 9\theta R^2$,
\begin{align}
|h_1(z)|&\leq \exp \Bigg[ |z| \hspace*{-0.5em}\sum_{\substack{(m,n)\in\m{I}'\\|z_{m,n}|\leq 2|z| }}\left|\frac{1}{z_{m,n}}-\frac{1}{\omega_{m,n}}\right|\Bigg],\notag\\
&\leq  \exp \Bigg[9\theta R^2(2s^{-1}+\gamma^{-1})|z| +2R|z| \cdot \hspace*{-1em}\sum_{\substack{(m,n)\in\m{I}' \\|\omega_{m,n}|\leq 2|z|+R}}\abs{\omega_{m,n}}^{-2}\Bigg].\label{h1-first-est}
\end{align}
Recalling that $\log|z|>\log(3R)>\log (3)>0$, we can bound in a manner similar to \eqref{int-est-1} to obtain
\begin{align}
\sum_{\substack{(m,n)\in\m{I}' \\|\omega_{m,n}|\leq 2|z|+R}}\abs{\omega_{m,n}}^{-2}&\leq \left(\frac{\gamma+\frac{\gamma}{\sqrt{2}}}{\gamma}\right)^{2}\int_{\frac{\gamma}{\sqrt{2}}\leq|w|\leq 2|z|+R+\frac{\gamma}{\sqrt{2}}}|w|^{-2}\abs{\mrm{d}w}\notag\\
&\lesssim_{\, \gamma,R}\log |z| .
\end{align}
Using this in \eqref{h1-first-est} thus yields
\begin{equation}
|h_1(z)|\leq e^{ c_1 |z|\log{|z|}},\label{h1-final-est}
\end{equation}
for some $c_1=c_1(s,\theta,\gamma,R)>0$ and all $z\in\C$ such that $|z|> 3R$.

\noindent \textit{Bounding $h_2$}: We write $|h_2|=h_2^{\mrm{num}}/h_{2}^{\mrm{den}}$, where
\begin{equation*}
h_2^{\mrm{num}}(z)=\frac{1}{d(z,\Lambda)}\hspace*{-0.5em}\prod_{\substack{ (m,n)\in\m{I}'\\ |\lambda_{m,n}-z|\leq 2R }}\hspace*{-3mm} \frac{|\lambda_{m,n}-z|}{|\lambda_{m,n}|},\quad\text{and}\quad
h_2^{\mrm{den}}(z)=\frac{1}{d(z,\Omega_\gamma)} \prod_{\substack{ (m,n)\in\m{I}' \\ |\lambda_{m,n}-z|\leq 2R }}\hspace*{-3mm}\frac{|\omega_{m,n}-z|}{|\omega_{m,n}|}.
\end{equation*}
In order to bound $h_2^{\mrm{num}}$, we observe that one of the following two circumstances occurs:
\begin{itemize}
\item[--] The distance from $z$ to $\Lambda$ is minimized at $\lambda_{0,0}$, and so $d(z,\Lambda)=|z|>3R>1$.
\item[--] The distance from $z$ to $\Lambda$ is minimized at a point $\lambda_{m,n}$, where $(m,n)\in\m{I}'$, and so the term $d(z,\Lambda)$ cancels with one of the factors $|\lambda_{m,n}-z|$ in the product over $\{(m,n)\in\m{I}': |\lambda_{m,n}-z|\leq 2R\}$.
\end{itemize}
These facts lead to the following bound
\begin{equation}\label{eq:h2nom-bd}
h_2^{\mrm{num}}(z)\leq \prod_{\substack{ (m,n)\in\m{I}'\\ |\lambda_{m,n}-z|\leq 2R}}\frac{|\lambda_{m,n}-z|\vee 1}{|z|-|\lambda_{m,n}-z|}\leq 2^{16\theta R^2}.
\end{equation}

For $h_2^{\mrm{den}}$, we similarly observe that either $(\mrm{Re}\, z,\mrm{Im}\, z)\in[-\frac{\gamma}{2},\frac{\gamma}{2}]^2$, or $d(z,\Omega_\gamma)$ cancels with a factor $|\omega_{m,n}-z|$ in the product over $\{(m,n)\in\m{I}': |\lambda_{m,n}-z|\leq 2R\}$. In either case the numerators of the terms remaining in the product satisfy $|\omega_{m,n}-z|\geq\frac{\gamma}{2}$, and we thus have 
\begin{equation}\label{eq:h2den-bd}
h_2^{\mrm{den}}(z)\geq \left(\!\frac{\gamma}{\sqrt{2}}\wedge{1}\!\right)\hspace*{-3mm} \prod_{\substack{ (m,n)\in\m{I}'\\ |\lambda_{m,n}-z|\leq 2R }}\frac{(\gamma/2)\wedge 1}{|\omega_{m,n}-\lambda_{m,n}|+|\lambda_{m,n}-z|+|z|}\geq  \left(\!\frac{\gamma}{\sqrt{2}}\wedge 1 \!\right) \left(\frac{(\gamma /2)\wedge 1}{3R+|z|}\right)^{16\theta R^2}.
\end{equation}
The inequalities \eqref{eq:h2nom-bd} and \eqref{eq:h2den-bd} together yield
\begin{equation}
|h_2(z)|\lesssim_{\,\theta,\gamma,R} |z|^{16\theta R^2}\lesssim_{\,\theta,R}e^{|z|}.\label{eq:h2-final-est}
\end{equation}

\noindent \textit{Bounding $h_3$}:  Recall that $|\lambda_{m,n}|> \frac{s}{2}$ for all $(m,n)\in\m{I}'\setminus\m{I}_s $, and there is at most one $(m',n')\in \m{I}_s\setminus\{(0,0)\}$, and for this $(m',n')$ we have $|\lambda_{m',n'}|\geq \rho$, by assumption (i). We thus get
\begin{align}
|h_3(z)|&=\prod_{\substack{(m,n)\in\m{I}'\\ |\lambda_{m,n}-z|>2R\\ |\omega_{m,n}|\leq 2R}}\frac{|1-z/\lambda_{m,n}|}{|1-z/\omega_{m,n}|}\leq\prod_{\substack{(m,n)\in\m{I}'\\ |\lambda_{m,n}-z|>2R\\ |\omega_{m,n}|\leq 2R}}\frac{|\omega_{m,n}|}{|\lambda_{m,n}|}\frac{|\lambda_{m,n}-z|}{|\lambda_{m,n}-z|-R}\notag\\
&\leq \rho^{-1}\prod_{|\omega_{m,n}|\leq 2R} \frac {2R}{(s/2)\wedge 1}\,\frac{2R}{2R-R}\leq \rho^{-1} \left(\frac{4R}{(s/2)\wedge 1}\right)^{16\gamma^{-2}R^2}\notag\\
&\lesssim_{\,s,\gamma,R } \rho^{-1}.\label{eq:h3-only-est}
\end{align}

\noindent \textit{Bounding $h_4$}:  We write $|h_4|=h_4^{\mrm{num}}/h_4^{\mrm{den}}$, where
\begin{equation*}
h_4^{\mrm{num}}(z)=\prod_{\substack{(m,n)\in\m{I}' \\ |\lambda_{m,n}-z|>2R \\|\omega_{m,n}|>2R \\|\lambda_{m,n}|\leq 2|z|}}\left|1-\frac{\omega_{m,n}-\lambda_{m,n}}{\omega_{m,n}-z}\right|,\quad\text{and}\quad h_4^{\mrm{den}}(z)=\prod_{\substack{(m,n)\in\m{I}' \\ |\lambda_{m,n}-z|>2R \\|\omega_{m,n}|>2R \\|\lambda_{m,n}|\leq 2|z|}} \left|1-\frac{\omega_{m,n}-\lambda_{m,n}}{\omega_{m,n}}\right|.
\end{equation*}
Now, fix a $z\in\C$ with $|z|>3R$ and write $z=z'+\omega_{k,\ell}$, where $(k,\ell)\in\Z^2$ and $(\mrm{Re}\, z',\mrm{Im}\, z')\in[-\frac{\gamma}{2},\frac{\gamma}{2}]^2$. Then, as
\begin{equation*}
 \{(m,n)\in\m{I}': |\lambda_{m,n}-z|\!>\!2R, |\lambda_{m,n}|\leq 2|z| \}\subset\{(m,n)\in\Z^2: |\omega_{m,n}-z|\!>\!R, |\omega_{m,n}-z|\leq 3|z|+R \},
\end{equation*}
and $\Omega_{\gamma}-\omega_{k,\ell}=\Omega_{\gamma}$, we have the following bound
\begin{align}
|h_4^{\mrm{num}}(z)|&\leq \prod_{\substack{(m,n)\in\m{I}' \\ |\lambda_{m,n}-z|>2R \\|\omega_{m,n}|>2R \\|\lambda_{m,n}|\leq 2|z|}} \left(1+\frac{|\omega_{m,n}-\lambda_{m,n}|}{|\omega_{m,n}-z|}\right)\leq  \prod_{\substack{(m,n)\in\Z^2 \\ R<|\omega_{m,n}-z|\leq 3|z|+R}} \left(1+\frac{R}{|\omega_{m,n}-z|}\right) \notag \\
&= \prod_{\substack{(m,n)\in\Z^2 \\ R<|\omega_{m,n}-\omega_{k,\ell} -z'|\leq 3|z|+R}} \left(1+\frac{R}{|\omega_{m,n}-\omega_{k,\ell }- z'|}\right) \notag\\ 
& \leq \exp \Bigg[\sum_{\substack{(m,n)\in\Z^2 \\ R<|\omega_{m,n} -z'|\leq 3|z|+R}} \frac{R}{|\omega_{m,n}- z'|} \Bigg],\label{h4-est-1st}
\end{align}
where in the last inequality we used $\log(1+x)\leq x$ for $x\geq 0$. Now, $|\omega_{m,n}-z'|> R$ and $R\geq 3\gamma$ imply $|\omega_{m,n}|>R-\frac{\gamma}{\sqrt{2}}>\sqrt{2}\gamma$ and so $|\omega_{m,n}-z'| \geq |\omega_{m,n}|-\frac{\gamma}{\sqrt{2}}>\frac{1}{2}|\omega_{m,n}|$. We therefore have
\begin{align}
\sum_{R<|\omega_{m,n}-z'|\leq 3|z|+R}\frac{1}{|\omega_{m,n}-z'|}&\leq \sum_{R-\frac{\gamma}{\sqrt{2}}<|\omega_{m,n}|\leq 3|z|+R+\frac{\gamma}{\sqrt{2}}}\frac{2}{|\omega_{m,n}|}\notag\\
&\leq \sum_{R-\frac{\gamma}{\sqrt{2}}<|\omega_{m,n}|\leq 3|z|+R+\frac{\gamma}{\sqrt{2}}} 2\frac{|\omega_{m,n}|+\frac{\gamma}{\sqrt{2}}}{|\omega_{m,n}|}\int_{(m\gamma,n\gamma)+K_\gamma}|w|^{-1}|\mrm{d}w|\notag\\
&\lesssim_{\,\gamma,R} \frac{R}{R-\frac{\sqrt{2}}{2}}\int_{R-\sqrt{2}\gamma<|\omega|\leq 3|z|+R+\sqrt{2}\gamma}|w|^{-1}|\mrm{d}w|\notag\\
&\lesssim_{\,\gamma,R} |z|. \label{h4-est-2nd}
\end{align}
As $z$ was arbitrary, \eqref{h4-est-2nd} holds for all $z\in\C$ with $|z|>3R$.
Using \eqref{h4-est-2nd} in \eqref{h4-est-1st} thus yields
\begin{equation}
|h_4^{\mrm{num}}(z)|\leq e^{c_4^{\mrm{num}} |z|}\label{h4-est-4th}
\end{equation}
for some $c_4^{\mrm{num}}=c_4^{\mrm{num}}(\gamma,R)>0$ and all $z\in\C$ with $|z|>3R$.

The quantity $h_4^{\mrm{den}}$ is bounded from below in a similar fashion:
\begin{align}
|h_4^{\mrm{den}}(z)|&=\prod_{\substack{(m,n)\in\m{I}' \\ |\lambda_{m,n}-z|>2R \\|\omega_{m,n}|>2R \\|\lambda_{m,n}|\leq 2|z|}}\left|1-\frac{\omega_{m,n}-\lambda_{m,n}}{\omega_{m,n}}\right|\geq \prod_{\substack{(m,n)\in\Z^2 \\ 2R<|\omega_{m,n}|\leq 2|z|+R}}\left(1-\frac{R}{|\omega_{m,n}|}\right),\notag\\
&\geq \exp\Bigg[ \sum_{2R<|\omega_{m,n}|\leq 2|z|+R}-\frac{2R}{|\omega_{m,n}|} \Bigg],\label{h4-est-3rd}
\end{align}
where in the last inequality we used $\log(1+x)\geq 2x$ for $x\in\left[-\frac{1}{2},0\right]$. Another integral bound yields
\begin{equation*}
\sum_{2R<|\omega_{m,n}|\leq 2|z|+R}\frac{1}{|\omega_{m,n}|}\lesssim_{\,\gamma,R} |z|,
\end{equation*}
which together with \eqref{h4-est-3rd} gives
\begin{equation}
|h_4^{\mrm{den}}(z)|\geq e^{-c_4^{\mrm{den}}|z|}, 
\label{h4-est-5th}
\end{equation}
for some $c_4^{\mrm{den}}=c_4^{\mrm{den}}(\gamma,R)>0$ and all $z\in\C$ with $|z|>3R$. 
Combining \eqref{h4-est-4th} and \eqref{h4-est-5th} thus yields
\begin{equation}
|h_4(z)|\leq e^{(c_{4}^{\mrm{num}}+c_4^{\mrm{den}})|z|},\label{h4-est-6th}
\end{equation}
for all $z\in\C$ with $|z|>3R$. 

\noindent \textit{Bounding $h_5$}:  Note that $|\lambda_{m,n}|>2|z|$ implies $|\lambda_{m,n}|>2|z|> 6R$, and so $\{(m,n)\in\m{I}': |\lambda_{m,n}|>2|z|\}=A(z)$. We thus have $h_5=h_{\mrm{aux}}$, which satisfies \eqref{eq:haux-final-bd-1}. 

\noindent \textit{Bounding $h$}: We combine \eqref{h1-final-est}, \eqref{eq:h2-final-est}, \eqref{eq:h3-only-est},  \eqref{h4-est-6th}, and \eqref{eq:haux-final-bd-1} to obtain
\begin{equation}\label{eq:h-final-est-case1}
|h(z)|\lesssim_{\,s,\theta,\gamma,R} \rho^{-1}  e^{(1+c_{4}^{\mrm{num}}+ c_4^{\mrm{den}}+1100R)|z| +  c_1|z|\log|z|}\leq \rho^{-1}  e^{d |z|\log|z|},
\end{equation}
for some $d=d(s,\theta,\gamma, R)>0$ and all $z\in\C$ with $|z|>3R$. This completes the derivation of the desired upper bound on $h$ in the case $|z|> 3R$.

\noindent \textit{Case $|z|\leq 3R$}: We write $h(z)=h_6(z)h_7(z)$, where
\begin{equation*}
h_6(z)=\frac{d(z,\Omega_\gamma)}{d(z,\Lambda)} \prod_{\substack{(m,n)\in\m{I}'\\|\lambda _{m,n}|\leq 6R}}\frac{1-z/\lambda_{m,n}}{1-z/\omega_{m,n}},\quad\text{and}\quad h_7(z)=\hspace*{-3mm}\prod_{\substack{(m,n)\in\m{I}'\\|\lambda_{m,n}|>6R}}\frac{(1-z/\lambda_{m,n})\exp(z/\lambda_{m,n})}{(1-z/\omega_{m,n})\exp(z/\omega_{m,n})}.
\end{equation*}
Note that $|\lambda_{m,n}|>6R$ and $|z|\leq 3R$ together imply $|\lambda_{m,n}|>2|z|$, and so $h_7=h_{\mrm{aux}}$. Hence it only remains to bound $h_6$. To this end, write $|h_6|=h_6^{\mrm{num}}/h_6^{\mrm{den}}$, where
\begin{equation*}
h_6^{\mrm{num}}(z)=\frac{z}{d(z,\Lambda)}\prod_{\substack{(m,n)\in\m{I}'\\|\lambda_{m,n}|\leq 6R}}\frac{\lambda_{m,n}-z}{\lambda_{m,n}}\quad\text{and}\quad h_6^{\mrm{den}}(z)=\frac{z}{d(z,\Omega_\gamma)}\prod_{\substack{(m,n)\in\m{I}'\\|\lambda_{m,n}|\leq 6R}}\frac{\omega_{m,n}-z}{\omega_{m,n}}.
\end{equation*}
Now, the term $d(z,\Lambda)$ cancels with either $z$ or one of the factors $\lambda_{m,n}-z$, and similarly, $d(z,\Omega_\gamma)$ cancels with either $z$ or one of the factors $\omega_{m,n}-z$. In either case the numerators of the terms remaining in the product satisfy $|\omega_{m,n}-z|\geq\frac{\gamma}{2}$. We again recall that $|\lambda_{m,n}|> \frac{s}{2}$ for all $(m,n)\in\m{I}'\setminus\m{I}_s $, and there is at most one $(m',n')\in \m{I}_s\setminus\{(0,0)\}$, and for this $(m',n')$ we have $|\lambda_{m',n'}|\geq \rho$.
These observations together 
 yield the following bounds:
\begin{align}
|h_{6}^{\mrm{num}}(z)|&\leq 3R \rho^{-1}\prod_{\substack{(m,n)\in\m{I}'\\|\lambda_{m,n}|\leq 6R}} \frac {(|\lambda_{m,n}|+ |z|)\wedge 1}{(s/2)\vee 1 } \leq 3R  \rho^{-1}\left( \frac {(9R) \wedge 1}{(s/2)\vee 1 } \right)^{144\theta R^2}\notag\\
|h_6^{\mrm{den}}(z)|&\geq\prod_{\substack{(m,n)\in\m{I}'\\|\lambda_{m,n}|\leq 6R}} \frac{(\gamma/2)\wedge 1}{|\omega_{m,n}-\lambda_{m,n}|+|\lambda_{m,n}|}\geq \left(\frac{(\gamma/2)\wedge 1}{7R}\right)^{144\theta R^2}, \quad \text{for } z\in\C \text{ s.t. }|z|\leq 3R. \notag
\end{align}
Therefore $|h_6(z)|\lesssim_{\,s,\theta,\gamma,R} \rho^{-1}$, for $z$ with $|z|\leq 3R$, which together with \eqref{eq:haux-final-bd-1} gives
\begin{equation}\label{eq:h-final-est-case2}
|h(z)| \lesssim_{\,s,\theta,\gamma,R} \rho^{-1} e^{1100 R}|z|,\quad \text{for } z\in\C \text{ s.t. }|z|\leq 3R. 
\end{equation}
The inequalities \eqref{eq:h-final-est-case1} and \eqref{eq:h-final-est-case2} can now be combined to yield the bound \eqref{eq:h-goal-bound}, concluding the proof.
 
\end{proof}

\begin{proof}[Proof of Lemma \ref{STFT=Barg-lem}]
Consider first the case when $y \in \m{S}(\R)$ is a Schwartz function. We then have $y\in L^2(\R)$, and thus
\begin{align}
\innerProd{y}{\pi(\lambda)\varphi}_{\moduPair{q}{p}}&= \innerProd{\m{V}_{\varphi}y}{\m{V}_\varphi \pi(\lambda)\varphi}_{L^q(\R^2)\times L^p(\R^2)} \notag\\
&=\iint_{\R^2} (\m{V}_{\varphi}y)(s, \xi) \;\overline{(\m{V}_{\varphi}\pi(\lambda)\varphi)(s, \xi)}\; \mrm{d}s\mrm{d}\xi \notag \\
&=\int_\R y(t)\overline{\left(\pi(\lambda)\varphi\right) (t)}\;\mrm{d}t\label{STFTorthrel}\\
&=(\m{V}_\varphi y)(\lambda)\notag\\
&=e^{-\pi i \tau\nu}e^{-\pi |\lambda|^2/2}(\bargmann y)(\bar{\lambda}),\label{GroSTFT=Barg}
\end{align}
where \eqref{STFTorthrel} follows from \cite[Thm. 3.2.1]{Groechenig2000}, and \eqref{GroSTFT=Barg} is \cite[Prop. 3.4.1]{Groechenig2000}.

Now take an arbitrary $y\in\MSp{q}{\R}$. As $\m{S}(\R)$ is dense in $\MSp{q}{\R}$ for $q\in[1,\infty)$ (see \cite[Prop. 11.3.4]{Groechenig2000}), we can take a sequence $\{y_n\}_{n=1}^{\infty}\subset\m{S}(\R)$ such that $y_n\to y$ in $\MSp{q}{\R}$. The calculation above thus shows that
\begin{equation}\label{STFT=Barg argument 1}
\langle y_n,\pi(\lambda)\varphi \rangle_{M^q(\R)\times M^p(\R)}=e^{-\pi i \tau\nu}e^{-\pi |\lambda|^2/2}(\bargmann y_n)(\bar{\lambda}),\quad\forall n\in\N.
\end{equation}
Furthermore, as the dual pairing is continuous, we have
\begin{equation*}
\langle y_n,\pi(\lambda)\varphi\rangle_{M^q(\R)\times M^p(\R)} \to \langle y,\pi(\lambda)\varphi\rangle_{M^q(\R)\times M^p(\R)}\quad\text{as }n\to\infty.
\end{equation*}
On the other hand, by the isometry property \eqref{eq:Btrans-isom} we also have $\|\bargmann y_n-\bargmann y\|_{\m{F}^q(\C)}=\|y_n-y\|_{M^q(\R)}\to 0$ as $n\to \infty$. Thus, as the evaluation functional $F\mapsto F(\overline{\lambda})$ is continuous on $\m{F}^q(\C)$ (see \cite[Lem. 2.32]{Kehe2012}), we obtain $(\bargmann  f_n)(\overline{\lambda})\to (\bargmann f)(\overline{\lambda})$, which together with \eqref{STFT=Barg argument 1} and \eqref{GroSTFT=Barg} establishes the claim of the proposition.
\end{proof}

\begin{proof}[Proof of Lemma \ref{ShortTheoryLemma}]
(i) Suppose that $\m{A}$ is bounded below. Then the operator $\tilde{\m{A}}:X\to\mrm{Im}(\m{A})$ given by $\tilde{\m{A}}(x)=\m{A}(x)$, for $x\in X$, is a continuous map between Banach spaces, and has a continuous inverse. In other words, $\tilde{\m{A}}$ is an isomorphism between Banach spaces. Thus $\tilde{\m{A}}^*:(\mrm{Im}(\m{A}))^*\to X^*$ is also an isomorphism between Banach spaces, and so, by the inverse mapping theorem \cite[Cor. 2.12]{Rudin1991}, so is $(\tilde{\m{A}}^*)^{-1}:X^*\to(\mrm{Im} (\m{A}))^*$.  Consider now an arbitrary $f\in X^*$, and set $h=(\tilde{\m{A}}^*)^{-1}f$. As $h$ is a continuous linear functional on $\mrm{Im}(\m{A})\subset Y$, it follows by the Hahn-Banach theorem \cite[Thm. 3.6]{Rudin1991} that $h$ can be extended to a continuous linear functional $h_Y$ defined on $Y$. Now, since $h_Y\!\mid_{\mrm{Im}(\m{A})}=h$, we have
\begin{equation*}
\begin{aligned}
\langle \m{A}^* h_Y, x\rangle&=\langle  h_Y , \underbrace{\m{A}x}_{\mathclap{\in\mrm{Im}(\m{A})}}\rangle=\langle h,\m{A}x\rangle=\langle h,\tilde{\m{A}}x\rangle=\\
&=\langle \tilde{\m{A}}^*h, x\rangle=\langle f,x\rangle\quad\text{for }x\in X,
\end{aligned}
\end{equation*}
and thus, as $x\in X$ was arbitrary, we deduce that $\m{A}^* h_Y=f$. Finally, since $f$ was arbitrary, we have that $\m{A}^*$ is surjective.

\noindent(ii) Let $f$ be an arbitrary element of $X^*$ with $\|f\|=1$, and let $g\in Y^*$ be such that $\m{A}^* g=f$ and $a\|g\|\leq 1$. Note that then $g\neq 0$, and so $g/\|g\|$ is a well-defined element of $Y^*$ of unit norm. Therefore,
\begin{equation*}
\|\m{A}x\|\geq \left|\left< \m{A}x,\frac{g}{\|g\|}\right>\right|=\|g\|^{-1}\left|\langle x,\m{A}^* g\rangle\right|\geq a\left|\langle x,f\rangle\right|,
\end{equation*}
for all $x\in X$. Taking the supremum of the right-hand side over $f\in X^*$ and using the fact that  $\sup_{f\in X^*,\|f\|=1}|\langle x, f\rangle|=\|x\|$ yields $\|\m{A}x\|\geq a\|x\|$, as desired.
\end{proof}

\begin{proof}[Proof of Lemma \ref{IrregConvLem}]
For $m,n\in\Z$ write 
\begin{equation*}
\textstyle K_{m,n}=\left[\frac{s}{\sqrt{2}} \left(m_{\lambda}-\frac{1}{2}\right),\frac{s}{\sqrt{2}} \left(m_{\lambda}+\frac{1}{2}\right)\right)\times\left[\frac{s}{\sqrt{2}} \left(n_{\lambda}-\frac{1}{2}\right),\frac{s}{\sqrt{2}} \left(n_{\lambda}+\frac{1}{2}\right)\right),
\end{equation*}
and, for $\lambda\in\Lambda$, let $(m_\lambda,n_\lambda)$ be the (unique) element of $\Z^2$ such that $\lambda\in K_{m,n}$. Note that, as $\sep(\Lambda)=s$, every $K_{m,n}$ contains at most one element of $\Lambda$. Next, define the functions
\begin{equation*}
A(z)=\frac{2}{s}  \sum_{\lambda \in\Lambda }a_{\lambda} \mathds{1}_{K_{m_\lambda,n_\lambda}}(z)\quad \text{and}\quad \tilde{f}(z)=\max_{|w|\leq s}|f(z+w)|.
\end{equation*}
We then have
\begin{equation*}
\sum_{\lambda\in\Lambda }|\theta_\lambda||f(z-\lambda)|\leq \sum_{\lambda\in\Lambda} |\theta_{\lambda}| \;\tilde{f}\left(z-{s(m_\lambda+in_\lambda)}/{\sqrt{2}} \right)\leq \int_{\C}A(w)\tilde{f}(z-w)|\mrm{d}w| = (A\ast \tilde{f})(z),
\end{equation*}
for all $z\in\C$.
Therefore, using \cite[Prop. 11.1.3]{Groechenig2000}, we obtain
\begin{equation*}
\|A\ast \tilde{f}\|_{L^q(\C)}\leq \|\tilde{f}\|_{L^1(\C)} \|A\|_{L^q(\C)}\lesssim s^{-2/p}\|f\|_{W(L^\infty,L^1)} \|\theta\|_{\ell^p(\Lambda)},
\end{equation*}
where the last equality follows by computing the norm of $A$ explicitly. 
\end{proof}

\begin{proof}[Proof of Lemma \ref{RecursiveSquareLemma}]
Note that it suffices to prove the claim for $j=n-1$, as the general statement then follows by induction. To this end, divide $K_n$ into four disjoint squares of side length $\sqrt{2}\left(2^{n-1}+\frac{1}{2}\right)$. By the pigeonhole principle, one of these squares must contain at least $2^{2(n-1)}+1$ points of $K_n\cap Y$. Denote this square by $K'$, and Let $K_{n-1}$ be the square which contains $K'$ and satisfies property (i) in the statement of the Lemma. Then $\# (K_{n-1}\cap Y)\geq \# (K'\cap Y)\geq 2^{2(n-1)}+1$, and so $K_{n-1}$ satisfies (ii), as desired.
\end{proof}

\end{document}